\definecolor{linkcolour}{rgb}{0,0.2,0.6}
\definecolor{bobby}{gray}{0}
\definecolor{lightblue}{rgb}{0.88,1,1}
\newtheorem{theorem}{Theorem}[section]
\newtheorem{proposition}[theorem]{Proposition}
\theoremstyle{definition}
\newtheorem{definition}[theorem]{Definition}
\newtheorem{convention}[theorem]{Convention}
\newtheorem{example}[theorem]{Example}
\newtheorem{generation}[theorem]{Generation}
\newtheorem{publication}[theorem]{Publication}
\newtheorem{encryption}[theorem]{Encryption}
\newtheorem{decryption}[theorem]{Decryption}
\newtheorem{application}[theorem]{Application}
\theoremstyle{remark}
\newtheorem{remark}[theorem]{Remark}
\numberwithin{equation}{section}
\newcommand{\llbracket}[0]{\textrm{\normalfont\textlbrackdbl}}
\newcommand{\rrbracket}[0]{\textrm{\normalfont\textrbrackdbl}}
\newcommand{\intbrackets}[1]{\llbracket#1\rrbracket}
\newcommand{\tuplebrk}[1]{(\!(#1)\!)}
\newcommand{\mapsfrom}{\mathrel{\reflectbox{\ensuremath{\mapsto}}}}
\begin{document}

\large

\title{Constructing a fully homomorphic encryption scheme \\with the Yoneda Lemma}

\author{R\'{e}my Tuy\'{e}ras}
\email{rtuyeras@gmail.com}

\begin{abstract}
This paper redefines the foundations of asymmetric cryptography's homomorphic cryptosystems through the application of the Yoneda Lemma. It demonstrates that widely adopted systems, including ElGamal, RSA, Benaloh, Regev's LWE, and NTRUEncrypt, are directly derived from the principles of the Yoneda Lemma. This synthesis leads to the creation of a holistic homomorphic encryption framework, the Yoneda Encryption Scheme. Within this framework, encryption is modeled using the bijective maps of the Yoneda Lemma Isomorphism, with decryption following naturally from the properties of these maps. This unification suggests a conjecture for a unified model theory framework, offering a foundation for reasoning about both homomorphic and fully homomorphic encryption (FHE) schemes. As a practical demonstration, the paper introduces the FHE scheme ACES, which supports arbitrary finite sequences of encrypted multiplications and additions without relying on conventional bootstrapping techniques for ciphertext refreshment. This highlights the practical implications of the theoretical advancements and proposes a new approach for leveraging model theory and forcing techniques in cryptography, particularly in the design of FHE schemes.
\end{abstract}

\maketitle



\section{Introduction}

\subsection{Short presentation}
This work proposes the unification of various asymmetric homomorphic encryption schemes under a singular framework, establishing what we term the Yoneda Encryption Scheme. Beyond its role in consolidating disparate encryption methodologies, this scheme offers a versatile framework for comparing existing homomorphic encryption schemes and devising novel ones. This unified approach presents a promising avenue for constructing a comprehensive theory for fully homomorphic cryptography, addressing a critical gap in the field. Historically, the efficiency of fully homomorphic encryption schemes has been hindered by the complexity of their foundational theories, primarily relying on commutative group theory and attempting to reconcile properties from commutative ring theory (as noted in \cite{survey1,TFHE,survey2}).

To tackle the challenges associated with axiomatization, our theoretical framework draws inspiration from a category-theoretic variant of model theory known as limit sketch theory. We illustrate how assemblies of models within a given limit sketch can lead to the creation of novel encryption schemes. Importantly, these schemes offer cryptographers a guiding principle for designing cryptosystems that are not strictly reliant on commutative groups. By harnessing the expressive power of limit sketch theory, particularly in its domain of forcing techniques often articulated through reflective subcategories, we clarify how our overarching framework can incorporate these techniques to enforce specific properties within the underlying theory of a cryptosystem.

As a practical application, we introduce an unbounded fully homomorphic encryption scheme that operates within modules over rings of polynomials. This scheme represents an enhancement over prior FHE implementations as it exercises full control over noise accumulation during iterated operations. Throughout this paper, our goal is to establish the groundwork for integrating more sophisticated forcing techniques into cryptography, thereby unlocking novel avenues for designing efficient fully homomorphic encryption schemes.

\subsection{Background}
Homomorphic encryption (HE) is an encryption technique that converts a message $m$ into a ciphertext $c$ while maintaining specific arithmetic operations, notably multiplication and addition. This feature is particularly crucial for securing computations on untrusted servers. Fully homomorphic encryption (FHE) further extends this capability to encompass both addition and multiplication operations simultaneously. This advancement is significant as it enables the execution of logical operations such as $\mathsf{AND}$ and $\mathsf{OR}$ on encrypted data. Consequently, algorithms can be performed on untrusted remote servers while the data remains securely encrypted. The applications of such encryption methods are diverse, often finding utility in machine learning computations and collaborative work conducted on cloud platforms (refer to \cite{survey2,Chillotti2020NewCF} for a comprehensive overview).
\bigskip

At a more formal level, HE explores encryption schemes that exhibit \emph{malleability}, indicating the persistence of a given operation $\square$ throughout the encryption-decryption process $(\mathsf{Enc},\mathsf{Dec})$. This characteristic can be mathematically expressed for a generic operation $\square$ as follows (refer to \cite{survey1,survey2}):
\[
\mathsf{Dec}(\mathsf{Enc}(m_1) \square \mathsf{Enc}(m_2)) = m_1 \square m_2
\]

To provide historical context for homomorphic encryption, extensive surveys, as cited in \cite{survey2,survey1}, meticulously document its evolution over time. However, for the specific focus of this paper, our discussion will center on the fundamental background necessary to grasp the key concepts outlined in the current work.

HE schemes fall into three categories based on their proximity to the definition of a fully homomorphic encryption scheme: partially homomorphic (PHE), somewhat homomorphic (SWHE), or fully homomorphic (FHE) schemes. The pursuit of FHE schemes, considered the pinnacle of cryptographic achievement, reached its culmination with Gentry's groundbreaking work in 2009 \cite{Gen09}. Following this pivotal breakthrough, the field of FHE schemes underwent four distinct \emph{generations} of refinement \cite{DGHV10,BV11,GSW13,CKKS16}. This transformative evolution was essentially propelled by the introduction of a technique known as \emph{bootstrapping} by Gentry.
\bigskip

To define the concept of bootstrapping, let us consider an encryption scheme comprising an encryption algorithm $\mathsf{Enc}:K_0 \times M \to C$ and a decryption algorithm $\mathsf{Dec}:K_1 \times C \to M$. Recall that the encryption-decryption protocol yields the relations $\mathsf{Enc}(k^0,m) = c$ and $\mathsf{Dec}(k^1,c) = m$ for a public key $k^0$ and a private key $k^1$. In the context of Gentry's framework, we also desire an encryption algorithm $\mathsf{Enc}$ to satisfy the following equation for a given set $S$ of pairs $(\square_C, \square_E)$ of algebraic operations:
\[
\mathsf{Enc}(k^0,m_1) \square_C \mathsf{Enc}(k^0,m_2) = \mathsf{Enc}(k^0,m_1 \square_E m_2)
\]
However, in practice, for the encryption $\mathsf{Enc}$ to be secure, it is imperative to introduce a controlled amount of noise $r$ into the mappings of the encryption $\mathsf{Enc}$. As a result, utilizing operations on encrypted data usually results in the accumulation of more noise within the encryption process such that a long algebraic operation
\[
F(\mathsf{Enc}(k^0,m_1),\dots,\mathsf{Enc}(k^0,m_n))
\]
may not be of the form $\mathsf{Enc}(k^0,F(m_1,\dots,m_n))$. To achieve full homomorphic functionality, the accumulated noise must be eradicated through the decryption of data. However, decrypting this data directly on the remote server poses a significant security risk, exposing sensitive information to untrusted third parties. Gentry's ingenious insight was to recognize that the decryption algorithm $\mathsf{Dec}$ could be expressed as an arithmetic operation $\square_{\mathsf{Dec}}$, formed by combinations of operations from the set $S$.

This led to the concept that encrypted data could be shielded by an additional layer of encryption, provided by a key pair $(q^0, q^1)$. This pair would be utilized to encrypt the private key $k^1$ for secure transmission to the server. Consequently, a sequence of operations $F$ operating on encrypted data could be ``refreshed'' through an initial encryption using the key $q^0$. In practical terms, the server would only need to compute the following operation on encrypted information to obtain a freshly encrypted result:
\begin{align*}
&\mathsf{Enc}(q^0,k^1) \square_{\mathsf{Dec}} \mathsf{Enc}(q^0,F(\mathsf{Enc}(k^0,m_1),\dots,\mathsf{Enc}(k^0,m_n))) \\
&= \mathsf{Enc}(q^0, k^1 \square_{\mathsf{Dec}} F(\mathsf{Enc}(k^0,m_1),\dots,\mathsf{Enc}(k^0,m_n)))\\
&= \mathsf{Enc}(q^0, k^1 \square_{\mathsf{Dec}} \mathsf{Enc}(k^0,F(m_1,\dots,m_n)))\\
&= \mathsf{Enc}(q^0, F(m_1,\dots,m_n))
\end{align*}
Most existing FHE schemes incorporate bootstrapping techniques to ensure correctness over arbitrary-depth computations, often at significant computational cost. This paper introduces a novel framework that achieves FHE without relying on Gentry’s bootstrapping technique, as the refresh operation does not depend on the public key $q^0$. Furthermore, the refresh operation relies on a transformation of the secret key $k^1$, whose public characterization exhibits significantly greater randomness than $k^1$ itself. Our refresh operation also diverges from techniques found in more recent literature, such as those employed in integer-based FHE schemes with digit decomposition and deep circuits \cite{BGV12}, or in torus-based FHE schemes using lookup tables \cite{TFHE}. For example, the transformation of the secret key $k^1$ eliminates the need for deep circuits to process the refresh operation, and we avoid the need for lookup tables by introducing an affine decomposition technique on ciphertexts. While this decomposition method may initially appear similar in spirit to Gentry’s squashing technique \cite{Gen09}, it differs notably in that it avoids decomposing secret-key information, thereby ensuring both the security and efficiency of the technique.

Finally, the framework that supports our scheme is derived from the Yoneda Lemma \cite{MacLane}, offering a new perspective on the structural foundations of FHE and providing insight into the formal principles underlying bootstrapping techniques. In particular, the Yoneda Lemma establishes a profound connection between theory and models, paralleling Gentry's proposition that the operation $\mathsf{Dec}$ should be \emph{realizable} within the model used for encrypting messages. While we apply the Yoneda Lemma at a practical level, this paper posits that many concepts and techniques commonly used in homomorphic encryption are grounded in model theory, category theory, and, ultimately, forcing techniques.

\subsection{Motivation and roadmap}
In essence, the Yonedian formalism serves as a fundamental tool for distinguishing intrinsic elements of a cryptosystem from those introduced through external concepts. Concepts forced into the cryptosystem, which do not align with the underlying theory, often hinder the emergence of homomorphic properties. Once established, the Yonedian formalism becomes instrumental in deriving desired equations and relations that can be realized within the underlying theory. This mirrors Gentry's approach, where the decryption algorithm is implemented within the algebraic structure associated with the space of ciphertexts. Thus, by systematically deconstructing the process of imposing specific properties on a theory, we enable the deduction of appropriate homomorphic properties.

In the context of this paper, the Yonedian formalism significantly enhances our comprehension of homomorphic properties within polynomial rings. Through the lens of a category-theoretic formalism, we have identified a suitable theory for post-quantum homomorphic encryption schemes. Notably, this formalism emphasizes the importance of viewing rings of polynomials as $\mathbb{Z}[X]$-modules, elucidating the relationships between the module and its corresponding ring $\mathbb{Z}[X]$, as well as the action of the ``free'' commutative group on both the ring and the module. This clarification has led us to identify a binary operation (Definition \ref{def:boxtimes:channel}) based on $3$-tensors in $\mathbb{Z}$, from which a fully homomorphic encryption scheme naturally follows (section \ref{sec:FHE:from-Yoneda}).

The paper is structured into three main sections. In the initial section (\ref{sec:preparation}), our aim is to provide or review the necessary category-theoretical background essential for comprehending the paper. While we shall not delve into all formal calculations, our objective is to establish a comprehensive bridge between readers with diverse backgrounds, whether in category theory or cryptography. Within this first section, we revisit fundamental concepts such as limits, cones, universal cones, limit sketches, and various versions of the Yoneda Lemma.

The first version, articulated in Theorem \ref{theo:YonedaLemma}, reinstates the conventional statement applicable to categories of functors. The second version, presented in Theorem \ref{theo:YonedaLemma2}, slightly extends the first version by incorporating colimits, making it more applicable to a broader spectrum of cryptosystems. The third and final version, outlined in Theorem \ref{theo:YonedaLemma3}, generalizes the Yoneda Lemma to reflective subcategories. Although not a novel result, this version is less frequently mentioned in the existing literature.

Throughout section \ref{ssec:limit-sketch} and section \ref{ssec:modules}, we delve into various applications of the Yoneda Lemma across different categories, preparing the reader for the calculatory considerations employed in the subsequent sections.

Following this preliminary section, section \ref{sec:Cryptosystems-and-Yoneda} introduces the general formalism of the paper, referred to as the Yoneda Encryption Scheme (section \ref{ssec:Yoneda-encryption}). This encryption scheme serves as the foundation for describing and categorizing common cryptosystems, including ElGamal (section \ref{ssec:ElGamal}), RSA (section \ref{ssec:RSA}), Benaloh (section \ref{ssec:Benaloh}), NTRU (section \ref{ssec:NTRU}), and LWE (section \ref{ssec:LWE-based}). The intention is not only theoretical but also practical, providing readers with a hands-on exercise in identifying scenarios where each facet of the Yoneda Encryption Scheme can be effectively applied.

Lastly, in section \ref{sec:applications}, we leverage the insights gained from the diverse applications of the Yoneda Encryption Scheme (discussed in section \ref{sec:Cryptosystems-and-Yoneda}) to establish a theoretical foundation for a fully homomorphic encryption system. This system is named the Arithmetic Channel Encryption Scheme (ACES). We define this scheme using the Yoneda formalism in section \ref{sec:FHE:from-Yoneda}. Subsequently, in section \ref{ssec:ACES:homomorphic-properties}, we employ a slightly more general formalism to show that ACES defines a leveled fully homomorphic encryption scheme (refer to Theorem \ref{theo:Yoneda:to:leveled-FHE} and Proposition \ref{prop:FHE:decryption}). Then, in section \ref{ssec:proper-FHE}, we add a refresh operation to our arithmetic, which turns ACES into a proper fully homomorphic encryption scheme (see Theorem \ref{theo:Yoneda:to:proper-FHE}).

To conclude, our central contribution extends beyond providing a comprehensive theory for homomorphic encryption schemes. Notably, our key result asserts that ACES achieves non-leveled full homomorphism without relying on traditional bootstrapping techniques. For those interested in exploring the practical implications of our findings, we have developed a user-friendly Python package supporting this theoretical paper, accessible at \url{https://github.com/remytuyeras/aces}.

\section{Acknowledgment}
The author would like to thank Mark Schultz, Martti Karvonen, Sam Jaques, Matan Prasma for pertinent questions and feedback regarding earlier versions on this paper.

\section{Preparation}\label{sec:preparation}

\subsection{Conventions}
We assume the reader's familiarity with fundamental concepts of category theory, including categories, functors, and natural transformations, among other classical definitions. For a comprehensive introduction to these concepts, we recommend consulting \cite{MacLane}. This section serves not only to recap less obvious concepts but also to establish the notations and conventions consistently employed throughout the paper.

\begin{convention}[Homsets and small categories]
For every category $\mathcal{C}$, and every pair $(X,Y)$ of objects in $\mathcal{C}$, we will denote as $\mathcal{C}(X,Y)$ the set of arrows from $X$ to $Y$. Every set of arrows of the form $\mathcal{C}(X,Y)$ will be referred to as a \emph{homset}. We will say that $\mathcal{C}$ is \emph{small} if the class of objects of $\mathcal{C}$ is a proper set (as opposed to a class). For every category $\mathcal{C}$ and every object $X$, we will denote as $\mathsf{id}_X$ the identity arrow $X \to X$ in $\mathcal{C}$.
\end{convention}

\begin{convention}[Opposite category]
For every category $\mathcal{C}$, we will denote as $\mathcal{C}^{\mathsf{op}}$ the category whose objects are those of $\mathcal{C}$ and whose arrows from a given object $X$ to a given object $Y$ are the elements of the homsets $\mathcal{C}(Y,X)$. The composition of $\mathcal{C}^{\mathsf{op}}$ is the one directly inherited from $\mathcal{C}$.
\end{convention}

\begin{convention}[Opposite functors]
The opposite operation on categories extends to functors as follows: For every pair $(\mathcal{C},\mathcal{D})$ of categories and every functor $F:\mathcal{C} \to \mathcal{D}$, we denote as $F^{\mathsf{op}}:\mathcal{C}^{\mathsf{op}} \to \mathcal{D}^{\mathsf{op}}$ the obvious functor that sends an arrow $f:Y \to X$ in $\mathcal{C}^{\mathsf{op}}$ to the arrow $F(f):F(Y) \to F(X)$ in $\mathcal{D}^{\mathsf{op}}$.
\end{convention}

\begin{convention}[Sets]
We denote as $\mathbf{Set}$ the category of sets and functions. For every non-negative integer $n$, we will denote as $[n]$ the set of integers from $1$ to $n$. If $n$ is zero, then the set $[n]$ is the empty set.
\end{convention}

\begin{convention}[Functors]
For every pair $(\mathcal{C},\mathcal{D})$ of categories, we will denote as $[\mathcal{C},\mathcal{D}]$ the category whose objects are functors $\mathcal{C} \to \mathcal{D}$ and whose arrows are the natural transformations between functors $\mathcal{C} \to \mathcal{D}$.
\end{convention}

\begin{definition}[Constant functor]
Let $A$ be a small category. For every category $\mathcal{C}$, we will denote as $\Delta_A$ the functor $\mathcal{C} \to [A,\mathcal{C}]$ the obvious functor that sends an object $X$ in $\mathcal{C}$ to the constant function $A \to \mathbf{1} \to \mathcal{C}$ that maps every object $a$ in $A$ to the object $X$ in $\mathcal{C}$. For every object $X$ in $\mathcal{C}$, the functor $\Delta_A(X)$ sends arrows in $A$ to identities on $X$. For every arrow $f:X \to Y$ in $\mathcal{C}$, the natural map $\Delta_A(f)$ are given by copies of the arrow $f:X \to y$ for each object $a$ in $A$.
\end{definition}

\begin{definition}[Limits and colimits]\label{def:limits_colimits}
Let $A$ be a small category and $\mathcal{C}$ be a category. The category $\mathcal{C}$ will be said to have \emph{limits over $A$} if the functor $\Delta_A:\mathcal{C} \to [A,\mathcal{C}]$ is equipped with a right adjoint $\mathsf{lim}_A:[A,\mathcal{C}] \to \mathcal{C}$. This means that for every object $X$ in $\mathcal{C}$ and every functor $F:A \to \mathcal{C}$, the category $\mathcal{C}$ is equipped with a natural isomorphism as follows:
\begin{equation}\label{eq:limits_colimits:lim}
[A,\mathcal{C}](\Delta_A(X),F) \cong \mathcal{C}(X,\mathsf{lim}_A(F))
\end{equation}
Conversely, the category $\mathcal{C}$ will be said to have \emph{colimits over $A$} if the functor $\Delta_A:\mathcal{C} \to [A,\mathcal{C}]$ is equipped with a left adjoint $\mathsf{col}_A:[A,\mathcal{C}] \to \mathcal{C}$. This means that for every object $X$ in $\mathcal{C}$ and every functor $F:A \to \mathcal{C}$, the category $\mathcal{C}$ is equipped with a natural isomorphism as follows:
\begin{equation}\label{eq:limits_colimits:colim}
[A,\mathcal{C}](F,\Delta_A(X)) \cong \mathcal{C}(\mathsf{col}_A(F),X)
\end{equation}
\end{definition}

\begin{example}[Sets]\label{exa:set_co_limits}
The category $\mathbf{Sets}$ has limits and colimits over all small categories. As a result, we can show (see \cite{MacLane}) that for every small category $[T,\mathbf{Set}]$ has limits and colimits over all small categories (take these limits and colimits to be the pointwise limits and colimits).
\end{example}

It follows from a straightforward reformulation of the definition of limits and colimits in $\mathbf{Set}$ and that of natural transformations that the following isomorphism always exist.

\begin{proposition}\label{prop:limit_colimits_as_limits_in_set}
Let $A$ be a small category and $\mathcal{C}$ be a category. If $\mathcal{C}$ has limits over $A$, then there is a natural isomorphism (in $X$ and $F:A \to \mathcal{C}$), where the rightmost limit is defined on the functorial mapping $a \mapsto \mathcal{C}(X,F(a))$
\[
\mathcal{C}(X,\mathsf{lim}_A(F)) \cong \mathsf{lim}_A(\mathcal{C}(X,F(-)))
\]
Similarly, if $\mathcal{C}$ has colimits over $A$, then there is a natural isomorphism (in $X$ and $F:A \to \mathcal{C}$), where the rightmost limit is defined on the functorial mapping $a \mapsto \mathcal{C}(F(a),X)$
\[
\mathcal{C}(\mathsf{col}_A(F),X) \cong \mathsf{lim}_{A^{\mathsf{op}}}(\mathcal{C}(F^{\mathsf{op}}(-),X))
\]
\end{proposition}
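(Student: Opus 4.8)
The plan is to unwind both sides using the defining adjunction isomorphisms \eqref{eq:limits_colimits:lim} and \eqref{eq:limits_colimits:colim} together with the explicit pointwise description of limits in $\mathbf{Set}$ recalled in Example \ref{exa:set_co_limits}. Concretely, for a functor $G : A \to \mathbf{Set}$ the limit $\mathsf{lim}_A(G)$ may be realized as the set of families $(x_a)_a$ with $x_a \in G(a)$ satisfying $G(g)(x_a) = x_b$ for every arrow $g : a \to b$ in $A$; this is the standard ``set of compatible cones'' formula, and it is this description that I will match the hom-sets against.

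First I would treat the limit case. By \eqref{eq:limits_colimits:lim} applied with $F$ in the role of the diagram, there is a natural isomorphism $\mathcal{C}(X,\mathsf{lim}_A(F)) \cong [A,\mathcal{C}](\Delta_A(X),F)$. A natural transformation $\eta : \Delta_A(X) \Rightarrow F$ is, by definition, a family of arrows $\eta_a : X \to F(a)$ indexed by the objects $a$ of $A$ such that the naturality square for $\eta$ commutes for every arrow $g : a \to b$ of $A$; since $\Delta_A(X)$ sends $g$ to $\mathsf{id}_X$, this commutativity is exactly the equation $F(g) \circ \eta_a = \eta_b$. On the other hand, an element of $\mathsf{lim}_A(\mathcal{C}(X,F(-)))$ is, by the compatible-families description, a family $(\eta_a)_a$ with $\eta_a \in \mathcal{C}(X,F(a))$ such that $\mathcal{C}(X,F(g))(\eta_a) = \eta_b$ for every $g : a \to b$, i.e. again $F(g) \circ \eta_a = \eta_b$. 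The two descriptions coincide on the nose, so the composite
\[
\mathcal{C}(X,\mathsf{lim}_A(F)) \;\cong\; [A,\mathcal{C}](\Delta_A(X),F) \;=\; \mathsf{lim}_A(\mathcal{C}(X,F(-)))
\]
is the desired isomorphism; naturality in $X$ and in $F$ is inherited from the naturality of \eqref{eq:limits_colimits:lim}, since the second identification is an equality of functors in the pair $(X,F)$.

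Next I would handle the colimit case by the same template, using \eqref{eq:limits_colimits:colim} to obtain $\mathcal{C}(\mathsf{col}_A(F),X) \cong [A,\mathcal{C}](F,\Delta_A(X))$ and then observing that a natural transformation $\varepsilon : F \Rightarrow \Delta_A(X)$ is a family $\varepsilon_a : F(a) \to X$ with $\varepsilon_b \circ F(g) = \varepsilon_a$ for every $g : a \to b$ in $A$. Writing $\mathcal{C}(F^{\mathsf{op}}(-),X)$ for the covariant functor $A^{\mathsf{op}} \to \mathbf{Set}$ sending $a$ to $\mathcal{C}(F(a),X)$ and sending an arrow $g : b \to a$ of $A$, viewed as an arrow $a \to b$ of $A^{\mathsf{op}}$, to precomposition with $F(g)$, one checks that an element of $\mathsf{lim}_{A^{\mathsf{op}}}(\mathcal{C}(F^{\mathsf{op}}(-),X))$ is exactly such a compatible family $(\varepsilon_a)_a$. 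Hence the two hom-sets agree, and naturality is again transported from \eqref{eq:limits_colimits:colim}.

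The computations are routine; the only point requiring care is the variance bookkeeping in the colimit half, namely verifying that $a \mapsto \mathcal{C}(F(a),X)$ is contravariant in $a \in A$, hence covariant on $A^{\mathsf{op}}$, so that its limit over $A^{\mathsf{op}}$ is well-formed and so that its compatible families match cocones on $F$ rather than cones. Once the indexing conventions for $F^{\mathsf{op}}$ and $A^{\mathsf{op}}$ are pinned down, the identification is immediate.
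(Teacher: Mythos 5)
Your proposal is correct and follows essentially the same route as the paper: apply the defining adjunction isomorphisms of Definition \ref{def:limits_colimits} and then identify natural transformations $\Delta_A(X) \Rightarrow F$ (resp. $F \Rightarrow \Delta_A(X)$) with compatible families, i.e.\ elements of the pointwise limit in $\mathbf{Set}$. Your extra care with the variance bookkeeping over $A^{\mathsf{op}}$ in the colimit half is a welcome clarification of a point the paper glosses over, but it does not change the argument.
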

\begin{proof}
The isomorphisms of the statememt follow from the isomorphisms given in Definition \ref{def:limits_colimits} and the straightforward correspondences given by the following isomorphisms.
\[
\begin{array}{lll}
[A,\mathcal{C}](\Delta_A(X),F) &\to& \mathsf{lim}_A(\mathcal{C}(X,F))\\
(f_a:X \to F(a))_a &\mapsto & (f_a:X \to F(a))_a
\end{array}
\quad
\textrm{and}
\quad
\begin{array}{lll}
[A,\mathcal{C}](F,\Delta_A(X)) &\to& \mathsf{lim}_A(\mathcal{C}(F,X))\\
(f_a: F(a) \to X)_a &\mapsto & (f_a:F(a) \to X)_a
\end{array}
\]
As can be seen through the previous mappings, an element of $[A,\mathcal{C}](\Delta_A(X),F)$ or $[A,\mathcal{C}](F,\Delta_A(X))$ can be interpreted as a tuple in the limit $\mathsf{lim}_A(\mathcal{C}(X,F))$ or $\mathsf{lim}_A(\mathcal{C}(F,X))$, respectively.
\end{proof}

\begin{definition}[Full and faithful]
Let $\mathcal{C}$ and $\mathcal{D}$ be two categories.
\begin{itemize}
\item[1)] A functor $F:\mathcal{C} \to \mathcal{D}$ is said to be \emph{full} if for every pair $(X,Y)$ of objects in $\mathcal{C}$, the functor $F$ induces a surjective map $\mathcal{C}(X,Y) \to \mathcal{D}(F(X),F(Y))$;
\item[2)] A functor $F:\mathcal{C} \to \mathcal{D}$ is said to be \emph{faithful} if for every pair $(X,Y)$ of objects in $\mathcal{C}$, the functor $F$ induces an injective map $\mathcal{C}(X,Y) \to \mathcal{D}(F(X),F(Y))$.
\end{itemize}
\end{definition}

\subsection{Yoneda lemma for functors} \label{ssec:Yoneda_lemma}
Consider a small category $T$. Recall that the homsets of $T$ induce a functor $T^{\mathsf{op}} \to [T,\mathbf{Set}]$ through the nested mapping rules $Y: a \mapsto (b \mapsto T(a,b))$ (see \cite{MacLane}). Intuitively, this functor serves as an embedding of the ``theory'' into the category $[T,\mathbf{Set}]$, which represents the set of models for the theory $T$. For each object $a$ in $\mathcal{C}$, the functor $Y(a):T \mapsto \mathbf{Set}$ is said to be \emph{representable}, as it is represented by the object $a$ in the small category $T$.

\begin{theorem}[Yoneda Lemma]\label{theo:YonedaLemma}
For every object $a$ in $T$ and every functor $F:T \mapsto \mathbf{Set}$, there is a function
\[
\varphi_{a,F}:[T,\mathbf{Set}](Y(a),F) \mapsto F(a),
\]
natural in $a$ and $F$, that maps a natural transformation $\theta:Y(a) \Rightarrow F$ to the element $\theta(\mathsf{id}_a) \in F(a)$, where $\mathsf{id}_a$ denotes the identity arrow in the homset $T(a,a)$. The function $\varphi_{a,F}$ is a bijection whose inverse is defined by the function
\[
F(a) \mapsto [T,\mathbf{Set}](Y(a),F)
\]
that maps an element $x \in F(a)$ to the natural transformation $Y(a) \Rightarrow F$ that sends, for every object $b$ in $T$, an arrow $f \in T(a,b)$ to the element $F(f)(x) \in F(b)$.
\end{theorem}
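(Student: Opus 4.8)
The plan is to prove the Yoneda Lemma by exhibiting the two maps explicitly and checking they are mutually inverse, with the naturality verification done separately at the end. First I would fix notation: write $\Phi = \varphi_{a,F}:[T,\mathbf{Set}](Y(a),F) \to F(a)$ for the map $\theta \mapsto \theta_a(\mathsf{id}_a)$, and write $\Psi:F(a) \to [T,\mathbf{Set}](Y(a),F)$ for the candidate inverse, sending $x \in F(a)$ to the family of functions $\Psi(x)_b : T(a,b) \to F(b)$, $f \mapsto F(f)(x)$, indexed by objects $b$ of $T$.

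The first genuine step is to check that $\Psi(x)$ is actually a natural transformation $Y(a) \Rightarrow F$, i.e.\ that for every arrow $g:b \to b'$ in $T$ the square relating $Y(a)(g) = T(a,g) = (g \circ -)$ and $F(g)$ commutes; this reduces to the functoriality identity $F(g \circ f)(x) = F(g)(F(f)(x))$, which is immediate. Next I would compute $\Phi(\Psi(x)) = \Psi(x)_a(\mathsf{id}_a) = F(\mathsf{id}_a)(x) = \mathsf{id}_{F(a)}(x) = x$, so $\Phi \circ \Psi = \mathrm{id}$. For the reverse composite, take $\theta:Y(a) \Rightarrow F$, set $x = \Phi(\theta) = \theta_a(\mathsf{id}_a)$, and show $\Psi(x) = \theta$: for any object $b$ and any $f \in T(a,b)$, naturality of $\theta$ applied to the arrow $f:a \to b$ gives $\theta_b \circ T(a,f) = F(f) \circ \theta_a$; evaluating both sides at $\mathsf{id}_a \in T(a,a)$ and using $T(a,f)(\mathsf{id}_a) = f$ yields $\theta_b(f) = F(f)(\theta_a(\mathsf{id}_a)) = F(f)(x) = \Psi(x)_b(f)$. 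Hence $\Psi(x) = \theta$, so $\Psi \circ \Phi = \mathrm{id}$, and $\varphi_{a,F}$ is a bijection with inverse $\Psi$.

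It remains to establish naturality of $\varphi_{a,F}$ in both arguments, i.e.\ that it defines a natural isomorphism between the two functors $T \times [T,\mathbf{Set}] \to \mathbf{Set}$ given by $(a,F) \mapsto [T,\mathbf{Set}](Y(a),F)$ and $(a,F) \mapsto F(a)$. For naturality in $F$: given a natural transformation $\alpha:F \Rightarrow G$, I would check that post-composition with $\alpha$ commutes with the two $\varphi$'s, which unwinds to $\alpha_a(\theta_a(\mathsf{id}_a)) = (\alpha \circ \theta)_a(\mathsf{id}_a)$, true by definition of the composite. For naturality in $a$: given $h:a \to a'$ in $T$, the induced map on the left is precomposition with $Y(h):Y(a') \Rightarrow Y(a)$ and on the right is $F(h):F(a) \to F(a')$; chasing $\theta:Y(a)\Rightarrow F$ around the square reduces to $F(h)(\theta_a(\mathsf{id}_a)) = \theta_{a'}(h)$, which is exactly the naturality identity for $\theta$ evaluated at $\mathsf{id}_a$ (the same computation used above, since $Y(h)_{a'}(\mathsf{id}_{a'})$ relates to $h$).

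I do not anticipate a serious obstacle: every step is a direct diagram chase or an application of functoriality/naturality. The one point demanding care is bookkeeping the variance — $Y$ is contravariant on objects of $T$, so naturality ``in $a$'' means the family $\varphi_{a,F}$ is natural with respect to the functor $a \mapsto [T,\mathbf{Set}](Y(a),F)$ which is \emph{covariant} in $a$ (two contravariances cancel), and one must be careful to compose $Y(h)$ and $\theta$ in the correct order. Beyond that, the proof is essentially the statement unwound.
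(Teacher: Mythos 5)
Your proposal is correct and is precisely the ``straightforward verification'' that the paper itself invokes (deferring the details to \cite{MacLane}): the same explicit pair of maps, the same mutual-inverse computation via $\theta_b(f)=F(f)(\theta_a(\mathsf{id}_a))$, and the same naturality checks in $a$ and $F$, with the variance of $Y$ handled correctly. There is no substantive difference between your argument and the standard proof the paper relies on.
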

\begin{proof}
The statement given above can be proven through a straightforward verification. The reader can learn more about the Yoneda Lemma in \cite{MacLane}.
\end{proof}

By taking the functor $F$, in the statement of Theorem \ref{theo:YonedaLemma}, to be another representable functor, say $Y(b)$ for some object $b$ in $T$, we can show that the functor $Y:T^\mathsf{op} \to [T,\mathbf{Set}]$ is \emph{fully faithfulness} (see \cite{MacLane} for instance). It follows from this property that, for every small category $I$ and every functor $G:I \to [T,\mathbf{Set}]$, if the image $G(i)$ is a representable functor of the fome $Y(a_i)$, then the mapping $i \mapsto a_i$ induces a functor $H_G:I \to T^\mathsf{op}$ such that the following diagram commutes.
\[
\xymatrix{
I \ar@{-->}[r]^{H_G} \ar[rd]_{G} & T^\mathsf{op}\ar[d]^{Y}\\
&[T,\mathbf{Set}]
}
\]
Below, we use the universal property of colimits to generalize the Yoneda Lemma to colimits of representable functors. In this respect, let $I$ be a small category. Since $[T,\mathbf{Set}]$ has colimits over $I$ (Example \ref{exa:set_co_limits}), for every functor $H:I \to T^{\mathsf{op}}$, the colimit $\mathsf{col}_I(Y \circ H)$ of the composite functor $Y \circ H:I \to \mathcal{C}$ is well-defined.

\begin{theorem}[Yoneda Lemma]\label{theo:YonedaLemma2}
Take $I$ and $H:I \to T^{\mathsf{op}}$ as defined above. For every functor $F:T \mapsto \mathbf{Set}$, there is a bijection
\[
\varphi_{H,F}:[T,\mathbf{Set}](\mathsf{col}_I(Y \circ H),F) \mapsto \mathsf{lim}_{I^{\mathsf{op}}}(F \circ H^{\mathsf{op}}),
\]
natural in the variables $H$ and $F$ over the categories $[I,T^{\mathsf{op}}]$ and $[T,\mathbf{set}]$,
\end{theorem}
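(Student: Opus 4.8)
The plan is to reduce Theorem \ref{theo:YonedaLemma2} to the classical Yoneda Lemma (Theorem \ref{theo:YonedaLemma}) together with the universal property of colimits and the pointwise computation of limits and colimits in functor categories over $\mathbf{Set}$ (Example \ref{exa:set_co_limits} and Proposition \ref{prop:limit_colimits_as_limits_in_set}). The first step is to unwind the left-hand side using Proposition \ref{prop:limit_colimits_as_limits_in_set}: since $[T,\mathbf{Set}]$ has colimits over $I$, the universal property of $\mathsf{col}_I(Y\circ H)$ gives a natural isomorphism
\[
[T,\mathbf{Set}](\mathsf{col}_I(Y\circ H),F) \;\cong\; \mathsf{lim}_{I^{\mathsf{op}}}\big(i \mapsto [T,\mathbf{Set}]((Y\circ H)(i),F)\big),
\]
where on the right we take the limit of the functor sending $i$ to the homset $[T,\mathbf{Set}](Y(H(i)),F)$ (here $H(i)$ is an object of $T^{\mathsf{op}}$, i.e. an object of $T$).

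The second step is to apply the classical Yoneda Lemma objectwise. For each $i$ in $I$, Theorem \ref{theo:YonedaLemma} supplies a bijection $\varphi_{H(i),F}:[T,\mathbf{Set}](Y(H(i)),F) \to F(H(i))$, natural in the object $H(i)$ of $T$ and in $F$. The key point is that this naturality in the first variable lets the family $(\varphi_{H(i),F})_{i}$ assemble into an isomorphism of functors $I^{\mathsf{op}} \to \mathbf{Set}$ between $i \mapsto [T,\mathbf{Set}](Y(H(i)),F)$ and $i \mapsto F(H(i)) = (F\circ H^{\mathsf{op}})(i)$; passing to limits over $I^{\mathsf{op}}$ then yields
\[
\mathsf{lim}_{I^{\mathsf{op}}}\big(i \mapsto [T,\mathbf{Set}](Y(H(i)),F)\big) \;\cong\; \mathsf{lim}_{I^{\mathsf{op}}}(F\circ H^{\mathsf{op}}).
\]
Composing this with the isomorphism from the first step produces the desired bijection $\varphi_{H,F}$. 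Concretely, $\varphi_{H,F}$ sends a natural transformation $\theta:\mathsf{col}_I(Y\circ H)\Rightarrow F$ to the tuple $(\theta_{H(i)}(\mathsf{id}_{H(i)}))_{i}$, the $i$-th component being obtained by restricting $\theta$ along the colimit coprojection $Y(H(i)) \to \mathsf{col}_I(Y\circ H)$ and then applying $\varphi_{H(i),F}$.

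The third step is to verify naturality of $\varphi_{H,F}$ jointly in $H$ and $F$ over $[I,T^{\mathsf{op}}]$ and $[T,\mathbf{Set}]$. Naturality in $F$ is immediate since every construction used (the colimit universal property and each $\varphi_{H(i),F}$) is natural in $F$. Naturality in $H$ requires checking that a natural transformation $\alpha:H\Rightarrow H'$ in $[I,T^{\mathsf{op}}]$ induces, via functoriality of $\mathsf{col}_I(Y\circ -)$ and of $\mathsf{lim}_{I^{\mathsf{op}}}((F\circ -^{\mathsf{op}}))$, a commuting square; this follows by chasing the coprojections through the universal property and invoking naturality of the classical $\varphi$ in its first variable componentwise. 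I expect this joint naturality bookkeeping to be the main obstacle — not because any individual verification is deep, but because one must be careful that the two families of limits/colimits indexed by $I$ and $I^{\mathsf{op}}$ are handled with consistent variances, and that the functoriality of $H \mapsto \mathsf{col}_I(Y\circ H)$ is set up correctly. Everything else reduces to stringing together isomorphisms already available in the excerpt.
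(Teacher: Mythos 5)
Your proposal is correct and follows exactly the paper's route: first apply Proposition \ref{prop:limit_colimits_as_limits_in_set} to rewrite $[T,\mathbf{Set}](\mathsf{col}_I(Y\circ H),F)$ as a limit of homsets over $I^{\mathsf{op}}$, then apply the classical Yoneda Lemma (Theorem \ref{theo:YonedaLemma}) objectwise under that limit and compose the two natural isomorphisms. The extra detail you give on the explicit formula for $\varphi_{H,F}$ and the naturality bookkeeping in $H$ and $F$ is consistent with, and merely elaborates, the paper's two-line argument.
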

\begin{proof}
The proof results from the composition of the natural isomorphisms given by the Yoneda Lemma (Theorem \ref{theo:YonedaLemma}) and that given in Proposition \ref{prop:limit_colimits_as_limits_in_set}. Specifically, we obtain the natural isomorphism of the statement from the following sequence of isomorphisms:
\begin{align*}
[T,\mathbf{Set}](\mathsf{col}_I(Y \circ H),F) & \cong  \mathsf{lim}_{I^{\mathsf{op}}}([T,\mathbf{Set}](Y \circ H^{\mathsf{op}}(-),F))& (\textrm{Proposition \ref{prop:limit_colimits_as_limits_in_set}})\\
& \cong  \mathsf{lim}_{I^{\mathsf{op}}}(F \circ H^{\mathsf{op}}) & (\textrm{Yoneda Lemma})
\end{align*}
The composition of the two natural isomorphisms provides the natural isomorphism $\varphi_{H,F}$.
\end{proof}

\subsection{Yoneda lemma for models of theories}
This section extends section \ref{ssec:Yoneda_lemma} to reflective subcategories of functor categories. In practice, these subcategories will correspond to categories whose objects are limit-preserving functors.

\begin{definition}[Reflective subcategory]\label{def:reflective_subcategory}
Let $T$ be a small category. We will say that a category $\mathcal{L}$ is a reflective subcategory of $[T,\mathbf{Set}]$ if it is equipped with a full inclusion functor $U:\mathcal{L} \hookrightarrow [T,\mathbf{Set}]$ such that the functor $U$ has a left adjoint $L:[T,\mathbf{Set}] \to \mathcal{L}$.
\end{definition}

\begin{proposition}[Colimits]\label{prop:colimits_ref_sub}
Let $T$ be a small category and let $U:\mathcal{L} \hookrightarrow [T,\mathbf{Set}]$ be a reflective subcategory whose left adjoint is denoted as $L$. The category $\mathcal{L}$ has colimits, which are the images of the colimits of $[T,\mathbf{Set}]$ via the functor $L:[T,\mathbf{Set}] \to \mathcal{L}$.
\end{proposition}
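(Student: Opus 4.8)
The plan is to exhibit, for an arbitrary small category $I$ and an arbitrary functor $G:I\to\mathcal{L}$, the colimit of $G$ in $\mathcal{L}$ as $L\big(\mathsf{col}_I(U\circ G)\big)$, where $\mathsf{col}_I(U\circ G)$ is the colimit of $U\circ G:I\to[T,\mathbf{Set}]$, which exists by Example \ref{exa:set_co_limits}. The universal property will then follow by chaining three natural bijections: one induced by the full faithfulness of $U$, one coming from the defining adjunction $\Delta_I\dashv\mathsf{col}_I$ in $[T,\mathbf{Set}]$, and the reflection adjunction $L\dashv U$.

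First I would record two elementary facts about the full inclusion $U:\mathcal{L}\hookrightarrow[T,\mathbf{Set}]$. For every small category $I$ and every pair of functors $G,G':I\to\mathcal{L}$, post-composition with $U$ gives a bijection $[I,\mathcal{L}](G,G')\cong[I,[T,\mathbf{Set}]](U\circ G,U\circ G')$: a natural transformation $U\circ G\Rightarrow U\circ G'$ lifts componentwise and uniquely through the bijections $\mathcal{L}(G(i),G'(i))\cong[T,\mathbf{Set}](U(G(i)),U(G'(i)))$, and the naturality squares of the lifted family commute because $U$ is faithful. I would also note that $U$ commutes with constant functors, so that $U\circ\Delta_I(X)=\Delta_I(U(X))$ for every object $X$ of $\mathcal{L}$.

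With these in hand, fix $G:I\to\mathcal{L}$, set $\mathsf{col}_I(G):=L\big(\mathsf{col}_I(U\circ G)\big)$, and compute, for every object $X$ of $\mathcal{L}$,
\begin{align*}
[I,\mathcal{L}](G,\Delta_I(X)) &\cong [I,[T,\mathbf{Set}]]\big(U\circ G,\Delta_I(U(X))\big) &(\textrm{full faithfulness of }U)\\
&\cong [T,\mathbf{Set}]\big(\mathsf{col}_I(U\circ G),U(X)\big) &(\Delta_I\dashv\mathsf{col}_I\textrm{ in }[T,\mathbf{Set}])\\
&\cong \mathcal{L}\big(L(\mathsf{col}_I(U\circ G)),X\big) &(L\dashv U)
\end{align*}
all bijections being natural in $X$ (and in $G$). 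By Definition \ref{def:limits_colimits}, this natural isomorphism is precisely what it means for $\mathcal{L}$ to have colimits over $I$, realized by the assignment $G\mapsto L(\mathsf{col}_I(U\circ G))$. Unwinding the composite bijection through the Yoneda lemma (Theorem \ref{theo:YonedaLemma}) identifies the universal cocone of $G$ as the image under $L$ of the universal cocone of $U\circ G$, precomposed with the counit of $L\dashv U$, which is invertible since $U$ is a full inclusion; this is exactly the claim that the colimits of $\mathcal{L}$ are the $L$-images of the colimits of $[T,\mathbf{Set}]$.

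I do not expect a genuine obstacle here; the only step requiring a little care is the first bijection, i.e. promoting the full faithfulness of $U$ to that of the induced functor $[I,\mathcal{L}]\to[I,[T,\mathbf{Set}]]$ and checking naturality of the lifted transformation — but this is a short, standard verification.
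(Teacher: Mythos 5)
Your proposal is correct and follows essentially the same route as the paper: the colimit of $G$ in $\mathcal{L}$ is exhibited as $L(\mathsf{col}_I(U\circ G))$ via the same three-step chain of natural isomorphisms (fullness of $U$ at the level of functor categories, the adjunction $\Delta_I\dashv\mathsf{col}_I$ in $[T,\mathbf{Set}]$, and the reflection $L\dashv U$), concluding by the characterization of colimits in Definition \ref{def:limits_colimits}. The extra care you take in lifting full faithfulness to $[I,\mathcal{L}]\to[I,[T,\mathbf{Set}]]$ is a detail the paper leaves implicit, but the argument is the same.
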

\begin{proof}
The proof directly follows from the well-known property that left adjoints preserve colimits (see \cite{MacLane}). Specifically, the statement follows from the following isomorphisms, where $I$ is a small category, $X$ is an object in $\mathcal{L}$ and $F$ is a functor $I \to \mathcal{L}$.
\begin{align*}
[I,\mathcal{L}](F,\Delta_A(X)) &\cong [I,[T,\mathbf{Set}]](U \circ F,\Delta_A(U(X))) & (\textrm{full sbucategory})\\
&\cong [T,\mathbf{Set}](\mathsf{col}_I(U \circ F),U(X)) & (\textrm{colimits})\\
&\cong \mathcal{L}(L(\mathsf{col}_I(U \circ F)),X) & (\textrm{adjunction})
\end{align*}
We conclude from the characterization of colimits given in Definition \ref{def:limits_colimits}.
\end{proof}

\begin{theorem}[Yoneda Lemma]\label{theo:YonedaLemma3}
Let $I$ and $T$ be small categories and let $H:I \to T^{\mathsf{op}}$ be a functor. Let also $R:\mathcal{L} \hookrightarrow [T,\mathbf{Set}]$ denote a reflective subcategory with an adjoint $L:[T,\mathbf{Set}] \to \mathcal{L}$. For every functor $F:T \mapsto \mathbf{Set}$ in $\mathcal{L}$, there is a bijection
\[
\phi_{H,F}:\mathcal{L}(\mathsf{col}_I(L \circ Y \circ H),F) \mapsto \mathsf{lim}_{I^{\mathsf{op}}}(U(F ) \circ H^{\mathsf{op}}),
\]
natural in the variables $H$ and $F$ over the categories $[I,T^{\mathsf{op}}]$ and $\mathcal{L}$,
\end{theorem}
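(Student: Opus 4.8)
The plan is to obtain $\phi_{H,F}$ as a composite of three natural isomorphisms, following the same template as the proof of Theorem \ref{theo:YonedaLemma2} but inserting the reflection adjunction $L \dashv U$ at the right place. The point is that the colimit appearing on the left of the statement is computed inside $\mathcal{L}$, so before any Yoneda-type reasoning can be applied one must bring it back into $[T,\mathbf{Set}]$.

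First I would use Proposition \ref{prop:colimits_ref_sub}: since $L$ is a left adjoint it preserves colimits, so the colimit $\mathsf{col}_I(L \circ Y \circ H)$ computed in $\mathcal{L}$ is canonically isomorphic to $L(\mathsf{col}_I(Y \circ H))$, where the inner colimit is now computed in $[T,\mathbf{Set}]$ (which has all colimits by Example \ref{exa:set_co_limits}). This identification is natural in $H$ because it is induced by the colimit cocone, which is functorial in the diagram. Next I would apply the hom-set form of the adjunction $\mathcal{L}(L(-),-) \cong [T,\mathbf{Set}](-,U(-))$, with the first slot filled by $\mathsf{col}_I(Y \circ H)$ and the second by $F \in \mathcal{L}$, to get
\[
\mathcal{L}(\mathsf{col}_I(L \circ Y \circ H),F) \cong [T,\mathbf{Set}](\mathsf{col}_I(Y \circ H),U(F)),
\]
natural in $H$ and in $F$. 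Finally I would invoke Theorem \ref{theo:YonedaLemma2} with the functor there taken to be $U(F):T \to \mathbf{Set}$, which yields
\[
[T,\mathbf{Set}](\mathsf{col}_I(Y \circ H),U(F)) \cong \mathsf{lim}_{I^{\mathsf{op}}}(U(F) \circ H^{\mathsf{op}}),
\]
natural in $H$ and in $U(F)$, hence in $H$ and $F$ since $U$ is a functor. Composing the three displayed isomorphisms produces $\phi_{H,F}$ with the asserted naturality.

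The only real content beyond this concatenation is the bookkeeping of naturality: one must check that each of the three isomorphisms is natural simultaneously in $H \in [I,T^{\mathsf{op}}]$ and $F \in \mathcal{L}$, and that the variances are compatible so the composite is well defined. The second isomorphism is natural in both slots by the definition of an adjunction; the third is Theorem \ref{theo:YonedaLemma2} precomposed with $U:\mathcal{L} \to [T,\mathbf{Set}]$ in the $F$ variable. I expect the main subtlety — the step I would be most careful about — to be the first one, namely applying Proposition \ref{prop:colimits_ref_sub} so as to get a genuine natural-in-$H$ identification $\mathsf{col}_I(L \circ Y \circ H) \cong L(\mathsf{col}_I(Y \circ H))$, rather than merely the bare existence of colimits in $\mathcal{L}$. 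Once that identification is pinned down, the rest is a purely formal chain parallel to the proof of Theorem \ref{theo:YonedaLemma2}.
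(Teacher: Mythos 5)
Your proposal is correct and follows essentially the same route as the paper: both obtain $\phi_{H,F}$ by composing the reflection adjunction $L \dashv U$ with the colimit identification from Proposition \ref{prop:colimits_ref_sub} and then invoking Theorem \ref{theo:YonedaLemma2} for $U(F)$, all natural in $H$ and $F$. The only difference is that you compress the paper's seven-step chain by applying colimit preservation of $L$ directly to $Y \circ H$ to get $\mathsf{col}_I(L \circ Y \circ H) \cong L(\mathsf{col}_I(Y \circ H))$, whereas the paper reaches the same intermediate isomorphism $\mathcal{L}(\mathsf{col}_I(L \circ Y \circ H),F) \cong [T,\mathbf{Set}](\mathsf{col}_I(Y \circ H),U(F))$ by a longer unwinding through $U \circ L \circ Y \circ H$, the universal property of colimits, and fullness of $U$.
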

\begin{proof}
The proof results from the composition of the natural isomorphisms given by the version of the Yoneda Lemma stated in Theorem \ref{theo:YonedaLemma2} and that resulting from Definition \ref{def:reflective_subcategory}. Specifically, we obtain the natural isomorphism of the statement from the following sequence of isomorphisms:
\begin{align*}
\mathcal{L}(\mathsf{col}_I(L \circ Y \circ H),F) & \cong \mathcal{L}(L(\mathsf{col}_I(U \circ  L\circ Y\circ H)),F) & (\textrm{Proposition \ref{prop:colimits_ref_sub}}) \\
& \cong [T,\mathbf{Set}](\mathsf{col}_I(U \circ  L \circ Y\circ H),U(F)) & (\textrm{Definition \ref{def:reflective_subcategory}}) \\
& \cong [I,[T,\mathbf{Set}]](U \circ L \circ Y\circ H,\Delta_I(U(F))) & (\textrm{Isomorphism (\ref{eq:limits_colimits:colim})}) \\
& \cong [I,\mathcal{L}](L \circ Y\circ H,\Delta_I(F)) & (\textrm{full subcategory}) \\
& \cong [I,[T,\mathbf{Set}]](Y\circ H,\Delta_I(U(F))) & (\textrm{adjunction}) \\
& \cong [T,\mathbf{Set}](\mathsf{col}_I(Y\circ H),U(F))  & (\textrm{Isomorphism (\ref{eq:limits_colimits:colim})}) \\
& \cong  \mathsf{lim}_{I^{\mathsf{op}}}(U(F) \circ H^{\mathsf{op}}) & (\textrm{Theorem \ref{theo:YonedaLemma2}})
\end{align*}
The composition of this sequence of natural isomorphisms provides the natural isomorphism $\phi_{H,F}$.
\end{proof}

\subsection{Limit sketches}\label{ssec:limit-sketch}
This section defines a particular type of reflective subcategories whose structures will be used to construct cryptosystems or recover established ones. More precisely, these structures commonly reinstate categories of models pertaining to a predefined theory equipped with limits. Although our subsequent exposition leverages the general concept of limits for mathematical convenience and to minimize unnecessary hypotheses, it is noteworthy that the majority of our examples of reflective subcategories stem from highly specific limits, known as products. Nevertheless, the versatility of the formalism presented herein, which accommodates any limit, implies the potential for further adaptation and application beyond the examples presented.

\begin{definition}[Cones]\label{def:cones}
Let $A$ be a small category and $\mathcal{C}$ be a category. For every object $X$ in $\mathcal{C}$ and every functor $F:A \to \mathcal{C}$, we will refer to a natural transformation of the form $\Delta_A(X) \Rightarrow F$ as a \emph{cone} above $F$. Suppose that $\mathcal{C}$ has  limits over $A$. Then, we will say that the cone $\Delta_A(X) \Rightarrow F$ is a \emph{universal cone} (or a \emph{limit cone}) if its image through isomorphism (\ref{eq:limits_colimits:lim}) is an isomorphism of the form $X \to \mathsf{lim}_A(F)$ in $\mathcal{C}$.
\end{definition}

\begin{remark}[Universal cone]\label{rem:universal-cone:adjunction}
For every object $F$ in $[A,\mathcal{C}]$, isomorphism (\ref{eq:limits_colimits:lim}) allows us to easily construct a universal cone as the inverse image of the identity on $\mathsf{lim}_A(F)$ in $\mathcal{C}$, as shown below.
\[
\begin{array}{ccc}
[A,\mathcal{C}](\Delta_A(\mathsf{lim}_A(F)),F) &\cong &\mathcal{C}(\mathsf{lim}_A(F),\mathsf{lim}_A(F))
\\
(\mathsf{lim}_A(F) \to F(a))_a & \mapsfrom & \mathsf{id}_{\mathsf{lim}_A(F)}
\end{array}
\]
Each inverse image $\Delta_A(\mathsf{lim}_A(F))\Rightarrow F$ obtained in this way describes a component of the unit $\Delta_A \circ \mathsf{lim}_A\Rightarrow \mathsf{Id}$ for the adjunction $\Delta_A\vdash \mathsf{lim}_A$ referred to in Definition \ref{def:limits_colimits}.
\end{remark}

The following proposition can be used to verify that a given cone is universal.

\begin{proposition}[Universal cones]\label{prop:universal-cones}
Let $A$ be a small category and $\mathcal{C}$ be a category. For every object $X$ in $\mathcal{C}$ and every functor $F:A \to \mathcal{C}$, a cone $\alpha: \Delta_A(X) \Rightarrow F$ is universal in $\mathcal{C}$ if, and only if, for every cone $\beta: \Delta_A(Y) \Rightarrow F$ in $\mathcal{C}$, there exists a unique morphism $f:Y \to X$ such that the following diagram commutes in $[A,\mathcal{C}]$.
\[
\xymatrix{
\Delta_A(X)\ar@{<==}[d]_{\Delta_A(f)}\ar@{=>}[r]^{\alpha}&F&\\
\Delta_A(Y)\ar@{=>}[ru]_{\beta}&
}
\]
\end{proposition}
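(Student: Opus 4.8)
The plan is to unwind both sides of the stated biconditional through the natural isomorphism~(\ref{eq:limits_colimits:lim}) defining limits over $A$, so that the claim reduces to a transparent statement about morphisms into $\mathsf{lim}_A(F)$. First I would fix the universal cone $\gamma:\Delta_A(\mathsf{lim}_A(F))\Rightarrow F$ constructed in Remark~\ref{rem:universal-cone:adjunction} as the unit of the adjunction $\Delta_A\vdash\mathsf{lim}_A$, and observe that, by Definition~\ref{def:cones}, a cone $\alpha:\Delta_A(X)\Rightarrow F$ being universal means precisely that its transpose $\bar\alpha:X\to\mathsf{lim}_A(F)$ under~(\ref{eq:limits_colimits:lim}) is an isomorphism. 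Likewise any cone $\beta:\Delta_A(Y)\Rightarrow F$ transposes to a unique morphism $\bar\beta:Y\to\mathsf{lim}_A(F)$, and these transpositions are mutually inverse bijections, natural in the source object. Naturality of~(\ref{eq:limits_colimits:lim}) in $X$ gives the key compatibility: for $f:Y\to X$ one has $\overline{\alpha\circ\Delta_A(f)}=\bar\alpha\circ f$, i.e. precomposing a cone with $\Delta_A(f)$ corresponds on the other side of the adjunction to postcomposing its transpose with $f$.

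With this dictionary in hand, the equation $\alpha\circ\Delta_A(f)=\beta$ in $[A,\mathcal{C}]$ is equivalent, after transposing, to $\bar\alpha\circ f=\bar\beta$ in $\mathcal{C}$. So I would argue: $\alpha$ is universal $\iff$ $\bar\alpha$ is an isomorphism $\iff$ for every $Y$ and every $g:Y\to\mathsf{lim}_A(F)$ there is a unique $f:Y\to X$ with $\bar\alpha\circ f=g$ (this last equivalence is just the elementary fact that a morphism $\bar\alpha$ is invertible exactly when composition with it is a bijection on hom-sets out of every object) $\iff$ for every cone $\beta:\Delta_A(Y)\Rightarrow F$ there is a unique $f:Y\to X$ with $\alpha\circ\Delta_A(f)=\beta$, using the bijection $\beta\leftrightarrow\bar\beta$ to pass between cones and morphisms into the limit. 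This chain establishes both directions of the biconditional simultaneously.

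For the ``only if'' direction one can also give the more direct argument, avoiding explicit mention of $\mathsf{lim}_A(F)$: given universal $\alpha$ and an arbitrary cone $\beta:\Delta_A(Y)\Rightarrow F$, factor $\alpha=\gamma\circ\Delta_A(u)$ and $\beta=\gamma\circ\Delta_A(v)$ through the unit cone $\gamma$ with $u:X\to\mathsf{lim}_A(F)$ an isomorphism, and set $f:=u^{-1}\circ v$; then $\Delta_A$ being a functor gives $\alpha\circ\Delta_A(f)=\gamma\circ\Delta_A(u)\circ\Delta_A(u^{-1})\circ\Delta_A(v)=\gamma\circ\Delta_A(v)=\beta$, and uniqueness follows because $\Delta_A(u)$ is an isomorphism in $[A,\mathcal{C}]$, hence left-cancellable, reducing any competing $f'$ to $\Delta_A(v)=\Delta_A(u\circ f')=\Delta_A(u)\circ\Delta_A(f')$ and thus $f'=f$.

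I do not expect a serious obstacle here; the proposition is essentially the unwinding of an adjunction into its universal-arrow formulation, and the only point requiring a little care is the bookkeeping of which side of~(\ref{eq:limits_colimits:lim}) one is on and checking that the naturality square of that isomorphism is the one producing the identity $\overline{\alpha\circ\Delta_A(f)}=\bar\alpha\circ f$. Concretely, naturality in the first variable of the hom-functor $[A,\mathcal{C}](\Delta_A(-),F)$ versus $\mathcal{C}(-,\mathsf{lim}_A(F))$ must be invoked, and I would spell out that one diagram explicitly; everything else is formal. The mild subtlety worth a sentence in the writeup is that ``$\bar\alpha$ is an isomorphism $\iff$ $\mathcal{C}(Y,\bar\alpha)$ is a bijection for all $Y$'' is itself an instance of the Yoneda philosophy, so the proposition is in fact a baby case of the circle of ideas the paper is built on.
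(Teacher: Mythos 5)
Your main argument is correct and complete: the chain ``$\alpha$ universal $\iff$ $\bar\alpha$ an isomorphism $\iff$ post-composition with $\bar\alpha$ is bijective on every hom-set $\iff$ the unique-factorization property,'' mediated by the naturality identity $\overline{\alpha\circ\Delta_A(f)}=\bar\alpha\circ f$ for the isomorphism (\ref{eq:limits_colimits:lim}), proves both directions at once. This coincides with the paper on the forward direction (the paper likewise composes $\mathcal{C}(Y,X)\cong\mathcal{C}(Y,\mathsf{lim}_A(F))\cong[A,\mathcal{C}](\Delta_A(Y),F)$ and invokes naturality to see that $g\mapsto\alpha\circ\Delta_A(g)$ is a bijection), but it differs on the converse: where you reduce to the elementary lemma that a morphism is invertible exactly when composition with it is bijective on all hom-sets, the paper instead plays $\alpha$ against the unit cone $\eta_F$ of Remark \ref{rem:universal-cone:adjunction}, extracting $f$ and $f'$ from the two factorization properties and using uniqueness twice to show $f'\circ f$ and $f\circ f'$ are identities. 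Your route is more uniform and makes the Yoneda-flavoured content explicit; the paper's route is the classical ``two universal objects are canonically isomorphic'' argument and avoids the hom-set criterion. One small caveat about your alternative ``direct'' proof of the only-if direction: the uniqueness step is not licensed merely by $\Delta_A(u)$ being an isomorphism. From $\gamma\circ\Delta_A(u\circ f')=\gamma\circ\Delta_A(v)$ you must first cancel $\gamma$, i.e.\ use that $w\mapsto\gamma\circ\Delta_A(w)$ is injective — which is again the adjunction bijection (\ref{eq:limits_colimits:lim}), not left-cancellability of $\Delta_A(u)$; once that is cited, the argument goes through. Since your primary argument does not rely on this, it is a presentational fix rather than a gap.
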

\begin{proof}
Suppose that $\alpha: \Delta_A(X) \Rightarrow F$ is universal and let $a:X \to \mathsf{lim}_A(F)$ denote the corresponding isomorphism through isomorphism (\ref{eq:limits_colimits:lim}) (Definition \ref{def:cones}). If we let $\eta_F:\Delta_A(\mathsf{lim}_A(F))\Rightarrow F$ denote the universal cone obtained using the identity on $\mathsf{lim}_A(F)$ (see Remark \ref{rem:universal-cone:adjunction}), then the naturality of isomorphism (\ref{eq:limits_colimits:lim}) tells us that $\alpha = \eta_F \circ \Delta_A(a)$. Let now $Y$ be an object in $\mathcal{C}$. Isomorphism (\ref{eq:limits_colimits:lim}) gives us the following sequence of isomorphisms.
\begin{align*}
[A,\mathcal{C}](Y,X)& \cong [A,\mathcal{C}](Y,\mathsf{lim}_A(F)) & (\textrm{use }a:X \to \mathsf{lim}_A(F))\\
& \cong [A,\mathcal{C}](\Delta_A(Y),F) & (\textrm{Isomorphism (\ref{eq:limits_colimits:lim})}\\
\end{align*}
This means that, for every cone $\beta: \Delta_A(Y) \Rightarrow F$, we have an arrow $f:X \to Y$ for which the naturality of isomorphism (\ref{eq:limits_colimits:lim}) implies that $\eta_F \circ \Delta_A(a \circ f) = \beta$. As a result, we have $\alpha \circ \Delta_A(f) = \beta$. The arrow $f$ is unique since the previous sequence of isomorphisms from $[A,\mathcal{C}](Y,X)$ to $[A,\mathcal{C}](\Delta_A(Y),F)$ states that $g \mapsto \alpha \circ \Delta_A(g)$ is a bijection.

Conversely, let us show that any cone $\alpha: \Delta_A(X) \Rightarrow F$ that satisfies the property stated in the statement is universal. First, because $\eta_F$ is universal (Remark \ref{rem:universal-cone:adjunction}), the first part of this proof shows that there exists a (unique) morphism $f:\mathsf{lim}_A(F) \to X$ such that $\alpha = \eta_F \circ \Delta_A(f)$. It also follows from the property satisfied by $\alpha$ that there exists a (unique) morphism $f':X \to \mathsf{lim}_A(F)$ such that $\eta_F = \alpha \circ \Delta_A(f')$. This gives us $\alpha =  \alpha \circ \Delta_A(f' \circ f)$. By assumption on $\alpha$, the arrow $f' \circ f$ is unique and is hence equal to the identity on $\mathsf{lim}_A(F)$. Similarly, since $\eta_F$ is universal, we can show that $f \circ f' = \mathsf{id}_X$ and hence $f$ is an isomorphism. Given that $\alpha = \eta_F \circ \Delta_A(f)$, it follows from Definition \ref{def:cones} and the definition of $\eta_F$ that $\alpha$ is universal.
\end{proof}

In general, a full subcategory $L \hookrightarrow [T,\mathbf{Set}]$ whose objects are functors $T \to \mathbf{Set}$ that sends certain chosen cones in $T$ to universal cones in $\mathbf{Set}$ are reflective subcategories (see \cite{freydkelly,kelly_1980,adamek_rosicky_1994,ElimQuot}).

\begin{definition}[Limit sketches]
We will use the term \emph{limit sketch} to refer to a small category $T$ equipped with a subset of its cones. We define a \emph{model} for a limit sketch $T$ as a functor $T \mapsto \mathbf{Set}$ that sends the chosen cones of $T$ to univercal cones in $\mathbf{Set}$. For every limit sketch $T$,  we will denote the category whose objects are models for $T$ and whose arrows are all the natural transformations between them as $\mathbf{Mod}(T)$. This category will be referred to as the category of models for the limit sketch (also called the \emph{theory}) $T$.
\end{definition}

\begin{remark}\label{rem:model_category}
For every limit sketch $T$, the inclusion $\mathbf{Mod}(T) \hookrightarrow [T,\mathbf{Set}]$ defines a reflective subcategory. As a result, Theorem \ref{theo:YonedaLemma3} holds.
\end{remark}

\begin{example}[Magmas]\label{exa:model1}
A \emph{magma} consists of a set $M$ and an operation $\star:M \times M \to M$. A morphism between two magmas $(M_1,\star_1)$ and $(M_2,\star_2)$ consists of a function $m:M_1 \to M_1$ such that the following diagram commutes.
\[
\xymatrix{
M_1 \times M_1\ar[d]_{\star_1}\ar[r]^m &M_2 \times M_2 \ar[d]^{\star_1}\\
M_1 \ar[r]_m& M_2
}
\]
Let us now denote as $T_{\mathsf{magma}}$ the small category generated by the following graph.
\[
\begin{array}{c}
\xymatrix{
d_2 \ar[r]^{f_1} \ar[d]_{f_2} & d_1 \\
d_1 &
}
\end{array}
\begin{array}{c}
\xymatrix{
d_2 \ar[r]^{t} & d_1
}
\end{array}
\]
If we equip $T_{\mathsf{magma}}$ with the cone $f_1,f_2:d_2 \rightrightarrows d_1$, then for every functor $M:T_{\mathsf{magma}} \to \mathbf{Set}$ in $\mathbf{Mod}(T_{\mathsf{magma}})$, the universality of Cartesian products gives the following correspondence:
\[
M(d_2) = M(d_1) \times M(d_1)
\]
As a result, the category of models $M$ in $\mathbf{Mod}(T_{\mathsf{magma}})$ corresponds to the category of magmas and their morphisms, such that the magma operation is provided by function $M(t):M(d_1) \times M(d_1) \to M(d_1)$.
\end{example}

\begin{remark}[Magma of binary trees]\label{rem:magma_binary_trees}
Recall that the Yoneda embedding provides a functor $Y(d_1):T_{\mathsf{magma}} \to \mathbf{Set}$. While this functor does not belong to the category $\mathbf{Mod}(T_{\mathsf{magma}})$, its image $L\circ Y(d_1)$ via the functor $L:[T_{\mathsf{magma}},\mathbf{Set}] \to \mathbf{Mod}(T_{\mathsf{magma}})$ does. The model $L\circ Y(d_1)$ can be recursively constructed from the images of the functor $Y(d_1)$ through a transfinite colimit (see \cite{ElimQuot}). Specifically, this transfinite construct \emph{forces} the presence of missing elements into the model to turn the functor $Y(d_1)$ into a model for the ``theory'' $T_{\mathsf{magma}}$.

To appreciate what the model $L\circ Y(d_1)$ looks like, let us identify the elements missing from the primary set of the functor $Y(d_1)$, specifically $Y(d_1)(d_1) = T_{\mathsf{magma}}(d_1,d_1)$. According to the definition of the limit sketch $T_{\mathsf{magma}}$, the sole arrow $d_1 \to d_1$ in $Y(d_1)(d_1)$ is the identity $\mathsf{id}_{d_1}$. As a result, constructing the model $L\circ Y(d_1)$ necessitates the formal multiplication of $\mathsf{id}_{d_1}$ with itself to be part of the primary set of the model. As outlined in \cite{ElimQuot}, this element is encoded by the tuple $(\mathsf{id}_{d_1},\mathsf{id}_{d_1},t)$, visually represented as a tree:
\[
\xymatrix@-10pt{
\mathsf{id}_{d_1}&&\mathsf{id}_{d_1}\\
&\fbox{$t$}\ar@{-}[ul]\ar@{-}[ur]&
}
\]
Iterating this formal process of generating multiplications with previously generated elements reveals that the model $L\circ Y(d_1)$ can be conceptualized as the magma of binary trees with no linear branches (e.g., no consecutive one-branch forkings):
\[
\xymatrix@C-10pt@R-20pt{
\mathsf{id}_{d_1}&&\mathsf{id}_{d_1}&\\
&\fbox{$t$}\ar@{-}[ul]\ar@{-}[ur]&&\mathsf{id}_{d_1}\\
&&\fbox{$t$}\ar@{-}[ul]\ar@{-}[ur]&
}
\xymatrix@-20pt{
\mathsf{id}_{d_1}&&\mathsf{id}_{d_1}&&\mathsf{id}_{d_1}&&\mathsf{id}_{d_1}\\
&\fbox{$t$}\ar@{-}[ul]\ar@{-}[ur]&&&&\fbox{$t$}\ar@{-}[ul]\ar@{-}[ur]&\\
&&&\fbox{$t$}\ar@{-}[ull]\ar@{-}[urr]&&&
}
\]
This magma $L\circ Y(d_1)$, which we denote as $\mathbb{T}\mathsf{ree}$, can be used to capture elements in any magma $M$ by utilizing the Yoneda Lemma (Theorem \ref{theo:YonedaLemma3}), as expressed below:
\[
\mathbf{Mod}(T_{\mathsf{magma}})(\mathbb{T}\mathsf{ree},M) \cong M
\]
\end{remark}

\begin{example}[Semigroups]\label{exa:model2}
A \emph{semigroup} consists of a magma $(M,\star)$ whose operation is associative. This means that the magma operation must satisfies the following commutative diagram.
\[
\xymatrix@C+20pt{
M \times M \times M\ar[d]_{\mathsf{id}_M\times \star}\ar[r]^-{\star\times \mathsf{id}_M} &M  \times M \ar[d]^{\star}\\
M  \times M \ar[r]_-{\star}& M
}
\]
We can define the small category $T_{\mathsf{sgroup}}$ as the small category generated by the following graph structure.
\[
\begin{array}{c}
\xymatrix{
d_2 \ar[r]^{f_1} \ar[d]_{f_2} & d_1 \\
d_1 &
}
\end{array}
\begin{array}{c}
\xymatrix{
d_3 \ar[r]^{g_1} \ar[d]_{g_3} \ar[rd]_{g_2} & d_1 \\
d_1 & d_1
}
\end{array}
\begin{array}{c}
\xymatrix{
d_3 \ar[r]^{g_2} \ar[d]_{g_3} \ar[rd]^{u_1} & d_1\\
d_1 & d_2 \ar[u]_{f_1} \ar[l]_{f_2}
}
\end{array}
\begin{array}{c}
\xymatrix{
d_3 \ar[r]^{g_1} \ar[d]_{g_2} \ar[rd]^{u_2} & d_1\\
d_1 & d_2 \ar[u]_{f_1} \ar[l]_{f_2}
}
\end{array}
\]
\[
\begin{array}{c}
\xymatrix{
d_2 \ar[r]^{t} & d_1
}
\end{array}
\begin{array}{c}
\xymatrix{
d_3 \ar[r]^{t \circ u_1} \ar[d]_{g_1} \ar[rd]^{v_1} & d_1\\
d_1 & d_2 \ar[u]_{f_1} \ar[l]_{f_2}
}
\end{array}
\begin{array}{c}
\xymatrix{
d_3 \ar[r]^{t \circ u_2} \ar[d]_{g_3} \ar[rd]^{v_2} & d_1\\
d_1 & d_2 \ar[u]_{f_1} \ar[l]_{f_2}
}
\end{array}
\]
\[
\xymatrix{
d_3\ar[d]_{v_1}\ar[r]^-{v_2} &d_2 \ar[d]^{t}\\
d_2 \ar[r]_-{t}& d_1
}
\]
If we equip $T_{\mathsf{sgroup}}$ with the cone $(f_1,f_2)$ and $(g_1,g_2,g_3)$, then for every functor $M:T_{\mathsf{sgroup}} \to \mathbf{Set}$ in $\mathbf{Mod}(T_{\mathsf{sgroup}})$, the universality of Cartesian products gives the following correspondences:
\[
M(d_2) = M(d_1) \times M(d_1) \quad\quad\textrm{and}\quad\quad M(d_3) = M(d_1) \times M(d_1) \times M(d_1)
\]
The universality of products also implies that we have the following equations:
\[
M(v_1) = \mathsf{id}_{M(d_1)} \times M(t)  \quad\quad\textrm{and}\quad\quad M(v_2) = M(t) \times \mathsf{id}_{M(d_1)}
\]
As a result, the category of models $M$ in $\mathbf{Mod}(T_{\mathsf{sgroup}})$ corresponds to the category of semigroups and their morphisms, such that the semigroup operation is provided by function $M(t):M(d_1) \times M(d_1) \to M(d_1)$.
\end{example}

\begin{remark}[Semigroup of positive integers]
In a manner akin to how we obtained a generative model for magmas in Remark \ref{rem:magma_binary_trees}, we can show that the semigroup encoded by the functor $L \circ Y(d_1):T_{\mathsf{sgroup}} \to \mathbf{Set}$ is isomorphic to the magma $\mathbb{T}\mathsf{ree}$ under an associative multiplication. This can be interpreted as the semigroup defined by the set $\mathbb{N}^{+}$ of positive integers.
\end{remark}

\begin{example}[Monoids]\label{exa:model3}
A \emph{monoid} is a semigroup $(M,\star)$ equipped with a neural element for the semigroup operation.  This means that it is equipped with a map $\mathbb{1}: \mathbf{1} \to M$ picking a distinguishing element in $M$ (where $\mathbf{1}$ is a singleton) such that the following diagrams commute.
\[
\xymatrix@C+20pt{
\mathbf{1} \times M \ar[rd]_{\mathsf{pr}_2}\ar[r]^{\mathbb{1} \times \mathsf{id}_M}&  M \times M \ar[d]^{\star} & \ar[l]_{\mathsf{id}_M \times \mathbb{1}} M \times \mathbf{1} \ar[ld]^{\mathsf{pr}_1}\\
&M&
}
\]
Because a singleton such as $\mathbf{1}$ is a limit over the empty set, it is possible to define a limit sketch $T_{\mathsf{monoid}}$ such that the category $\mathbf{Mod}(T_{\mathsf{monoid}})$ corresponds to the category of monoids and their morphisms.
\end{example}

\begin{remark}[Monoid of non-negative integers]
Taking $d_1$ to be the object of $T_{\mathsf{monoid}}$ representing the underlying set of the models in $\mathbf{Mod}(T_{\mathsf{monoid}})$, we can show that the monoid generated by the model $L \circ Y(d_1):T_{\mathsf{monoid}} \to \mathbf{Set}$ is the set $\mathbb{N}$ of non-negative integers.
\end{remark}

\begin{example}[Groups]\label{exa:model4}
A \emph{group} is a  monoid $(M,\star,\mathbb{1})$ equipped with an inverse operation $u:M \to M$ for the multiplication $\star$. This means that the following diagram must commute.
\[
\xymatrix@C+20pt{
M \times M \ar[d] \ar[r]^{u \times \mathsf{id}_M} & M \times M \ar[d]^{\star} & \ar[l]_{\mathsf{id}_M \times u} M \times M \ar[d]\\
\mathbf{1} \ar[r]_{\mathbb{1}} &M& \mathbf{1} \ar[l]^{\mathbb{1}}
}
\]
Similarly to what has been done so far with other examples of categories of models (Examples \ref{exa:model1}, \ref{exa:model2}, \ref{exa:model3}), we can show that there is a limit sketch $T_{\mathsf{group}}$ for which the category $\mathbf{Mod}(T_{\mathsf{group}})$ corresponds to the category of groups and their morphisms.
\end{example}

\begin{remark}[Group of integers]
Taking $d_1$ to be the object of $T_{\mathsf{group}}$ representing the underlying set of the models in $\mathbf{Mod}(T_{\mathsf{group}})$, we can show that the monoid generated by the model $L \circ Y(d_1):T_{\mathsf{group}} \to \mathbf{Set}$ is the set $\mathbb{Z}$ of all integers.
\end{remark}

\begin{example}[Commutative structures]\label{exa:model5}
The theories defined in \ref{exa:model1}, \ref{exa:model2}, \ref{exa:model3} and \ref{exa:model4} can be extended to commutative structures by adding structural maps that recover the usual symmetry axiom shown below, where $\gamma$ is the universal symmetry bijection.
\[
\xymatrix{
M \times M \ar[d]_{\star} \ar[r]^{\gamma}_{\cong} & M \times M \ar[d]^{\star}\\
M\ar@{=}[r]&M
}
\]
Below, we will denote the resulting limit sketch for commutative groups as $T_{\mathsf{cgroup}}$.
\end{example}

\begin{remark}[Group of integers]\label{rem:model5}
Taking $d_1$ to be the object of $T_{\mathsf{cgroup}}$ representing the underlying set of the models in $\mathbf{Mod}(T_{\mathsf{cgroup}})$, we can show that the commutative group generated by the model $L \circ Y(d_1)$ is the set $\mathbb{Z}$ of all integers.
\end{remark}

\begin{convention}[Notation]\label{conv:modulo-set}
For every positive integer $n$, we will denote as $\mathbb{Z}_n$ the commutative group whose elements are given by the set $[n-1]$ and whose group operation is given by the addition of integers modulo $n$, namely:
\[
\begin{array}{ccl}
\mathbb{Z}_n \times \mathbb{Z}_n & \to & \mathbb{Z}_n\\
(x,y) & \mapsto & x + y\,(\mathsf{mod}\,n)
\end{array}
\]
The inverse operation for this addition is the negation operation modulo $n$, namely $x \mapsto n-x$.
\end{convention}

\subsection{Modules}\label{ssec:modules}
A central application of the Yoneda Lemma in our context involves leveraging the limit sketch of modules over a given ring. The aim of this section is to examine the algebraic properties associated with these objects and formulate key propositions that will be instrumental in transforming cryptosystems over polynomials into Yoneda encryption schemes. To define module structures, we will build upon the concepts introduced in the previous section as well as the definition of rings (Definition \ref{def:rings}).

\begin{definition}[Rings]\label{def:rings}
A \emph{ring} is defined as a set $R$ equipped with:
\begin{itemize}
\item[1)] a commutative group structure $(R, \oplus, \mathbb{0})$;
\item[2)] and a monoid structure $(R, \otimes, \mathbb{1})$,
\end{itemize}
such that the two diagrams shown below commute, where $\delta: R \to R \times R$ denotes the obvious diagonal morphism and $\gamma: R \!\times\! R\! \times\! R\! \times\! R \to R \!\times\! R\! \times\! R\! \times\! R$ denotes the symmetry $(a,b,c,d) \mapsto (a,c,b,d)$:
\[
\xymatrix@C+25pt{
R \!\times\! R \!\times\! R \ar[r]^-{\gamma \circ (\delta \times \mathsf{id}_{(R \times R)})}\ar[d]^{\mathsf{id}_R \times \oplus} & *+!L(0.7){R \!\times\! R\! \times\! R\! \times\! R} \ar[r]^-{\otimes \times \otimes} & R \!\times\! R\ar[d]_{\oplus}\\
R \!\times\! R \ar[rr]_{\otimes} && R
}
\quad
\xymatrix@C+25pt{
R \!\times\! R \!\times\!R \ar[r]^-{\gamma \circ (\mathsf{id}_{(R \times R)} \times \delta)}\ar[d]^{\oplus \!\times\! \mathsf{id}_R} & *+!L(0.7){R \!\times\! R\! \times\! R\! \times\! R} \ar[r]^-{\otimes \times \otimes} & R \times R\ar[d]_{\oplus}\\
R \!\times\! R \ar[rr]_{\otimes} && R
}
\]
These diagrams enforce that the monoid multiplication $\otimes$ is distributive with respect to the group addition $\oplus$. This means that the equations $a \otimes (b+c) = a \otimes b + a \otimes c$ and $(b+c) \otimes a = b \otimes a + c \otimes a$ hold for every triple $(a,b,c)$ of elements in $R$.
\end{definition}

\begin{convention}[Limit sketches]
In much the same fashion as we derived limit sketches for theories characterized by commutative diagrams in section \ref{ssec:limit-sketch}, it is straightforward to show that rings are precisely described by models of a limit sketches $T_{\mathsf{ring}}$.
\end{convention}

\begin{remark}[Yoneda objects]
Let $d_1$ denote the object of $T_{\mathsf{ring}}$ that captures the underlying set of the models in $\mathsf{Mod}(T_{\mathsf{ring}})$. It follows from the distributivity axioms that the model $L \cdot Y(d_1):T_{\mathsf{ring}} \to \mathbf{Set}$ is the ring of integers given by $\mathbb{Z}[X]$.
\end{remark}

\begin{remark}[Morphisms]
A morphism of rings is defined as a natural transformation between models in $\mathsf{Mod}(T_{\mathsf{ring}})$. The Yoneda Lemma (Theorem \ref{theo:YonedaLemma3}) gives the isomorphism:
\[
\mathsf{Mod}(T_{\mathsf{ring}})(\mathbb{Z}[X],R) \cong R.
\]
At a fundamental level, this isomorphism means that a morphism $f$ of the form $\mathbb{Z}[X] \to R$ is characterized by an evaluation at a specific element $r \in R$. Indeed, since the ring $R$ contains all integer representatives, which are derived from the unit $\mathbb{1}$, the underlying function defining $f$ is given by the mapping $p(X) \mapsto p(r)$ where $p(X)$ is a polynomial over integers and $p(r)$ is its evaluation at $r \in R$.
\end{remark}

We now leave the realm of rings and mostly uses the fact that every ring $(R,\oplus,\otimes,\mathbb{0},\mathbb{1})$ provides an underlying commutative group  $(R,\oplus,\mathbb{0})$.

\begin{convention}[Powers]\label{conv:ring-as-group:powers}
For every non-negative integer $n$ and ring $(R,\oplus,\otimes,\mathbb{0},\mathbb{1})$, we will denote the $n$-fold Cartesian product of $R$ in $\mathbf{Mod}(T_{\mathsf{cgroup}})$ as $R^{(n)}$. For a chosen limit structure on $\mathbf{Mod}(T_{\mathsf{cgroup}})$, this commutative group is the limit $\mathsf{lim}_{[n]}(F)$ of the functor $F:[n] \to \mathbf{Mod}(T_{\mathsf{cgroup}})$ picking out $n$ copies of the commutative group $(R,\oplus,\mathbb{0})$. We can show that the underlying set for the resulting group structure is given by an $n$-fold Cartesian product of $R$ in $\mathbf{Set}$, and the associated addition is a componentwise extension of the addition $\oplus$.
\end{convention}

\begin{convention}[From rings to limit sketches]\label{conv:ring-to-sketch}
For every ring $(R,\oplus,\otimes,\mathbb{0},\mathbb{1})$, we will denote as $T_{R}$ the small whose objects are non-negative integers and whose morphisms $n \to m$ are given by group morphisms
\[
f = (f_1,\dots,f_m):R^{(n)} \to R^{(m)}
\]
such that, for every $k \in [m]$, the Cartesian component $f_k:R^{(n)} \to R$ satisfies the following equation in the ring $(R,\oplus,\otimes,\mathbb{0},\mathbb{1})$ for every$(r_1,\dots,r_n) \in R^{(n)}$:
\[
f_k(r_1,\dots,r_n) = f_k(\mathbb{1},\mathbb{0},\dots,\mathbb{0}) \otimes r_1 \oplus \dots \oplus f_k(\mathbb{0},\dots,\mathbb{0},\mathbb{1}) \otimes r_n
\]
It is straightforward to verify that this notion of morphisms defines a category structure for $T_R$. Indeed, the morphism $f$ defined above can be likened to a matrix product in $R$ and since matrix products in $R$ are associative, the composition of morphisms in $T_R$ is also associative. In other words, the morphisms of $T_R$ are linear maps of the form shown in (\ref{eq:matrix:T_R}), where $M$ is a $(m \times n)$- matrix in $R$ and where $e_i$ is the element $(\mathbb{0},\dots,\mathbb{0},\mathbb{1},\mathbb{0},\dots,\mathbb{0}) \in R^{(n)}$ whose $i$-th coefficient is $\mathbb{1}$.
\begin{equation}\label{eq:matrix:T_R}
f(r) = Mr = \Big(\bigoplus_{j=1}^n f_i(e_j) \otimes r_j\Big)_i\quad\quad M = (f_i(e_j))_{i,j})
\end{equation}
The category $T_R$ includes morphisms of the following form (where $s$ denotes any element in $R$):
\[
s:
\left(
\begin{array}{lll}
R^{(1)}&\to&R^{(1)}\\
r & \mapsto & s \otimes r
\end{array}
\right)
\quad\quad
\oplus:
\left(
\begin{array}{lll}
R^{(2)}&\to&R^{(1)}\\
(r,s) & \mapsto & r \oplus s
\end{array}
\right)
\quad\quad
\mathsf{pr}_i^n:
\left(
\begin{array}{lll}
R^{(n)}&\to&R^{(1)}\\
(r_1,\dots,r_n) & \mapsto & r_i
\end{array}
\right)
\]
We will assume that the small category $T_R$ is equipped with a limit sketch structure whose cones are the collections $(\mathsf{pr}_i^n:R^{(n)} \to R^{(1)})_{i = 1}^n$ for every non-negative integer $n$. Note that when $n=0$, we have $R^{(n)} = \{0\}$ and the resulting cone is given by the obvious inclusion $0:\{0\} \hookrightarrow R$.

According to the definition of $T_R$, we have an isomorphism $T_R(1,1) \cong R$. While this might imply that the functor $Y(1):T_R \to \mathbf{Set}$ represents the ring $R$, we will in fact see that this correspondence imbues the functor $Y(1)$ with a ``module structure'' over the ring $R$ (see Definition \ref{def:modules}). To expand on this, observe that the definition of $T_R$ yields the following isomorphisms for every non-negative integer $n$:
\[
T_R(1,n) \cong R^{(n)} \cong T_R(1,1) \times T_R(1,1) \times \dots \times T_R(1,1)
\]
At an elementary level,  this isomorphism asserts that the image of the cone $(\mathsf{pr}_i^n:R^{(n)} \to R^{(1)})_{i = 1}^n$ via the functor $k \mapsto T_R(1,k)$ is a universal cone in $\mathbf{Set}$. As a result, we conclude that the functor $Y(1):T_R \to \mathbf{Set}$ defines an object in $\mathbf{Mod}(T_R)$.
\end{convention}

\begin{definition}[Modules]\label{def:modules}
For every ring $(R,\oplus,\otimes,\mathbb{0},\mathbb{1})$, we define an \emph{$R$-module} as an object in $\mathbf{Mod}(T_R)$. It follows from Convention \ref{conv:ring-to-sketch} that, for every $R$-module $M:T_R \to \mathbf{Set}$, the arrows
\[
s:R^{(1)} \to R^{(1)} \quad\quad\quad\oplus:R^{(2)} \to R^{(1)}\quad\quad\quad\mathsf{pr}_0^1:R^{(0)} \to R^{(1)}
\]
in $T_R$ are sent via the functor $M$ to functions of the form
\[
s \odot :M(1) \to M(1)\quad\quad\quad \oplus_M:M(1) \times M(1) \to M(1) \quad\quad\quad \mathbb{0}_M:\mathbf{1} \to M(1)
\]
making the following diagram commute, where $\delta:M(1) \to M(1) \times M(1)$ the obvious diagonal morphism.
\[
\xymatrix@C+20pt{
 M(1) \times M(1) \ar[r]^{(r \odot) \times (s \odot)} &M(1) \times M(1) \ar[d]^{\oplus_M} \\
M(1) \ar[u] ^{\delta}\ar[r]_{(r \oplus s) \odot}& M(1)
}
\quad\quad\quad
\xymatrix@C+20pt{
 M(1) \times M(1)  \ar[r]^{(r \odot) \times (r \odot)}  \ar[d]^{\oplus_M}&M(1) \times M(1) \ar[d]^{\oplus_M} \\
M(1) \ar[r]_{(r \odot}& M(1)
}
\]
\[
\xymatrix@C+20pt{
M(1)\ar[r]^{r \odot} \ar[dr]_{(s \otimes r) \odot } &M(1) \ar[d]^{s \odot} \\
& M(1)
}
\quad\quad\quad
\xymatrix@C+20pt{
\mathbf{1} \ar[r]^{\mathbb{0}_M} \ar[dr]_{\mathbb{0}_M} & M(1) \ar[d]^{r \odot} \\
 & M(1)
}
\]
The four preceding diagrams encapsulate the four types of axioms typically stipulated for $R$-modules, as conventionally defined in the literature.
\end{definition}

\begin{remark}[Yoneda objects]\label{rem:Yoneda-object:modules}
The conclusions of Convention \ref{conv:ring-to-sketch} imply that, for every ring $(R,\oplus,\otimes,\mathbb{0},\mathbb{1})$, the functor $Y(1):T_R \to \mathbf{Set}$ recovers the $R$-module structure defined the commutative group $(R,\oplus,\mathbb{0})$. Specifically, for every element $s \in R$, the function $Y(1)(s):R \to R$ recovers the multiplication $r \mapsto s \otimes r$ and the function $Y(1)(\oplus):R \times R \to R$ recovers the addition $(r,s) \mapsto r \oplus s$.

Similarly, for every non-negative integer $n$, we can show that $Y(n):T_R \to \mathbf{Set}$ recovers the $R$-module defined by the power $R^{(n)}$. Interestingly, it turns out that the object $Y(n)$ is also the $n$-fold coproduct of the object $Y(1)$ in $\mathbf{Mod}(T_R)$. This statement is more formally verified through the Yoneda Lemma. Indeed, Theorem \ref{theo:YonedaLemma3} gives us the following isomorphism for every object $M$ in $\mathbf{Mod}(T_R)$:
\[
\mathbf{Mod}(T_R)(Y(n),M) \cong M(n)
\]
However, by the definition of a model in $\mathbf{Mod}(T_R)$, the set $M(n)$ is naturally in bijection with the $n$-fold product $M(1) \times M(1) \times \dots \times M(1)$. As a result, we have the following natural isomorphism.
\begin{equation}\label{iso:Yoneda:example:colimits}
\mathbf{Mod}(T_R)(Y(n),M) \cong M(1) \times M(1) \times \dots \times M(1)
\end{equation}
Here, some category-theoretic arguments could show that the object $Y(n)$ is the $n$-fold coproduct of the object $Y(1)$ in $\mathbf{Mod}(T_R)$, thereby providing another explanation for the previous isomorphism in the context of Theorem \ref{theo:YonedaLemma3}.
\end{remark}

\begin{remark}[Colimits in Yoneda Lemma]
Let $R$ be a ring. In Remark \ref{rem:Yoneda-object:modules}, we observed that the Yoneda functor $Y(n):T_R \to \mathbf{Set}$ corresponds to the $n$-fold coproduct of $Y(1):T_R \to \mathbf{Set}$. Consequently, it might seem like the use of colimits in Theorem \ref{theo:YonedaLemma3} is excessive, and its statement could be limited to Yoneda functors of the form $Y(k)$. However, this expectation should not be considered a general rule. For instance, for practical reasons, one might want to confine the limit sketch $T_R$ to the objects $0$, $1$, $2$ and $3$. While this restriction would be sufficient to recover the conventional definition of $R$-modules, it would lack the objects $4$, $5$, $\dots$, $n$, etc. As a result, it would be necessary to demonstrate that $Y(n)$ is a coproduct of $Y(1)$ to obtain isomorphism (\ref{iso:Yoneda:example:colimits}).
\end{remark}

\section{Cryptosystems and the Yoneda Lemma}\label{sec:Cryptosystems-and-Yoneda}

\subsection{Yoneda encryption}\label{ssec:Yoneda-encryption}
This section establishes a cryptosystem based on the Yoneda Lemma, as stated in Theorem \ref{theo:YonedaLemma3}. In the upcoming sections, we will explain how this overarching cryptosystem can reconstruct other cryptosystems. Below, we consider a small category $T$ and a reflective subcategory denoted by $U:\mathcal{L} \hookrightarrow [T,\mathbf{Set}]$ with a left adjoint $L:[T,\mathbf{Set}] \to \mathcal{L}$.
\smallskip

Before we proceed to define a Yoneda cryptosystem (Definition \ref{def:Yoneda_cryptosystem}), let us revisit the components of conventional cryptosystems. These systems typically involve three sets of elements:
\begin{itemize}
\item[-] a \emph{plaintext space} $M_1$ containing messages intended for transmission,
\item[-] a \emph{ciphertext space} $M_2$ encompassing encrypted messages used for secure communication,
\item[-] and a \emph{key space} $K$ housing both public and private keys.
\end{itemize}
Additionally, these cryptosystems feature two $K$-indexed collections of functions. One collection describes an encryption algorithm $\mathsf{E}_k: M_1 \to M_2$, responsible for transforming messages into encrypted form using a key $k$ drawn from the set $K$. The other is a decryption algorithm $\mathsf{D}_k: M_2 \to M_1$, which performs the inverse operation, decrypting the ciphertext back into the original message using some key $k$ drawn from the set $K$. In asymmetric cryptography, the keys used for encryption and decryption must differ.
\[
\fbox{
\xymatrix@C+30pt@R-20pt{
\mathsf{Bob} \ar[rr]^-{\mathsf{Channel}} & & \mathsf{Alice}\\
m \ar@{|->}[rr]^-{E_k(m)} && *+!L(0.5){\begin{array}{c}D_{k'}(E_k(m)) \\ \rotatebox[origin=c]{90}{$=$}\\m\end{array}}
}
}
\]
To encompass a wide range of HE schemes, Yoneda cryptosystems exhibit a slightly greater level of generality than the conventional definition of cryptosystems provided earlier. For further clarification on this increased generality, refer to Remark \ref{rem:Yoneda_cryptosystem}.

\begin{definition}[Yoneda cryptosystems]\label{def:Yoneda_cryptosystem}
Let $(F,G)$ be a pair of objects in $\mathcal{L}$, let $(M_1,M_2)$ be a pair of sets, and let $e$ be an object in $T$. We say that the tuple $(F,G,M_1,M_2)$ defines a \emph{Yoneda cryptosystem} in $\mathcal{L}$ at the object $e$ if it is equipped with
\begin{itemize}
\item[a)] a set $R$ and an $R$-indexed collection $(\mathsf{E}_r)_{r \in R}$ of functions $\mathsf{E}_r:G(e) \times M_1 \to M_2$;
\item[b)] and a partial function $\mathsf{D}:G(e) \times M_2 \to M_1$,
\end{itemize}
such that every element $f \in F(e)$, there exists an element $f'\in F(e)$ for which the following equation holds for every element $r \in R$, every morphism $h \in \mathcal{L}(F,G)$ and every element $m \in M_1$.
\begin{equation}\label{eq:Yoneda-cryptosystem:D_h_E_H_m}
\mathsf{D}(h_e(f),\mathsf{E}_r(h_e(f'),m)) = m
\end{equation}
We shall refer to the resulting cryptosystem by the notation $\mathcal{Y}(F,G,M_1,M_2|R,\mathsf{E},\mathsf{D})$.
\end{definition}

\begin{convention}[Restricted Yoneda cryptosystem]\label{conv:restricted:Yoneda}
In practice, the use of Definition \ref{def:Yoneda_cryptosystem} does not require equation (\ref{eq:Yoneda-cryptosystem:D_h_E_H_m}) to hold for all morphisms $h \in \mathcal{L}(F,G)$. In fact, we will see two instances (in section \ref{ssec:NTRU} and section \ref{sec:FHE:from-Yoneda}) where we need to restrict equation (\ref{eq:Yoneda-cryptosystem:D_h_E_H_m}) to a subset $M \subseteq \mathcal{L}(F,G)$. For this reason, we will say that a Yoneda cryptosystem is \emph{restricted} along a subset $M \subseteq \mathcal{L}(F,G)$ if Definition \ref{def:Yoneda_cryptosystem} and, more specifically, its equation (\ref{eq:Yoneda-cryptosystem:D_h_E_H_m}) only holds for morphisms $h \in M$.
\end{convention}

\begin{remark}[Context]\label{rem:Yoneda_cryptosystem}
Let us clarify the notations introduced in Definition \ref{def:Yoneda_cryptosystem}. In this respect, let $\mathcal{Y}(F,G,M_1,M_2|R,\mathsf{E},\mathsf{D})$ be a Yoneda cryptosystem in $\mathcal{L}$ at an object $e$ in $T$, restricted along a subset $M \subseteq \mathcal{L}(F,G)$. Since $\mathcal{L}$ is a full subcategory of $[T, \mathbf{Set}]$, we can evaluate any object and morphism of $\mathcal{L}$ at the object $e$. Consequently, each morphism $h \in M$ induces a function as follows.
\[
h_e : F(e) \to G(e)
\]
This is means that, for every element $f$ in $F(e)$, we can construct a element $h_e(f)$ in $G(e)$. This construction is important role in the Yoneda cryptosystem as the key space of this cryptosystem is given by the set $G(e)$. Meanwhile, the set $F(e)$ serves as a parameter set that one can use to construct keys through morphisms of the form $h:F \Rightarrow G$ in $\mathcal{L}$.
As a result, elements like $h_e(f)$ will be used in the functions $\mathsf{E}_r : G(e) \times M_1 \to M_2$ to encode messages in $M_1$, and the function $\mathsf{D} : G(e) \times M_2 \to M_1$ will be used decrypt the resulting ciphertexts. For a given element $f \in F(e)$, the corresponding element $f' \in F(e)$ mentioned at the end of Definition \ref{def:Yoneda_cryptosystem} should be seen as a noise operation hiding important information contained in the element $f$. This noise operation can be seen as a function
\[
\alpha : \left( \begin{array}{lll} F(e) &\to &F(e)\\ f & \mapsto & f' \end{array}\right).
\]
that makes the diagram below commute for every $r \in R$ and every $h \in M$, where $\delta$ denotes the Cartesian diagonal function $F(e) \times F(e) \to F(e)$ in $\mathbf{Set}$, and $\mathsf{pr}_{2}$ denotes the Cartesian projection $F(e) \times M_1 \to M_1$:
\[
\xymatrix@C+40pt{ F(e) \times M_1 \ar[d]_{\mathsf{pr}_2}\ar[r]^-{\delta \times \mathsf{id}_{M_1}} & F(e) \times F(e) \times M_1 \ar[r]^-{h_e \times (h_e \circ \alpha) \times \mathsf{id}_{M_1}} & G(e) \times G(e) \times M_1 \ar[d]^{\mathsf{id}_{G(e)}\times \mathsf{E}_r} \\
M_1  && G(e) \times M_2 \ar[ll]^-{\mathsf{D}} }
\]
In the sequel, the function $\alpha$ will not be explicitly defined, as only the existence of an element $f'$ for a given $f$ is required. This justifies the introduction of Convention \ref{conv:reversors}.
\end{remark}

\begin{convention}[Reversors]\label{conv:reversors}
Let $\mathcal{Y}(F,G,M_1,M_2|R,\mathsf{E},\mathsf{D})$ be a Yoneda cryptosystem in $\mathcal{L}$ at an object $e$ in $T$, restricted along a subset $M \subseteq \mathcal{L}(F,G)$. For every element $f \in F(e)$, we define the subset $\mathcal{R}(f)$ of $F(e)$ as follows:
\[
\mathcal{R}(f) = \{f'\in F(e)~|~\forall r \in R,\,\forall h \in M,\,\forall m \in M_1:\mathsf{D}(h_e(f),\mathsf{E}_r(h_e(f'),m)) = m\}
\]
For every element $f \in F(e)$, an element in $\mathcal{R}(f)$ will be referred to as a \emph{reversor of $f$}.
\end{convention}

In the remainder of this section, we delve into a formal examination of the application of Yoneda cryptosystems, encompassing key generation, key publication, encryption, and decryption steps. Each of these steps will serve as a foundation for describing the other cryptosystems contained in this paper.
\bigskip

Let us start with the key generation step, which entails configuring parameters for the publication of the public key. This public key is intended for use by every sending party, commonly referred to as $\mathsf{Bob}$, to encrypt messages destined for transmission to the receiving party, usually referred to as $\mathsf{Alice}$.

\begin{generation}[Yoneda Encryption Scheme]\label{gen:YES}
Let $\mathcal{Y}(F,G,M_1,M_2|R,\mathsf{E},\mathsf{D})$ be a Yoneda cryptosystem in $\mathcal{L}$ at an object $e$ in $T$, restricted along a subset $M \subseteq \mathcal{L}(F,G)$. The key generation step for this cryptosystem is defined as follows.
$\mathsf{Alice}$ chooses:
\begin{itemize}
\item[-] a small category $I$ and a functor $H:I \to T^{\mathsf{op}}$;
\item[-] an element $f_0$, called the \emph{initializer}, in the limit $\mathsf{lim}_{I^{\mathsf{op}}}(F\circ H^{\mathsf{op}})$ computed in $\mathbf{Set}$;
\item[-] and an element $x$, called the \emph{private key}, in the set $(\mathsf{col}_I(L \circ Y \circ H))(e)$, where the colimit $\mathsf{col}_I(L \circ Y \circ H)$ is computed in $\mathcal{L}$ and evaluated at the object $e$ in $T$.
\end{itemize}
At this point, the tuple $(I,F,f_0,x)$ is solely known to $\mathsf{Alice}$. Nevertheless, the subsequent publication step will elucidate which components are intended to be considered public and private information.
\end{generation}

\begin{remark}[Key generation]\label{rem:key-generation}
The setup introduced in Generation \ref{gen:YES} is closely related to the statement of Theorem \ref{theo:YonedaLemma3}. Indeed, note that any element $f_0$ taken in the limit $\mathsf{lim}_{I^{\mathsf{op}}}(F\circ H^{\mathsf{op}})$ defines an element in the image of the bijection $\phi_{H,F}$ (see Theorem \ref{theo:YonedaLemma3}).
\[
\phi_{H,F}:\mathcal{L}(\mathsf{col}_I(L \circ Y \circ H),F) \mapsto \mathsf{lim}_{I^{\mathsf{op}}}(U(F ) \circ H^{\mathsf{op}}),
\]
This means that the inverse image $\phi^{-1}_{H,F}(f_0)$ of the initializer $f_0$ defines a morphism $\mathsf{col}_I(L \circ Y \circ H) \Rightarrow F$ in $\mathcal{L}$. We can then use this morphism to generate a function as follows.
\[
\phi^{-1}_{H,F}(f_0)_{e}:\mathsf{col}_I(L \circ Y \circ H)(e) \to F(e)
\]
We can then use this function to send the private key $x \in \mathsf{col}_I(L \circ Y \circ H)(e)$ to an element $\phi^{-1}_{H,F}(f_0)_{e}(x)$ in the set $F(e)$.
\end{remark}

\begin{publication}[Yoneda Encryption Scheme]\label{pub:YES}
Let $\mathcal{Y}(F,G,M_1,M_2|R,\mathsf{E},\mathsf{D})$ be a Yoneda cryptosystem in $\mathcal{L}$ at an object $e$ in $T$, restricted along a subset $M \subseteq \mathcal{L}(F,G)$. Suppose that we have a key generation $(H,f_0,x)$ as defined in Generation \ref{gen:YES}. The publication step for such a cryptosystem is as follows. $\mathsf{Alice}$ does the following:
\begin{itemize}

\item[-] she chooses a reversor $f' \in \mathcal{R}(f)$ for the element $f := \phi^{-1}_{H,F}(f_0)_{e}(x)$ (refer to Remark \ref{rem:key-generation});
\item[-] and she sends the tuple $(f_0,f')$ to $\mathsf{Bob}$. This tuple defines the \emph{public key} of the cryptosystem.
\end{itemize}
At this point, the tuple $(f_0,f')$ along with the cryptosystem data $(F,G,M_1,M_2,R,\mathsf{E},\mathsf{D},e)$ is deemed public and, therefore, known to all parties.
\end{publication}

The next step (Encryption \ref{enc:YES}) will explain how $\mathsf{Bob}$ can send a message to $\mathsf{Alice}$ securely. Note that, at this point, only the private key $x$ defined in Generation \ref{gen:YES} is considered to be unknown to $\mathsf{Bob}$ and any third party.

\begin{remark}[Key Publication]\label{rem:key-publication}
Let $\mathcal{Y}(F,G,M_1,M_2|R,\mathsf{E},\mathsf{D})$ be a Yoneda cryptosystem in $\mathcal{L}$ at an object $e$ in $T$, restricted along a subset $M \subseteq \mathcal{L}(F,G)$. Suppose that we have a key generation $(H,f_0,x)$ as defined in Generation \ref{gen:YES}. The details outlined in Publication \ref{pub:YES} indicate that the Yoneda Lemma isomorphism $\phi^{-1}_{H,F}$ is intended for use in tandem with the initializer $f_0$ and the private key $x$. This implies that the application of the Yoneda Lemma is relevant only to $\mathsf{Alice}$ and essentially establishes an encryption barrier between $\mathsf{Alice}$ and other parties.

Although $\mathsf{Bob}$ cannot directly utilize the Yoneda Lemma isomorphism $\phi^{-1}_{H,F}$, he can leverage its image space as a proxy. For instance, consider that the initializer $f_0$ is an element of the limit $\mathsf{lim}_{I^{\mathsf{op}}}(F \circ H^{\mathsf{op}})$. Now, observe that for every morphism $h: F \Rightarrow G$ in $\mathcal{L}$, we have a corresponding morphism
\[
h_H: F \circ H^{\mathsf{op}} \Rightarrow G \circ H^{\mathsf{op}}.
\]
This morphism enables the construction of a universal function
\[
\mathsf{lim}_{I^{\mathsf{op}}}(h_H)(f_0): \mathsf{lim}_{I^{\mathsf{op}}}(F \circ H^{\mathsf{op}}) \rightarrow \mathsf{lim}_{I^{\mathsf{op}}}(G \circ H^{\mathsf{op}}),
\]
allowing $\mathsf{Bob}$ to transform the element $f_0$ in specific ways. $\mathsf{Alice}$ can then simulate this transformation on an element of $F(e)$ using the naturality properties of the Yoneda Lemma.
\end{remark}

\begin{encryption}[Yoneda Encryption Scheme]\label{enc:YES}
Let $\mathcal{Y}(F,G,M_1,M_2|R,\mathsf{E},\mathsf{D})$ be a Yoneda cryptosystem in $\mathcal{L}$ at an object $e$ in $T$, restricted along a subset $M \subseteq \mathcal{L}(F,G)$. We will consider a private key $x$, as defined in Generation \ref{gen:YES}, and a public key $(f_0,f')$, as defined in Publication \ref{pub:YES}. The encryption step for such a cryptosystem unfolds as follows. $\mathsf{Bob}$ initiates the process by selecting:
\begin{itemize}
\item[-] an element $r \in R$, called the \emph{noise parameter};
\item[-] a morphism $h \in M$, called the \emph{garbling operation};
\item[-] and an element $m \in M_1$, identified as the \emph{message}.
\end{itemize}
Subsequently, $\mathsf{Bob}$ transmits the following information to $\mathsf{Alice}$:
\begin{itemize}
\item[-] the element $c_1 = \mathsf{lim}_{I^\mathsf{op}}(h_H)(f_0)$ in $\mathsf{lim}_{I^{\mathsf{op}}}(G \circ H^{\mathsf{op}})$ (refer to Remark \ref{rem:key-publication});
\item[-] and the element $c_2 = \mathsf{E}_r(h_{e}(f'),m)$ in $M_2$.
\end{itemize}
The tuple $(c_1, c_2)$ collectively constitutes the \emph{encryption} of the message $m$.
\end{encryption}

The decryption step (Decryption \ref{dec:YES}) hinges on the intrinsic naturality property of the Yoneda Lemma. In essence, this property empowers $\mathsf{Alice}$ to decipher the message encrypted by $\mathsf{Bob}$ without requiring any knowledge about the morphism $h: F \Rightarrow G$ selected by $\mathsf{Bob}$ in Encryption \ref{enc:YES}.

\begin{decryption}[Yoneda Encryption Scheme]\label{dec:YES}
Let $\mathcal{Y}(F,G,M_1,M_2|R,\mathsf{E},\mathsf{D})$ be a Yoneda cryptosystem in $\mathcal{L}$ at an object $e$ in $T$, restricted along a subset $M \subseteq \mathcal{L}(F,G)$. We will consider a private key $x$, as defined in Generation \ref{gen:YES}, and a public key $(f_0,f')$, as defined in Publication \ref{pub:YES}. For every encrypted message $(c_1,c_2)$ sent by $\mathsf{Bob}$, as defined in Encryption \ref{enc:YES}, the decryption step unfolds as follows. $\mathsf{Alice}$ computes:
\begin{itemize}
\item[-] the element $d = \phi_{H,G}^{-1}(c_1)_{e}(x)$ in $G(e)$;
\item[-] and the element $m' = \mathsf{D}(d,c_2)$ in $M_1$;
\end{itemize}
The element $m' \in M_1$ is well-defined and corresponds to the original message $m$ sent by $\mathsf{Bob}$, as shown in the sequence of equations below:
\begin{align*}
m' &= \mathsf{D}(d,c_2)&\\
& =  \mathsf{D}(\phi_{H,G}^{-1}(c_1)_{e}(x),c_2)&(d = \phi_{H,G}^{-1}(c_1)_{e}(x))\\
& =  \mathsf{D}(\phi_{H,G}^{-1}(\mathsf{lim}_{I^\mathsf{op}}(h_H)(f_0))_{e}(x),c_2)&(c_1 = \mathsf{lim}_{I^\mathsf{op}}(h_H)(f_0))\\
& =  \mathsf{D}((h \circ \phi_{H,F}^{-1}(f_0))_{e}(x),c_2)&(\textrm{Yoneda isomorphism naturality})\\
& =  \mathsf{D}(h_{e}(\phi_{H,F}^{-1}(f_0)_{e}(x)),c_2)&(\textrm{Natural transformations})\\
& =  \mathsf{D}(h_{e}(f),c_2)&(f = \phi_{H,F}^{-1}(f_0)_{e}(x))\\
& =  \mathsf{D}(h_{e}(f),\mathsf{E}_r(h_{e}(f'),m))&(c_2 = \mathsf{E}_r(h_{e}(f'),m))\\
&= m&(f' \in \mathcal{R}(f))
\end{align*}
\end{decryption}

This concludes the definition of the Yoneda Encryption Scheme. In the following sections, we will demonstrate how this scheme can recover more specific schemes. We will commence by exploring the ElGamal encryption scheme.

\subsection{ElGamal encryption}\label{ssec:ElGamal}
The ElGamal cryptosystem, introduced by Taher Elgamal in 1985, is based on the Diffie–Hellman key exchange protocol \cite{survey1,ElGamal_OG}. The mathematical foundation of this cryptosystem relies on the complexity of the discrete logarithm problem. Throughout this section, we demonstrate how to define a Yoneda cryptosystem that recovers the ElGamal cryptosystem.

\begin{generation}[ElGamal]\label{gen:ElGamal}
The key generation step for the ElGamal cryptosystem is defined as follows. $\mathsf{Alice}$ chooses:
\begin{itemize}
\item[-] a cyclic group $G$ of order $q$ with generator $g$;
\item[-] a (random) private key $x \in [q-1]$
\end{itemize}
Given that the discrete logarithm problem in $G$ extends to all cyclic subgroups of $G$, the element $g$ is not strictly required to be a generator of the overall group structure $G$ (refer to Remark \ref{rem:generator:ElGamal} for further clarification).
\end{generation}

\begin{remark}[Generator]\label{rem:generator:ElGamal}
As highlighted in Generation \ref{gen:ElGamal}, the element $g$ need not be a generator of the group $G$. This flexibility arises from the fact that solving a discrete logarithm problem of the form $g^x = f$ inherently implies that the element $y$ resides in the cyclic subgroup $\langle g \rangle := \{1, g, g^2, \dots, g^q\}$. Consequently, even if $g$ is not a generator, the cryptosystem data can effortlessly adapt to the cyclic subgroup $\langle g \rangle$ of $G$ along with its corresponding order. Given that the private key $x$ is intended for use as an exponent of $g$, it follows that the private key $x$ should either be selected after the choice of $g$ or preferably considered in the set $\mathbb{Z}$ of all integers, rather than in the modulo set $\mathbb{Z}_q = [q-1]$. As demonstrated in Application \ref{app:generation:ElGamal}, the Yoneda cryptosystem seamlessly resolves these ambiguities.
\end{remark}

Below, Application \ref{app:generation:ElGamal} clarifies how the steps outlined in Publication \ref{gen:ElGamal} can be recovered through the Yoneda cryptosystem's generation step. To achieve this, we need to define the reflective subcategory in which the Yoneda cryptosystem operates and articulate the overarching encryption-decryption protocol.

\begin{application}[ElGamal as Yoneda]\label{app:generation:ElGamal}
Let us describe the overall setup for the Yoneda cryptosystem encoding the ELGamal cryptosystem. First, the associated limit sketch $T$ must be the limit sketch $T_{\mathsf{cgroup}}$ defined for commutative groups (see Example \ref{exa:model5}). As a result, the associated reflective subcategory is given by the inclusion
\[
\mathbf{Mod}(T_{\mathsf{cgroup}}) \hookrightarrow [T_{\mathsf{cgroup}},\mathbf{Set}]
\]
and its left adjoint $L$ (see Remark \ref{rem:model_category}). The Yoneda cryptosystem is then defined at the object $d_1$ of $T_{\mathsf{cgroup}}$ which refers to the set of elements of the group structure. Specifically, the ElGamal cryptosystem is expressed as a Yoneda cryptosystem of the form
\[
\mathcal{Y}(G,G,G(d_1),G(d_1)|\mathbf{1},\mathsf{E},\mathsf{D})
\]
where $G$ is a cyclic group of order $q$ with generator $g$. The notation $G(d_1)$ refers to the underlying set of elements for the model $G$ in $\mathbf{Mod}(T_{\mathsf{cgroup}})$. The collection of encryption algorithms is indexed by a terminal set $\mathbf{1} = \{\ast\}$ and therefore consists of a single algorithm
\[
\mathsf{E}_{\ast}:G(d_1) \times G(d_1) \to G(d_1),
\]
which is given by the mutiplication operation $(a,b) \mapsto a\cdot b$ of $G$. The decryption algorithm
\[
\mathsf{D}:G(d_1) \times G(d_1) \to G(d_1)
\]
is given by the map $(a,b) \mapsto a^{-1}\cdot b$. With such functions, we can determine the form of the set $\mathcal{R}(f)$ for every $f \in G$, as shown below.
\begin{align*}
\mathcal{R}(f) & = \{f'~|~\forall h \in \mathcal{L}(G,G),\,\forall m \in G:\,h_{d_1}(f)^{-1} \cdot h_{d_1}(f') \cdot m = m \}&(\textrm{Convention \ref{conv:reversors}})\\
& = \{f' ~|~\forall h \in \mathcal{L}(G,G):\,h_{d_1}(f') = h_{d_1}(f) \} &(\textrm{Group structure})\\
& = \{f\}&(\textrm{implied by }h = \mathsf{id}_G)
\end{align*}
The fact that $\mathcal{R}(f)$ is non-empty for all $f \in G$ shows that the Yoneda cryptosystem is well-defined. Following Generation \ref{gen:YES}, the key generation step undertaken by $\mathsf{Alice}$ for this cryptosystem consists of:
\begin{itemize}
\item[-] a functor $H:\mathbf{1} \to T_{\mathsf{cgroup}}^{\mathsf{op}}$ picking out the element $d_1$ in $T_{\mathsf{cgroup}}$;
\item[-] an initializer $f_0$ in $\mathsf{lim}_{\mathbf{1}}(G \circ H^{\mathsf{op}}) = G(d_1)$;
\item[-] and a private key $x$ in $\mathsf{col}_{\mathbf{1}}(L \circ Y \circ H)(d_1) = L(Y(d_1))(d_1) = \mathbb{Z}$ (see Remark \ref{rem:model5}).
\end{itemize}
Here, the initializer $f_0$ denotes some element $g^{x_0}$ in $G$. As suggested in Remark \ref{rem:generator:ElGamal}, this element need not be a generator of $G$ (but it should at least be different form the neutral element). Also note that the private key $x$ chosen for the Yoneda cryptosystem belongs to $\mathbb{Z}$ instead of the set $[q-1]$, as required in Generation \ref{gen:ElGamal}. This distinction is inconsequential as $x$ will serve as an exponent for some element in $G$, and can therefore be considered modulo the order $q$ of $G$.
\end{application}

Let us examine how the ElGamal key publication aligns with that of the Yoneda cryptosystem (described in Application \ref{app:generation:ElGamal} above). First, we will present the ElGamal key publication step in Publication \ref{pub:ElGamal} and then draw parallels with Publication \ref{pub:YES} in Application \ref{app:publication:ElGamal}.

\begin{publication}[ElGamal]\label{pub:ElGamal}
The key publication step for the ElGamal cryptosystem unfolds as follows. $\mathsf{Alice}$ undertakes the following actions:
\begin{itemize}
\item[-] she computes the element $f = g^x$ in the group $G$;
\item[-] and she sends the public key $(G,q,g,f)$ to $\mathsf{Bob}$.
\end{itemize}
Since the tuple $(G,q,g)$ has already been determined in Generation \ref{gen:ElGamal}, $\mathsf{Alice}$ essentially generates and transmits only the element $f$.
\end{publication}

\begin{application}[ElGamal as Yoneda]\label{app:publication:ElGamal}
The key publication step for the Yoneda cryptosystem unfolds as follows. $\mathsf{Alice}$ does the following:
\begin{itemize}
\item[-] she computes the element $f = \phi_{H,G}^{-1}(f_0)_{d_1}(x)$ in $G$. Since $\phi_{H,G}^{-1}(f_0)$ corresponds to the group homomorphism $\mathbb{Z} \to G$ that picks out the element $f_0 \in G(d_1)$ at the element $1 \in \mathbb{Z}$, the element $f$ must be equal to the power $f_0^x$ in $G$.
\item[-] Since $\mathcal{R}(f) = \{f\}$, $\mathsf{Alice}$ must send the public key $(f_0,f)$ to $\mathsf{Bob}$.
\end{itemize}
As detailed in Publication \ref{pub:YES}, $\mathsf{Alice}$ discloses both the tuple $(f_0,f)$ and the accompanying cryptosystem data $(G,q)$. In essence, the entire tuple $(G,q,f_0,f)$ becomes public knowledge for all parties involved.
\end{application}

We now address the encryption step for the ElGamal cryptosystem.

\begin{encryption}[ElGamal]
The encryption step for the ElGamal cryptosystem is organized as follows. First, $\mathsf{Bob}$ chooses the following:
\begin{itemize}
\item[-] an (random) element $h \in [q-1]$;
\item[-] a message $m \in G$.
\end{itemize}
Then, $\mathsf{Bob}$ sends the following information to $\mathsf{Alice}$:
\begin{itemize}
\item[-] the element $c_1 = g^h$ in the group $G$;
\item[-] and the element $c_2 = f^h \cdot m$ in the group $G$.
\end{itemize}
The pair $(c_1,c_2)$ constitutes the encryption of the message $m$.
\end{encryption}

According to Application \ref{app:generation:ElGamal}, the set $R$ that indexes the encryption algorithms is trivial. Consequently, the step of Encryption \ref{enc:YES} where $\mathsf{Bob}$ typically selects an element from the set $R$ is omitted in the subsequent application. Furthermore, despite the group homomorphism chosen in Application \ref{app:encryption:ElGamal} being confined to a specific form, it is important to note that the Yoneda Encryption Scheme does not impose such restrictions on the chosen form.

\begin{application}[ElGamal as Yoneda]\label{app:encryption:ElGamal}
The encryption step for the Yoneda cryptosystem unfolds as follows. $\mathsf{Bob}$ initiates the process by selecting:
\begin{itemize}
\item[-] a morphism $\tilde{h}:G \Rightarrow G$ in $\mathcal{L}$ encoded by the mapping rule $u \mapsto u^h$ for some integer $h \in [q-1]$. This mapping rule does define a group endomorphism on $G$ because $G$ is commutative;
\item[-] and a message $m \in G$.
\end{itemize}
Then, $\mathsf{Bob}$ sends the following information to $\mathsf{Alice}$:
\begin{itemize}
\item[-] the element $c_1 = \mathsf{lim}_{\mathbf{1}}(\tilde{h}_H)(f_0) = f_0^h$ in $G(d_1)$;
\item[-] and the element $c_2 = \mathsf{E}_{\ast}(\tilde{h}_{d_1}(f),m) = f^h \cdot m$ in $G(d_1)$.
\end{itemize}
The pair $(c_1,c_2)$ constitutes the encryption of the message $m$.
\end{application}

We conclude this section with the decryption steps. The correctness of both the ElGamal cryptosystems and the Yoneda cryptosystems are provided by the calculations shown in Decryption \ref{dec:YES}.

\begin{decryption}[ElGamal]
The decryption step for the Yoneda cryptosystem unfolds as follows. $\mathsf{Alice}$ computes:
\begin{itemize}
\item[-] the element $d = c_1^x$ in $G$;
\item[-] the element $m' = d^{-1}\cdot c_2$ in $G$.
\end{itemize}
Since we have the relations $d^{-1} = f_0^{-hx}$ and $c_2 = f_0^{xh} \cdot m$, the identity $m' = m$ holds.
\end{decryption}

\begin{application}[ElGamal as Yoneda]\label{app:decryption:ElGamal}
The decryption step for the Yoneda cryptosystem unfolds as follows. $\mathsf{Alice}$ computes:
\begin{itemize}
\item[-] the element $d =\phi^{-1}_{H,G}(c_1)_{d_1}(x)$ in $G$. Since $\phi_{H,G}^{-1}(c_1)$ corresponds to the group homomorphism $\mathbb{Z} \to G$ that picks out the element $c_1 \in G(d_1)$ at the element $1 \in \mathbb{Z}$, the element $d$ must be equal to the power $c_1^x$ in $G(d_1)$.;
\item[-] the element $m' = \mathsf{D}(d,c_2) = d^{-1}\cdot c_2$ in $G(d_1)$.
\end{itemize}
The identity $m' = m$ follows from Decryption \ref{dec:YES}.
\end{application}

\begin{remark}[Cramer–Shoup cryptosystems]
The capability to recover the ElGamal cryptosystem with elements $g$ that are not necessarily generators has been implicitly utilized in the literature. For example, the Cramer–Shoup cryptosystem \cite{CramerShoup} is a modification of the ElGamal cryptosystem that leverages this flexibility along with a collision-resistant hash function to enhance the system's resistance to manipulation.
\end{remark}

\subsection{RSA encryption}\label{ssec:RSA}
The RSA encryption system, conceived in 1977 by Ron Rivest, Adi Shamir, and Leonard Adleman, is grounded in the arithmetic of finite modulo sets (Convention \ref{conv:modulo-set}) and their cyclic subgroups \cite{survey1}. The mathematical underpinning of this cryptographic scheme relies on the difficulty of two challenges: (1) determining the prime factorization of a given integer and (2) identifying the logarithmic base $m$ of a power $m^e$ modulo a specified integer $n$. In this section, we detail the formulation of a Yoneda cryptosystem that reconstructs the RSA encryption scheme (starting at Generation \ref{gen:RSA}).

\begin{convention}[Coprime]\label{conv:coprime}
We say that two integers $n$ and $m$ are \emph{coprime} if their greatest common  divisor is $1$. This means that there exists two integers $x$ and $y$ such that the relation $nx+my = 1$ holds.
\end{convention}

\begin{convention}\label{conv:modulo_n_intertible}
For every positive integer $n$, we will denote as $\mathbb{Z}_n^{\times}$ the subset of $\mathbb{Z}_n$ whose elements are coprime with $n$. It is straightforward to show that $\mathbb{Z}_n^{\times}$ defines a group for the multiplication of integers since the multiplication of two relations $nx_0+m_0y_0 = 1$ and $nx_1+m_1y_1 = 1$ gives the following relation.
\[
n(nx_1x_0+x_1m_0y_0+m_1y_1x_0) + m_1m_0(y_1y_0) = 1
\]
\end{convention}

\begin{definition}[Carmichael's totient function]\label{conv:Carmichael}
For every positive integer $n$, we define the \emph{Carmichael's totient number} $\lambda(n)$ as the order of the multiplicative group $\mathbb{Z}_n^{\times}$. In other words, $\lambda(n)$ is defined as the smallest positive integer $k$ such that, for every integer $x$ coprime to $n$, the equation $x^k \equiv 1 \,(\mathsf{mod}\,n)$ holds.
\end{definition}

Let us now present the RSA cryptosystem. As in previous instances, we will delineate each step of RSA alongside its counterpart in the Yoneda encryption scheme.

\begin{generation}[RSA]\label{gen:RSA}
The key generation step for the RSA cryptosystem is outlined as follows. $\mathsf{Alice}$ makes the following selections:
\begin{itemize}
\item[-] a positive integer $n$, which should be the product of two large prime numbers;
\item[-] and an element $f_0$ in the multiplicative group $\mathbb{Z}_{\lambda(n)}^{\times}$ whose inverse $f_0^{-1}$ in $\mathbb{Z}_{\lambda(n)}^{\times}$ is readily known to $\mathsf{Alice}$.
\end{itemize}
In the context of the RSA cryptosystem, the inverse $f_0^{-1}$ is kept private to other parties.
\end{generation}

\begin{application}[RSA as Yoneda]\label{app:generation:RSA}
Let us describe the Yoneda cryptosystem that recovers the RSA cryptosystem. To start with, we take the associated limit sketch $T$ to be the terminal category $\mathbf{1} = \{\ast\}$. As a result, the associated reflective subcategory is given by the identity
\[
\mathbf{Mod}(\mathbf{1}) = [\mathbf{1},\mathbf{Set}]
\]
where the category $[\mathbf{1},\mathbf{Set}]$ is isomorphic to $\mathbf{Set}$. The Yoneda cryptosystem is defined at the obvious object in $\mathbf{1}$.
The RSA cryptosystem is then defined as a Yoneda cryptosystem of the form
\[
\mathcal{Y}(\mathbb{Z}_{\lambda(n)}^{\times},\mathbb{Z}_{\lambda(n)}^{\times},\mathbb{Z}_n,\mathbb{Z}_n|\mathbf{1},\mathsf{E},\mathsf{D})
\]
where $n$ is a positive integer. The collection of encryption algorithms is indexed by a terminal set $\mathbf{1} = \{\ast\}$ and therefore consists of a single algorithm given by the following function.
\[
\mathsf{E}_{\ast}:
\left(
\begin{array}{ccc}
\mathbb{Z}_{\lambda(n)}^{\times} \times \mathbb{Z}_n & \to & \mathbb{Z}_n\\
(g,m) & \mapsto & m^g
\end{array}
\right)
\]
It is important to note that although the function $\mathsf{E}{\ast}$ is formally defined as a mapping \[
\mathbb{Z}_{\lambda(n)}^{\times} \times \mathbb{Z}_n \to \mathbb{Z}_n,
\]
the associated parameter $\lambda(n)$ exclusively pertains to $\mathsf{Alice}$. From an external viewpoint, while other parties are aware of the general structure of the function $\mathsf{E}{\ast}$, practically computing the parameter $\lambda(n)$ from $n$ poses a considerable challenge. Consequently, any sender party utilizes the encryption algorithm as a function of the form $\mathbb{N} \times \mathbb{Z}_n \rightarrow \mathbb{Z}_n$. Similarly, the practical knowledge required for computing the decryption algorithm is exclusively within the grasp of $\mathsf{Alice}$. This function is defined by the following mapping, where the term $g^{-1}$ represents the multiplicative inverse of the element $g$ in $\mathbb{Z}_{\lambda(n)}^{\times}$.
\[
\mathsf{D}:
\left(
\begin{array}{ccc}
\mathbb{Z}_{\lambda(n)}^{\times} \times \mathbb{Z}_n & \to & \mathbb{Z}_n\\
(g,c) & \mapsto & c^{g^{-1}}
\end{array}
\right)
\]
With such functions, we can determine the form of the set $\mathcal{R}(f)$ for every $f \in \mathbb{Z}_{\lambda(n)}^{\times}$, as shown below.
\begin{align*}
\mathcal{R}(f) & = \{f'~|~\forall h \in \mathbf{Set}(\mathbb{Z}_{\lambda(n)}^{\times},\mathbb{Z}_{\lambda(n)}^{\times}),\,\forall m \in \mathbb{Z}_n:\, \big(m^{h_{\ast}(f')}\big)^{h_{\ast}(f)^{-1}} = m \textrm{ in }\mathbb{Z}_n\}&(\textrm{Convention \ref{conv:reversors}})\\
& = \{f'~|~\forall h \in \mathbf{Set}(\mathbb{Z}_{\lambda(n)}^{\times},\mathbb{Z}_{\lambda(n)}^{\times}):\, h_{\ast}(f')\cdot h_{\ast}(f)^{-1} = 1 \textrm{ in }\mathbb{Z}_{\lambda(n)}^{\times}\}&(\textrm{implied by }m \in \mathbb{Z}_n^{\times})\\
& = \{f'~|~\forall h \in \mathbf{Set}(\mathbb{Z}_{\lambda(n)}^{\times},\mathbb{Z}_{\lambda(n)}^{\times}):\, h_{\ast}(f')=  h_{\ast}(f) \textrm{ in }\mathbb{Z}_{\lambda(n)}^{\times}\}&(\textrm{Group structure})\\
& = \{f\}&(\textrm{implied by }h = \mathsf{id})
\end{align*}
The fact that $\mathcal{R}(f)$ is non-empty for all $f \in \mathbb{Z}_{\lambda(n)}^{\times}$ shows that the Yoneda cryptosystem is well-defined. Following Generation \ref{gen:YES}, the key generation step undertaken by $\mathsf{Alice}$ for this cryptosystem consists of:
\begin{itemize}
\item[-] an identity functor $H:\mathbf{1} \to \mathbf{1}$;
\item[-] an initializer $f_0$ in $\mathsf{lim}_{\mathbf{1}}(\mathbb{Z}_{\lambda(n)}^{\times} \circ H^{\mathsf{op}}) = \mathbb{Z}_{\lambda(n)}^{\times}$ whose inverse $f_0^{-1}$ is readily known to $\mathsf{Alice}$;
\item[-] and a private key $x$ in the singleton $\mathsf{col}_{\mathbf{1}}(Y \circ H)(\ast) = Y(\ast)(\ast) = \{\mathsf{id}_{\ast}\}$.
\end{itemize}
Here, the private key $x$ must be taken to be the only element of the singleton $\mathbf{1}(\ast,\ast)$. The security of the cryptosystem remains unaffected since the confidential details lie in the computation of the inverse $f_0^{-1}$ and the definition of the decryption algorithm. Both of these components pose computational challenges, making them difficult to ascertain.
\end{application}

We now describe the key publication steps for both RSA and its Yoneda form.

\begin{publication}[RSA]\label{pub:RSA}
The key publication step for the RSA cryptosystem is as follows: $\mathsf{Alice}$ transmits the element $f_0 \in \mathbb{Z}_{\lambda(n)}^{\times}$ to $\mathsf{Bob}$ and reveals general information about her cryptosystem, such as the integer $n$ and the theoretical structure of the encryption-decryption protocol.
\end{publication}

\begin{application}[RSA as Yoneda]\label{app:publication:RSA}
The key publication step for the Yoneda cryptosystem unfolds as follows. $\mathsf{Alice}$ does the following:
\begin{itemize}
\item[-] she computes the element
\[
f = \phi_{H,\mathbb{Z}_{\lambda(n)}^{\times}}^{-1}(f_0)_{\ast}(x).
\]
This element corresponds to the image of the function $\mathbf{1} \to \mathbb{Z}_{\lambda(n)}^{\times}$ picking out the element $f_0$. As a result, the element $f$ is equal to the element $f_0$.
\item[-] Since $\mathcal{R}(f) = \{f\}$, $\mathsf{Alice}$ must send the public key $(f_0,f_0)$ to $\mathsf{Bob}$.
\end{itemize}
As outlined in Publication \ref{pub:YES}, $\mathsf{Alice}$ reveals both the key $f_0$ and the theoretical structure of the cryptosystem data, withholding only the computation details of the parameters intended to remain confidential.
\end{application}

We will now outline the encryption steps. Note that RSA encryption involves a single ciphertext, whereas Yoneda's encryption comprises a pair of ciphertexts, with the first one already known to $\mathsf{Alice}$.

\begin{encryption}[RSA]\label{enc:RSA}
The encryption step in the RSA cryptosystem unfolds as follows: $\mathsf{Bob}$ selects an element $m$ from $\mathbb{Z}_n$ as the message, and then transmits the element $c = m^{f_0} \in \mathbb{Z}_n$ to $\mathsf{Alice}$.
\end{encryption}

\begin{application}[RSA as Yoneda]\label{app:encryption:RSA}
The encryption step for the Yoneda cryptosystem unfolds as follows. $\mathsf{Bob}$ initiates the process by selecting:
\begin{itemize}
\item[-] the identity function $h:\mathbb{Z}_{\lambda(n)}^{\times} \to \mathbb{Z}_{\lambda(n)}^{\times}$ in $\mathbf{Set}$;
\item[-] and a message $m \in \mathbb{Z}_n$.
\end{itemize}
Then, $\mathsf{Bob}$ sends the following information to $\mathsf{Alice}$:
\begin{itemize}
\item[-] the element $c_1 = \mathsf{lim}_{\mathbf{1}}(h_H)(f_0) = f_0$ in $\mathbb{Z}_{\lambda(n)}^{\times}$;
\item[-] and the element $c_2 = \mathsf{E}_{\ast}(h_{\ast}(f),m) = m^{f_0}$ in $\mathbb{Z}_n$.
\end{itemize}
While the pair $(c_1, c_2)$ constitutes an encryption for the message $m$, its first component $c_1 = f_0$ is information already known to $\mathsf{Alice}$. Nevertheless, this data can be utilized by $\mathsf{Bob}$ as a certificate, ensuring that $\mathsf{Alice}$ is aware of the cryptosystem being employed by $\mathsf{Bob}$.
\end{application}

Given the computational difficulty of determining $m$ for a given power $m^{f_0}$ in $\mathbb{Z}_n$ (when $f_0$ is non-trivial), the ciphertexts employed in Encryption \ref{enc:RSA} and Application \ref{app:encryption:RSA} are deemed secure. We will now proceed to outline the decryption steps for these ciphertexts.

\begin{decryption}[RSA]
The decryption step for the RSA cryptosystem goes as follows: $\mathsf{Alice}$ calculates the value $m' = c^{f_0^{-1}}$ within $\mathbb{Z}_n$, where the inverse $f_0^{-1}$ is readily known to $\mathsf{Alice}$. Given that $m'$ is equivalent to the power $(m^{f_0})^{f_0^{-1}}$, we successfully retrieve the message $m$ in $\mathbb{Z}_n$.
\end{decryption}

\begin{application}[RSA as Yoneda]\label{app:decryption:RSA}
The decryption step for the Yoneda cryptosystem unfolds as follows. $\mathsf{Alice}$ computes:
\begin{itemize}
\item[-] the element
\[
d = \phi_{H,\mathbb{Z}_{\lambda(n)}^{\times}}^{-1}(c_1)_{\ast}(x).
\]
This element corresponds to the image of function $\mathbf{1} \to \mathbb{Z}_{\lambda(n)}^{\times}$ picking out the element $c_1$. As a result, the element $d$ is equal to the element $c_1 = f_0$.;
\item[-] and the element $m' = \mathsf{D}(d,c_2) = c_2^{d^{-1}} = (m^{f_0})^{f_0^{-1}}$ in $\mathbb{Z}_n$;
\end{itemize}
The identity $m' = m$ follows from Decryption \ref{dec:YES}.
\end{application}

\subsection{Benaloh encryption}\label{ssec:Benaloh}
In this section, we consider the Benaloh cryptosystem as a representative example for a well-known class of cryptosystems, including the Goldwasser-Micali cryptosystem, the Boneh-Goh-Nissim cryptosystem and the Paillier cryptosystem \cite{survey1,GM82,BNG,Paillier}. We shall leave the translation of these cryptosystems into Yoneda cryptosystems as an exercise to the reader. The exposition provided for the Benaloh cryptosystem below can serve as a reference for this purpose.
\smallskip

The Benaloh cryptosystem, introduced by Josh Benaloh in 1994, is grounded in the complexity of the Higher Residuosity Problem \cite{survey1}. However, the original correctness of the cryptosystem suffered from a problem, which was correct in 2011 by Fousse, Lafourcade, and Alnuaimi \cite{Benaloh_revisited}. Before showing how the Yoneda Encryption Scheme can effectively recover the Benaloh cryptosystem, we will establish a few essential notations. In particular, we will use the terminology introduced in Convention \ref{conv:coprime}, Convention \ref{conv:modulo_n_intertible} and Convention \ref{conv:Carmichael}.

\begin{convention}[Notations]\label{conv:discrete-log}
Let $n$ be a positive integer and $x$ be an element in $\mathbb{Z}_n^{\times}$. For every element $y$ in $\mathbb{Z}_n^{\times}$ represented as a power of $x$ in $\mathbb{Z}_n^{\times}$, we denote by $\mathsf{log}_{x}(y)$ the smallest exponent $m$ in $\mathbb{Z}_n$ for which the equation $y = x^m$ holds in $\mathbb{Z}_n^{\times}$. The computational steps for the function $\mathsf{log}_{x}$ are detailed in \cite{Benaloh_revisited}.
\end{convention}

The following results identify a situation in which the computation of discrete logarithms does not require to choose among a set of representatives.

\begin{proposition}[Correctness]\label{prop:Benaloh:correctness}
Let $n$ be a positive integers and $p$ be a prime factor of $n$. We denote by $\omega$ the generator of the cyclic group $\mathbb{Z}_p^{\times}$ of order $\lambda(p) = p-1$. Let $x$ be an element in $\mathbb{Z}_n^{\times}$ such that its representative in $\mathbb{Z}_p^{\times}$ is of the form $\omega^{k}$. If the exponent $k$ is coprime with a factor $v$ of $p-1$, then the following equivalence holds: for every $m \in \mathbb{Z}_v$, the equation $y = x^m$ holds in $\mathbb{Z}_n^{\times}$ if, and only, if $\mathsf{log}_{x}(y) = m$.
\end{proposition}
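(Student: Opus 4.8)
The plan is to treat the ``if'' direction as immediate and to put all the work into the ``only if'' direction, whose essential content is that, under the hypothesis $\gcd(k,v)=1$, the powers $x^{0},x^{1},\dots,x^{v-1}$ are pairwise distinct in $\mathbb{Z}_n^{\times}$. Granting that, any representation $y = x^{m}$ with $m \in \mathbb{Z}_v$ is forced to be the minimal one, so $m = \mathsf{log}_x(y)$.

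For the ``if'' direction I would simply note that $\mathsf{log}_{x}(y) = m$ means, by the definition recalled in Convention \ref{conv:discrete-log}, that $y = x^{m}$ holds in $\mathbb{Z}_n^{\times}$ (and in particular that $y$ is a power of $x$, so that $\mathsf{log}_x$ is defined on it). For the ``only if'' direction I would start from $y = x^{m}$ with $m \in \mathbb{Z}_v$; since $v \mid p-1$ gives $v \le p-1 < p \le n$, the set $\mathbb{Z}_v$ sits inside $\mathbb{Z}_n$ and $m$ is a genuine exponent, so $m' := \mathsf{log}_{x}(y)$ is defined with $0 \le m' \le m \le v-1$. From $x^{m'} = y = x^{m}$ I would deduce $x^{m-m'} = 1$ in $\mathbb{Z}_n^{\times}$, and then push this relation down to $\mathbb{Z}_p^{\times}$ via reduction modulo $p$: this map is a ring homomorphism $\mathbb{Z}_n \to \mathbb{Z}_p$ (because $p \mid n$) that carries units to units, it sends $x$ to its representative $\omega^{k}$, and it is multiplicative, so $x^{m-m'}=1$ becomes $\omega^{k(m-m')} = 1$ in $\mathbb{Z}_p^{\times}$. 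Since $\omega$ generates $\mathbb{Z}_p^{\times}$, its order is exactly $\lambda(p) = p-1$, whence $(p-1) \mid k(m-m')$ and therefore $v \mid k(m-m')$ because $v \mid p-1$. Finally I would invoke $\gcd(k,v)=1$ --- a Bézout identity $ka+vb=1$ (Convention \ref{conv:coprime}) turns $v \mid k(m-m')$ into $v \mid (m-m')$ --- and combine it with $0 \le m-m' \le v-1$ to conclude $m-m'=0$, i.e. $\mathsf{log}_x(y)=m$.

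The step I expect to be the main obstacle (such as it is) is the descent to $\mathbb{Z}_p^{\times}$ together with the precise bookkeeping that $\omega$ has order $p-1$ there; everything downstream is the elementary remark that an integer in $\{0,\dots,v-1\}$ which is divisible by $v$ must vanish. I would also want to double-check the degenerate cases $v=1$ and $m=0$, where the statement trivializes but should still be seen to hold. An alternative to the divisibility bookkeeping would be to compute directly that the order of $\omega^{k}$ in $\mathbb{Z}_p^{\times}$ equals $(p-1)/\gcd(k,p-1)$ and to observe that $\gcd(k,v)=1$ together with $v \mid p-1$ makes this a multiple of $v$, which is precisely the assertion that $x^{0},\dots,x^{v-1}$ are distinct.
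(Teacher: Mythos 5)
Your proposal is correct and follows essentially the same route as the paper's proof: reduce the relation $x^{m_1}=x^{m_2}$ modulo $p$, use that $\omega$ generates $\mathbb{Z}_p^{\times}$ of order $p-1$ to get $(p-1)\mid k(m_1-m_2)$, pass to divisibility by $v$, and invoke the coprimality of $k$ with $v$, with the converse being immediate from Convention \ref{conv:discrete-log}. Your explicit bookkeeping that the minimal exponent $m'=\mathsf{log}_x(y)$ satisfies $0\le m'\le m\le v-1$ is a slightly more careful rendering of the paper's uniqueness claim, but it is the same argument.
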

\begin{proof}
To show the direct implication, let us show that if the equation $y = x^m$ holds in $\mathbb{Z}_n^{\times}$ for some $m \in \mathbb{Z}_v$, then $m$ is the only exponent in $\mathbb{Z}_n$ for which the equation $y = x^m$ holds in $\mathbb{Z}_n^{\times}$. Indeed, if $x^{m_1} = x^{m_2}$ in $\mathbb{Z}_n^{\times}$, then $\omega^{k(m_1-m_2)} = 1$ in $\mathbb{Z}_n^{\times}$ and hence in $\mathbb{Z}_p^{\times}$. Since $\omega$ is a generator of $\mathbb{Z}_p^{\times}$, it follows from Convention \ref{conv:Carmichael} that we must have $k(m_1-m_2) = 0$ in $\mathbb{Z}_{\lambda(p)}$ where $\lambda(p) = p-1$. Since $v$ is a factor of $p-1$, we also have $k(m_1-m_2) = 0$ in $\mathbb{Z}_{v}$. By assumption, the integer $k$ is represented in $\mathbb{Z}_{v}^{\times}$, so we have $m_1 = m_2$ in $\mathbb{Z}_{v}$. This concludes the proof for the direct implication. The opposite implication follows from Convention \ref{conv:discrete-log}.
\end{proof}

\begin{convention}[Notations]\label{conv:Bset_Benaloh}
Consider a positive integer $v$ and two prime numbers $p$ and $q$ with the following conditions:
\begin{itemize}
\item[1)] the integer $v$ divides $p-1$;
\item[2)] both integers $(p-1)/v$ and $q-1$ are coprime with the integer $v$.
\end{itemize}
These assumptions imply that $v$ divides the integer $\phi = (p-1)(q-1)$, and $\phi/v$ is coprime with $v$. Additionally, note that if $p$ is a factor of an integer $n$, then every element in $\mathbb{Z}_{n}^{\times}$ is represented in $\mathbb{Z}_{p}^{\times}$. Now, if we let $n = pq$, we denote by $B^{p,q}_v$ the set of integers $g \in \mathbb{Z}_{n}^{\times}$ such that the representative of $g^{\phi/v}$ in $\mathbb{Z}_{p}^{\times}$ is of the form $\omega^k$, where $\omega$ is a generator of $\mathbb{Z}_{p}^{\times}$ and $k$ is coprime with $v$. As the integer $\phi/v$ is coprime with $v$, the set $B^{p,q}_v$ encompasses every element of $\mathbb{Z}_{n}^{\times}$ that is a generator in $\mathbb{Z}_{p}^{\times}$.
\end{convention}

\begin{remark}[Correctness]\label{rem:Correctness:Bset:benaloh}
Consider a positive integer $v$ and two prime numbers $p$ and $q$ such that the integer $v$ divides $p-1$ and both integers $(p-1)/v$ and $q-1$ are coprime with the integer $v$. It follows from Proposition \ref{prop:Benaloh:correctness} and Convention \ref{conv:Bset_Benaloh} that the following equation holds for every $f_0 \in B^{p,q}_v$ and every $m \in \mathbb{Z}_v$.
\[
\mathsf{log}_{f_0^{\phi/v}}(f_0^{m\phi/v}) = m
\]
This means that $\mathsf{Alice}$ can decrypt ciphertexts encoded as exponentials $f_0^{m\phi/v}$ by using the discrete logarithm.
\end{remark}

The requirements detailed in the key generation for the Benaloh cryptosystem (see Generation \ref{gen:Benaloh}) are primarily motivated by Proposition \ref{prop:Benaloh:correctness} and Remark \ref{rem:Correctness:Bset:benaloh}. This is because the Benaloh cryptosystem necessitates the computation of a discrete logarithm for decrypting its ciphertexts.

\begin{generation}[Benaloh]\label{gen:Benaloh}
The key generation step for the Benaloh cryptosystem goes as follows. $\mathsf{Alice}$ starts with the following selection:
\begin{itemize}
\item[-] a positive integer $v$ and two prime numbers $p$ and $q$ such that the integer $v$ divides $p-1$ and both integers $(p-1)/v$ and $q-1$ are coprime with the integer $v$;
\item[-] and an element $f_0 \in B^{p,q}_v$ where we take .
\end{itemize}
In the discussions that follow, we will use the notations $\phi = (p-1)(q-1)$ and $n=pq$.
\end{generation}

\begin{remark}[Carmichael's totient number]\label{rem:Benaloh:Carmichael}
Given two prime numbers $p$ and $q$ such that $n = pq$, it is well-known that Carmichael's totient number $\lambda(n)$ is equal to the least common multiple of $p-1$ and $q-1$. Letting $\phi = (p-1)(q-1)$, it follows that for every $r \in \mathbb{Z}_n^{\times}$, the power $r^{\phi}$ is equal to $1$ in $\mathbb{Z}_n^{\times}$.
\end{remark}

\begin{application}[Benaloh as Yoneda]\label{app:generation:Benaloh}
Let us describe the Yoneda cryptosystem that captures the Benaloh cryptosystem. First, we take the associated limit sketch $T$ to be the terminal category $\mathbf{1} = \{\ast\}$. As a result, the associated reflective subcategory is given by the following identity.
\[
\mathbf{Mod}(\mathbf{1}) = [\mathbf{1},\mathbf{Set}]
\]
The Yoneda cryptosystem is defined at the obvious object. The Benaloh cryptosystem is then defined as a Yoneda cryptosystem of the form
\[
\mathcal{Y}(B^{p,q}_v,B^{p,q}_v,\mathbb{Z}_{n}^{\times},\mathbb{Z}_{v}|\mathbb{Z}_n^{\times},\mathsf{E},\mathsf{D})
\]
where $p$ and $q$ are two primes, $n$ is the product $pq$, and $v$ is a positive integer dividing $p-1$ such that $(p-1)/v$ and $q-1$ are coprime with $v$. Let us also denote $\phi=(p-1)(q-1)$. The collection of encryption algorithms is indexed by the set $\mathbb{Z}_n^{\times}$ such that for every $r \in \mathbb{Z}_n^{\times}$ we have
an encryption function as follows.
\[
\mathsf{E}_r:
\left(
\begin{array}{ccc}
B^{p,q}_v \times \mathbb{Z}_{v} & \to & \mathbb{Z}_{n}^{\times}\\
(g,m) & \mapsto & g^m \cdot r^v
\end{array}
\right)
\]
The decryption algorithm is given by a partial function (see below). We can use Remark \ref{rem:Benaloh:Carmichael} and Remark \ref{rem:Correctness:Bset:benaloh} to show that this function is well-defined at pairs $(g,c)$ where $c$ is of the form $\mathsf{E}_r(g,m)$ for some elements $r \in \mathbb{Z}_n^{\times}$ and $m \in \mathbb{Z}_{v} $.
\[
\mathsf{D}:
\left(
\begin{array}{ccc}
B^{p,q}_v \times \mathbb{Z}_{n}^{\times} & \to & \mathbb{Z}_{v}\\
(g,c) & \mapsto & \mathsf{log}_{g^{\phi/v}}(c^{\phi/v})
\end{array}
\right)
\]
With such functions, we can determine the form of the set $\mathcal{R}(f)$ for every $f \in B^{p,q}_v$, as shown below.
\begin{align*}
\mathcal{R}(f) & = \{f'~|~\forall h \in \mathbf{Set}(B^{p,q}_v,B^{p,q}_v),\,\forall m \in \mathbb{Z}_v\,:
\mathsf{log}_{h(f)^{\phi/v}}((h(f')^m \cdot r^v)^{\phi/v}) = m\}&(\textrm{Convention \ref{conv:reversors}})\\
& = \{f'~|~\forall h \in \mathbf{Set}(B^{p,q}_v,B^{p,q}_v),\,\forall m \in \mathbb{Z}_v:\, h(f')^{m\phi/v}r^{\phi} = h(f)^{m\phi/v}\}&(\textrm{Proposition \ref{prop:Benaloh:correctness}})\\
& = \{f'~|~\forall h \in \mathbf{Set}(B^{p,q}_v,B^{p,q}_v),\,\forall m \in \mathbb{Z}_v:\, h(f')^{m\phi/v} = h(f)^{m\phi/v}\}&(\textrm{Remark \ref{rem:Benaloh:Carmichael}})\\
& \supseteq \{f\} &
\end{align*}
The fact that $\mathcal{R}(f)$ is non-empty for all $f \in \mathbb{Z}_{\lambda(n)}^{\times}$ shows that the Yoneda cryptosystem is well-defined. Following Generation \ref{gen:YES}, the key generation step undertaken by $\mathsf{Alice}$ for this cryptosystem consists of:
\begin{itemize}
\item[-] an identity functor $H:\mathbf{1} \to \mathbf{1}$;
\item[-] an initializer $f_0$ in $\mathsf{lim}_{\mathbf{1}}(B^{p,q}_v \circ H^{\mathsf{op}}) = B^{p,q}_v$;
\item[-] and a private key $x$ in the singleton $\mathsf{col}_{\mathbf{1}}(Y \circ H)(\ast) = Y(\ast)(\ast) = \{\mathsf{id}_{\ast}\}$.
\end{itemize}
Here, the private key $x$ is taken to be trivial, primarily due to the fact that most of the sensitive information to be kept secret resides in the definition of the decryption algorithm.
\end{application}

The publication step for the Benaloh cryptosystem is described below.

\begin{publication}[Benaloh]\label{pub:Benaloh}
The key publication step for the Benaloh cryptosystem is as follows: $\mathsf{Alice}$ sends the element $f_0$ to $\mathsf{Bob}$.
\end{publication}

\begin{application}[Benaloh as Yoneda]\label{app:publication:Benaloh}
The key publication step for the Yoneda cryptosystem unfolds as follows. $\mathsf{Alice}$ does the following:
\begin{itemize}
\item[-] she computes the element
\[
f = \phi_{H,B^{p,q}_v}^{-1}(f_0)_{\ast}(x).
\]
This element corresponds to the image of function $\mathbf{1} \to B^{p,q}_v$ picking out the element $f_0$. As a result, the element $f$ is equal to the element $f_0$.
\item[-] Since $\mathcal{R}(f) \supseteq \{f\}$, $\mathsf{Alice}$ can send the pair $(f_0,f_0)$ as a public key to $\mathsf{Bob}$.
\end{itemize}
As outlined in Publication \ref{pub:YES}, $\mathsf{Alice}$ reveals both the key $f_0$ and the theoretical structure of the cryptosystem data, withholding only the computation details of the parameters intended to remain confidential.
\end{application}

We will now outline the encryption steps. In contrast to other cryptosystems discussed in previous sections, the associated step for the Yoneda translation involves selecting a noise parameter (see Encryption \ref{enc:YES}). This noise parameter is a crucial component in ensuring the security of the Benaloh cryptosystem.

\begin{encryption}[Benaloh]\label{enc:Benaloh}
The encryption step in the Benaloh cryptosystem goes as follows: First, $\mathsf{Bob}$ chooses
\begin{itemize}
\item[-] a random element $r \in \mathbb{Z}_n^{\times}$;
\item[-] and a message $m \in \mathbb{Z}_v$;
\end{itemize}
and then sends the ciphertext $c = f_0^m \cdot r^v \in \mathbb{Z}_n^{\times}$ to $\mathsf{Alice}$.
\end{encryption}

\begin{application}[Benaloh as Yoneda]\label{app:encryption:Benaloh}
The encryption step for the Yoneda cryptosystem unfolds as follows. First, $\mathsf{Bob}$ selects:
\begin{itemize}
\item[-] a noise parameter $r \in \mathbb{Z}_n^{\times}$;
\item[-] the identity function $B^{p,q}_v \to B^{p,q}_v$.
\item[-] and a message $m \in \mathbb{Z}_v$;
\end{itemize}
Then, $\mathsf{Bob}$ sends the following information to $\mathsf{Alice}$:
\begin{itemize}
\item[-] the element $c_1 = \mathsf{lim}_{\mathbf{1}}(h_H)(f_0) = f_0$ in $B^{p,q}_v$;
\item[-] and the element $c_2 = \mathsf{E}_{r}(h_{\ast}(f),m) = f_0^m \cdot r^v$ in $\mathbb{Z}_n^{\times}$.
\end{itemize}
While the pair $(c_1, c_2)$ constitutes an encryption for the message $m$, its first component $c_1 = f_0$ is information already known to $\mathsf{Alice}$. As noted before, this data can be utilized by $\mathsf{Bob}$ as a certificate, ensuring that $\mathsf{Alice}$ is aware of the cryptosystem being employed by $\mathsf{Bob}$.
\end{application}

Finally, we address the decryption steps. Initially, this stage exhibited certain ambiguities, resulting in identical decryptions for distinct messages. However, these issues were subsequently rectified by Fousse, Lafourcade, and Alnuaimi, as documented in \cite{Benaloh_revisited}.

\begin{decryption}[Benaloh]
The decryption step for the Benaloh cryptosystem consists in making $\mathsf{Bob}$ find the smallest integer $m' \in \mathbb{Z}_v$ for which the equation $(f_0^{\phi/v})^{m'} = c^{\phi/v}\,(\mathsf{mod}\,n)$ holds.
\end{decryption}

\begin{application}[Benaloh as Yoneda]\label{app:decryption:Benaloh}
The decryption step for the Yoneda cryptosystem unfolds as follows. $\mathsf{Alice}$ computes:
\begin{itemize}
\item[-] the element
\[
d = \phi_{H,\mathbb{Z}_{\lambda(n)}^{\times}}^{-1}(c_1)_{\ast}(x).
\]
This element corresponds to the image of function $\mathbf{1} \to \mathbb{Z}_{\lambda(n)}^{\times}$ picking out the element $c_1$. As a result, the element $d$ is equal to the element $c_1 = f_0$.;
\item[-] and the element $m' = \mathsf{D}(d,c_2) = \mathsf{log}_{f_0^{\phi/v}}(c_2^{\phi/v})$ in $\mathbb{Z}_v$;
\end{itemize}
The identity $m' = m$ follows from Decryption \ref{dec:YES}.
\end{application}

\subsection{NTRU encryption}\label{ssec:NTRU}
The NTRU cryptosystem, initially developed by Hoffstein, Pipher, and Silverman in 1998 \cite{NTRU98}, has undergone several enhancements since its inception. Subsequent research, such as works by Steinfeld and Stehlé \cite{SteinfeldStehle} and L\'{o}pez-Alt \emph{et al.} \cite{LopezAltNTRU}, has contributed to enhancing the security and efficiency of NTRU, resulting into what is now called the NTRUEncrypt system. In this section, we illustrate how this framework translates into the structure derived from the Yoneda Lemma.

\begin{convention}[Polynomials]\label{conv:polynomials_1}
Given a polynomial $p(X)$ in $R[X]$, we use the notation $R[X]/p(X)$ to represent the ring of polynomial representatives in $R[X]$ modulo $p(X)$. In simpler terms, the set $R[X]/p(X)$ consists of polynomials $f(X)$ with degrees less than or equal to $p(X)$. The operations defined for this ring involve the addition and multiplication of polynomials, followed by the computation of their remainder for the Euclidean division by the polynomial $p(X)$.
\end{convention}

\begin{convention}[Notations]\label{conv:polynomials_2}
For every ring $R$ with unit $\mathbb{1}$ (Definition \ref{def:rings}) and every positive integer $N$, we will denote the quotient ring $R[X]/(X^N-\mathbb{1})$ of $R[X]$ by the polynomial $X^N-\mathbb{1}$ as $R[X]_N$.
\end{convention}

\begin{convention}[Inverses]
Let $p$ be a prime number and $N$ be a positive integer. We denote as $I_p^N$ the set of polynomials $g$ in $\mathbb{Z}[X]$ such that the mapping of $g$ to the ring $\mathbb{Z}_p[X]_N$ has an inverse for the multiplication, namely there exists a (unique) polynomial $g'$ in $\mathbb{Z}_p[X]_N$ for which the relation $g' \cdot g = 1$ holds in $\mathbb{Z}_p[X]_N$. For every $g \in I_p^N$, we will denote as $\mathsf{inv}_p(g)$ the multiplicative inverse of $g$ in $\mathbb{Z}_p[X]_N$.
\end{convention}

\begin{definition}[Keys]\label{def:NTRU:Q}
For every positive integer $N$ and every pair $(p,q)$ of prime numbers, we denote by $Q_{p,q}^N$ the set of pairs $(g_0,g_1)$ where $g_0$ and $g_1$ are two polynomials in $\mathbb{Z}_3[X]_N$ such that
\begin{itemize}
\item[1)] the polynomial $g_0$ is in $I_p^N$ and $I_q^N$,
\item[2)] and for every pair $(r,m) \in \mathbb{Z}_3[X]_N \times \mathbb{Z}_p[X]_N$, the polynomial expression
\[
g_1(X)\cdot r(X) \cdot p + g_0(X) \cdot m(X)
\]
computed in $\mathbb{Z}[X]_N$ has all its coefficients within the interval $[-q/2,q/2)$.
\end{itemize}
\end{definition}

\begin{remark}[Correctness]\label{rem:NTRU:correctness}
We will see that the NTRU cryptosystem relies on the computation of an expression of the following form for some pair $(g_0,g_1) \in Q_{p,q}^N$.
\[
g_0 \cdot (\mathsf{inv}_q(g_0) \cdot g_1\cdot r \cdot p + m)
\]
According to Definition \ref{def:NTRU:Q}, this expression provides a polynomial whose representative in $\mathbb{Z}_q[X]_N$ is given by the polynomial $g_1rp + g_0m$ computed in $\mathbb{Z}[X]_N$.
\end{remark}

\begin{generation}[NTRUEncrypt]
The key generation step for the NTRU cryptosystem goes as follows. $\mathsf{Alice}$ starts with the following selection:
\begin{itemize}
\item[-] a positive integer $N$;
\item[-] two prime numbers $p$ and $q$;
\item[-] a pair $(g_0,g_1) \in Q_{p,q}^N$.
\end{itemize}
The previous selection will later allow $\mathsf{Alice}$ to construct a product of the form $\mathsf{inv}_q(g_0) \cdot g_1 \cdot p$, which she will use as a public key.
\end{generation}

The following generation algorithm slightly differs from those explored in the previous sections, as it relies on a Yoneda encryption scheme restricted along a singleton (see Convention \ref{conv:restricted:Yoneda}).

\begin{application}[NTRUEncrypt as Yoneda]
Let us describe the overall setup for the Yoneda cryptosystem encoding the NTRUEncrypt systen. First, we take the associated limit sketch $T$ to be the terminal category $\mathbf{1} = \{\ast\}$. As a result, the associated reflective subcategory is given by the following identity.
\[
\mathbf{Mod}(\mathbf{1}) = [\mathbf{1},\mathbf{Set}]
\]
The Yoneda cryptosystem is defined at the obvious object. The NTRUEncrypt scheme is then defined as a Yoneda encryption scheme of the form
\[
\mathcal{Y}(\mathbb{Z}_q[X]_N,\mathbb{Z}_q[X]_N,\mathbb{Z}_p[X]_N,\mathbb{Z}_q[X]_N|\mathbb{Z}_3[X]_N,\mathsf{E},\mathsf{D})
\]
where $p$ and $q$ are two primes. We will also require the cryptosystem to be restricted along the  subset $M = \{\mathsf{id}:\mathbb{Z}_q[X]_N \to \mathbb{Z}_q[X]_N\}$ of $\mathbf{Set}(\mathbb{Z}_q[X]_N,\mathbb{Z}_q[X]_N)$ containing the identity function. The collection of encryption algorithms is indexed by the set $\mathbb{Z}_3[X]_N$ such that for every polynomial $r \in \mathbb{Z}_3[X]_N$ we have an encryption function as follows.
\[
\mathsf{E}_r:
\left(
\begin{array}{ccc}
\mathbb{Z}_q[X]_N \times \mathbb{Z}_p[X]_N & \to & \mathbb{Z}_q[X]_N\\
(g,m) & \mapsto & g \cdot r + m
\end{array}
\right)
\]
The decryption algorithm is given by the following function, which is defined for a fixed pair $(g_0,g_1) \in Q_{p,q}^N$. The modulo operation by $q$ used in the decryption below is assumed to output the only representative in $\mathbb{Z}_q[X]_N$ whose coefficients are in the interval $[-q/2,q/2)$.
\[
\mathsf{D}:
\left(
\begin{array}{ccc}
\mathbb{Z}_q[X]_N \times \mathbb{Z}_q[X]_N & \to & \mathbb{Z}_p[X]_N\\
(g,c) & \mapsto & \mathsf{inv}_p(g_0) \cdot \Big( g_0 \cdot c \,(\mathsf{mod}\,q) \Big)
\end{array}
\right)
\]
With such functions, we can determine the form of the set $\mathcal{R}(f)$  for every element $f \in \mathbb{Z}_q[X]_N$, as shown below.
\begin{align*}
\mathcal{R}(f) & = \left\{f'~\left|~\def\arraystretch{1.6}\begin{array}{l}\forall r \in \mathbb{Z}_3[X]_N,\,\forall h \in M,\,\forall m \in \mathbb{Z}_p[X]_N\,:\\
\mathsf{inv}_p(g_0) \Big( g_0 \big(h(f') r + m\big) \,(\mathsf{mod}\,q) \Big) = m\end{array}\right.\right\}&(\textrm{Convention \ref{conv:reversors}})\\
& = \left\{f'~\left|~\def\arraystretch{1.6}\begin{array}{l}\forall r \in \mathbb{Z}_3[X]_N,\,\forall m \in \mathbb{Z}_p[X]_N,\,\exists k \in \mathbb{Z}[X]\,:\\
g_0 f' r + g_0m \,(\mathsf{mod}\,q) = g_0m-kp \textrm{ in }\mathbb{Z}[X]_N\end{array}\right.\right\}&(\textrm{Because }M=\{\mathsf{id}\})\\
& \supseteq\{\mathsf{inv}_q(g_0) \cdot g_1 \cdot p \textrm{ computed in }\mathbb{Z}_q[X]_N \}&(\textrm{Remark \ref{rem:NTRU:correctness}})
\end{align*}
The fact that $\mathcal{R}(f)$ is non-empty for all $f \in \mathbb{Z}_q[X]_N$ shows that the Yoneda cryptosystem is well-defined. Following Generation \ref{gen:YES}, the key generation step undertaken by $\mathsf{Alice}$ for this cryptosystem consists of:
\begin{itemize}
\item[-] an identity functor $H:\mathbf{1} \to \mathbf{1}$;
\item[-] an initializer $f_0$ in $\mathsf{lim}_{\mathbf{1}}(\mathbb{Z}_q[X]_N \circ H^{\mathsf{op}}) = \mathbb{Z}_q[X]_N$;
\item[-] and a private key $x$ in the singleton $\mathsf{col}_{\mathbf{1}}(Y \circ H)(\ast) = Y(\ast)(\ast) = \{\mathsf{id}_{\ast}\}$.
\end{itemize}
Here, the private key $x$ is taken to be trivial, primarily due to the fact that most of the sensitive information to be kept secret resides in the definition of the decryption algorithm. Also note that the initializer always satisfies the inclusion $\mathcal{R}(f_0) \supseteq \{\mathsf{inv}_q(g_0) \cdot g_1 \cdot p\}$ where the pair $(g_0,g_1)\in Q_{p,q}^N$ is fixed by the decryption algorithm above.
\end{application}

Let us now describe the publication steps. Interestingly, the NTRU cryptosystem is the first cryptosystem presented so far whose associated reversors do not seem to depend on the initializer. This feature is probably one of the main characteristics of NTRUEncrypt compared to other HE schemes.

\begin{publication}[NTRUEncrypt]
The key publication step for the NTRU cryptosystem is as follows: $\mathsf{Alice}$ sends the compute the polynomial $f' = \mathsf{inv}_q(g_0) \cdot g_1 \cdot p$ in $\mathbb{Z}_q[X]_N$ and send it to $\mathsf{Bob}$.
\end{publication}

\begin{application}[NTRUEncrypt as Yoneda]
The key publication step for the Yoneda cryptosystem unfolds as follows. $\mathsf{Alice}$ does the following:
\begin{itemize}
\item[-] she computes the element
\[
f = \phi_{H,\mathbb{Z}_q[X]_N}^{-1}(f_0)_{\ast}(x).
\]
This element corresponds to the image of function $\mathbf{1} \to \mathbb{Z}_q[X]_N$ picking out the element $f_0$. As a result, the element $f$ is equal to the element $f_0$.
\item[-] Since $\mathcal{R}(f) \supseteq \{\mathsf{inv}_q(g_0) \cdot g_1 \cdot p\}$, $\mathsf{Alice}$ can take $f'  = \mathsf{inv}_q(g_0) \cdot g_1 \cdot p$ and send the pair $(f_0,f')$ as a public key to $\mathsf{Bob}$.
\end{itemize}
Note that the element $f_0$ sent by $\mathsf{Alice}$ has no influence on the encryption-decryption protocole and should just be seen as a formal certificate, which will later be resent by $\mathsf{Bob}$ as a confirmation to $\mathsf{Alice}$.
\end{application}

We now verify that the NTRU cryptosystem and its Yoneda translation have corresponding encryption protocols. As mentioned above, the Yoneda formalism adds a formal certificate that can be used between $\mathsf{Alice}$ and a sending party to agree on a parameters to use for the communication.

\begin{encryption}[NTRUEncrypt]
The encryption step in the NTRU cryptosystem goes as follows: First, $\mathsf{Bob}$ selects
\begin{itemize}
\item[-] a random element $r \in \mathbb{Z}_3[X]_N$;
\item[-] and a message $m \in \mathbb{Z}_p[X]_N$;
\end{itemize}
and then sends the ciphertext $c = f' \cdot r + m$, computed in $\mathbb{Z}_q[X]_N$, to $\mathsf{Alice}$.
\end{encryption}

\begin{application}[NTRUEncrypt as Yoneda]
The encryption step for the Yoneda cryptosystem unfolds as follows. First, $\mathsf{Bob}$ selects:
\begin{itemize}
\item[-] a noise parameter $r \in \mathbb{Z}_3[X]_N$;
\item[-] the identity function $h:\mathbb{Z}_q[X]_N \to \mathbb{Z}_q[X]_N$.
\item[-] and a message $m \in \mathbb{Z}_p[X]_N$;
\end{itemize}
Then, $\mathsf{Bob}$ sends the following information to $\mathsf{Alice}$:
\begin{itemize}
\item[-] the element $c_1 = \mathsf{lim}_{\mathbf{1}}(h_H)(f_0) = f_0$ in $\mathbb{Z}_q[X]_N$;
\item[-] and the element $c_2 = \mathsf{E}_{r}(h_{\ast}(f'),m) = f' \cdot r + m$ in $\mathbb{Z}_q[X]_N$.
\end{itemize}
While the pair $(c_1, c_2)$ constitutes an encryption for the message $m$, its first component $c_1 = f_0$ is information already known to $\mathsf{Alice}$. As noted before, this data can be utilized by $\mathsf{Bob}$ as a certificate, ensuring that $\mathsf{Alice}$ is aware of the cryptosystem being employed by $\mathsf{Bob}$.
\end{application}

Now, we proceed to outline the decryption step for the NTRU cryptosystem. A vital prerequisite for the success of this step is to guarantee the non-emptiness of the set $Q_{p,q}^N$ introduced in Definition \ref{def:NTRU:Q}. This condition relies on the careful selection of parameters $p$, $q$, and $N$.

\begin{decryption}[NTRUEncrypt]
The decryption step for the NTRU cryptosystem requires $\mathsf{Bob}$ to:
\begin{itemize}
\item[-] compute the representative $m_0$ of $g_0 \cdot c$ in $\mathbb{Z}_q[X]_N$ that has all its coefficients in $[-q/2,q/2)$
;
\item[-] and then to compute the element $m'=\mathsf{inv}_p(g_0) \cdot m_0$ in $\mathbb{Z}_p[X]_N$.
\end{itemize}
It follows from Definition \ref{def:NTRU:Q} and the fact that $(g_0,g_1) \in Q_{p.q}^N$ that the element $m'$ is equal to the message $m$ in $\mathbb{Z}_p[X]_N$.
\end{decryption}

\begin{application}[NTRUEncrypt as Yoneda]
The decryption step for the Yoneda cryptosystem unfolds as follows. $\mathsf{Alice}$ computes:
\begin{itemize}
\item[-] the element
\[
d = \phi_{H,\mathbb{Z}_q[X]_N}^{-1}(c_1)_{\ast}(x).
\]
This element corresponds to the image of function $\mathbf{1} \to \mathbb{Z}_q[X]_N$ picking out the element $c_1$. As a result, the element $d$ is equal to the element $c_1 = f_0$.;
\item[-] and the element $m' = \mathsf{D}(d,c_2) = \mathsf{inv}_p(g_0) \Big( g_0 (f' r + m) \,(\mathsf{mod}\,q) \Big) $ in $\mathbb{Z}_p[X]_N$;
\end{itemize}
The identity $m' = m$ follows from Decryption \ref{dec:YES}.
\end{application}

\subsection{LWE-based cryptosystems}\label{ssec:LWE-based}
Learning with Error (LWE) serves as the fundamental basis for most, if not all, post-quantum cryptography and stands as the cornerstone for fully homomorphic encryption schemes. The inherent complexity of LWE is grounded in the difficulty of lattice-based problems. Oded Regev introduced the LWE problem in 2009, presenting a cryptosystem reliant on the problem's hardness on the torus \cite{Regev09}. Regev's cryptosystem has since sparked inspiration for a diverse array of similar encryption schemes \cite{BV11}. Since its inception, the LWE problem has undergone various formulations, particularly expanding to rings of polynomials, leading to the RLWE problem \cite{RLWE_def,RLWE_more,BGV12,FV12,TFHE,DGHV10,CKKS16}. The general version of LWE is articulated below, utilizing some terminology introduced in Convention \ref{conv:polynomials_1}.

\begin{convention}[Quotient rings as modules]\label{conv:ZX-module:quotient-rings}
For every ring $R$ and every polynomial $u$ in $R$, we will denote the quotient ring $R[X]/u(X)$ as $R[X]_u$. Unless specified otherwise, a ring of the form $R[X]_u$ will be seen as a $\mathbb{Z}[X]$-module whose action $\mathbb{Z}[X] \times R[X]_u \to R[X]_u$ is given by the multiplication of polynomials up to the canonical morphism $\mathbb{Z}[X] \hookrightarrow R[X] \to R[X]_u$.
\end{convention}

In this section, we use the notation introduced in Convention \ref{conv:ring-as-group:powers}.

\begin{definition}[GLWE]
Consider four positive integers $p$, $q$, $n$, $N$. For a given polynomial $u(X)$ in $\mathbb{Z}[X]$, the \emph{General LWE problem} entails finding $\mathbf{s} \in  \mathbb{Z}_q[X]_u^{(n)}$ given the following data:
\begin{itemize}
\item[1)] a polynomial $m \in \mathbb{Z}_p[X]_u$;
\item[2)] for every $i \in [N]$, a vector $\mathbf{a}_i \in  \mathbb{Z}_q[X]_u^{(n)}$;
\item[3)] for every $i \in [N]$, a vector $b_i \in  \mathbb{Z}_q[X]_u$ such that
\[
b_i = \langle \mathbf{a}_i, \mathbf{s} \rangle + e_i
\]
where $e_i$ is chosen in $\mathbb{Z}_q[X]_u$ according to some distribution $\chi$, and $\langle \mathbf{a}_i, \mathbf{s} \rangle$ denotes the scalar product in $\mathbb{Z}_q[X]_u^{(n)}$, namely the sum $\sum_{j=1}^n a_{i,j}s_{j}$ where $\mathbf{a}_i = (a_{i,1},\dots,a_{i,n})$ and $\mathbf{s} = (s_{1},\dots,s_{n})$.
\end{itemize}
For additional details on the definition of $\chi$, refer to \cite{Regev09} and other cited references such as \cite{RLWE_def,BGV12}.
\end{definition}

The Yoneda encryption scheme takes on two forms within FWE-based systems. If the cryptosystem discloses the data $\mathbf{a}_i$ used in the statement of LWE for its public key, the associated limit sketch $T$ under the Yoneda translation is non-trivial. On the other hand, if this data is not disclosed, the limit sketch is forced to be a terminal category. For instance, cryptosystems outlined in \cite{Regev09,BGV12,FV12,CKKS16} are instances of Yoneda encryption schemes with non-terminal limit sketches. In contrast, those elucidated in \cite{BV11,TFHE} are described by Yoneda encryption schemes featuring terminal limit sketches.
\smallskip

In this section, we exclusively explore schemes with non-terminal limit sketches. Readers can apply the insights developed throughout this article to understand other cryptosystems with terminal limit sketches \cite{BV11,TFHE,survey1}. Below, we offer a synthesis of the cryptosystems presented in \cite{Regev09,BGV12,FV12,CKKS16}, consolidating them into a single Yoneda encryption scheme. It is important to note that our approach aims to provide a comprehensive overview of the various features of the cryptosystems discussed in those works. However, a more specific refinement of the Yoneda framework may be necessary for a thorough recovery of the individual cryptosystems detailed therein.
\smallskip

We will now provide a broad overview of LWE-based cryptosystems and their transformation into Yoneda cryptosystems. Our focus will be on prominent schemes, including BGV, FV, and CKKS \cite{BGV12,FV12,CKKS16}, alongside the original cryptosystem introduced by Regev in \cite{Regev09}.

\begin{generation}[LWE-based]
The key generation step for the FWE-based cryptosystem goes as follows. $\mathsf{Alice}$ starts with the following selection:
\begin{itemize}
\item[-] four positive integers $p$, $q$, $n$, $N$;
\item[-] a polynomial $u(X)$ in $\mathbb{Z}[X]$;
\item[-] a vector $\mathbf{s}  = (s_{1},\dots,s_{n}) \in  \mathbb{Z}_q[X]_u^{(n)}$;
\end{itemize}
\end{generation}


\begin{convention}\label{conv:floor:ceil:notations}
For every real number $x$, we define the following integers:
\begin{itemize}
\item[-] $\lfloor x \rfloor$ as the greatest integer less than or equal to $x$ (floored value of $x$);
\item[-] $\lceil x \rceil$ as the smallest integer greater than or equal to $x$ (ceiling value of $x$);
\item[-] $\lfloor x \rceil$ as the integer closest to $x$ (rounded value of $x$).
\end{itemize}
When the rounding function $x \mapsto \lfloor x \rceil$ is applied to a polynomial, it is applied independently to each of its coefficients.
\end{convention}

\begin{application}[LWE-based as Yoneda]\label{app:LWE:setup:Yoneda}
Let us describe the overall setup for the Yoneda cryptosystem encoding the LWE-based cryptosystems described in \cite{Regev09,BGV12,FV12,CKKS16}. First, the associated limit sketch $T$ must be the limit sketch $T_{\mathbb{Z}[X]}$ defined for modules over the ring $\mathbb{Z}[X]$ (Convention \ref{conv:ring-to-sketch}). As a result, the associated reflective subcategory is given by the inclusion
\[
\mathbf{Mod}(T_{\mathbb{Z}[X]}) \hookrightarrow [T_{\mathbb{Z}[X]},\mathbf{Set}]
\]
and its left adjoint $L$. The Yoneda cryptosystems will all be defined at the object $1$ of $T_{\mathbb{Z}[X]}$ (see Convention \ref{conv:ring-to-sketch}). Specifically, the LWE-based cryptosystems that we investigate here are all expressed as Yoneda cryptosystems of the form
\[
\mathcal{Y}(\mathbb{Z}_q[X]_u^{(N)},\mathbb{Z}_q[X]_u,\mathbb{Z}_p[X]_u,\mathbb{Z}_q[X]_u|\mathbb{Z}_q[X]_u,\mathsf{E},\mathsf{D})
\]
where $u$ is some polynomial in $\mathbb{Z}[X]$ and where $p$, $q$ and $N$ are positive integers. It is straightforward to verify that $\mathbb{Z}_q[X]_u^{(N)}$ and $\mathbb{Z}_q[X]_u$ define $\mathbb{Z}[X]$-modules under the usual polynomial multiplication (see Convention \ref{conv:ZX-module:quotient-rings}). Given the variety of LWE-based cryptosystems, we assume, for convenience, that their associated Yoneda cryptosystems are restricted along a suitable set $M$ of morphisms.

The collection of encryption algorithms is indexed by the set $\mathbb{Z}_q[X]_u$ such that for every polynomial $r \in \mathbb{Z}_q[X]_u$ we have an encryption function of the following form, where $\delta_0$ and $\delta_1$ are some integers known to all parties.
\[
\mathsf{E}_r:
\left(
\begin{array}{ccc}
\mathbb{Z}_q[X]_u \times \mathbb{Z}_p[X]_u& \to & \mathbb{Z}_q[X]_u\\
(g,m) & \mapsto & \delta_0 \cdot m + \delta_1 \cdot r + g
\end{array}
\right)
\]
The decryption algorithm can vary slightly across different cryptosystems, yet it consistently takes the form shown below. Specifically, the decryption function incorporates a comparison operation $\mathsf{C}$ that gauges the relationship between its input and some given reference. The general form of $\mathsf{D}$ is as follows.
\[
\mathsf{D}:
\left(
\begin{array}{ccc}
\mathbb{Z}_q[X]_u \times \mathbb{Z}_q[X]_u & \to & \mathbb{Z}_p[X]_u\\
(g,c) & \mapsto & \mathsf{C}(c - g)
\end{array}
\right)
\]
For example:
\begin{itemize}
\item[1)] In Regev's cryptosystem \cite{Regev09}, we have:
\[
\delta_0 = 1, \quad \delta_1 = 0, \quad\textrm{and}\quad u(X) = X-1,
\]
such that $\mathbb{Z}_q[X]_u \cong \mathbb{Z}_q$. The function $\mathsf{C}:\mathbb{Z}_q \to \mathbb{Z}_p$ returns $0$ if its input is closer to $0$ than to $\lfloor p/2 \rfloor$, and $1$ otherwise.

\item[2)] In the BGV cryptosystem \cite{BGV12}, we have:
\[
\delta_0 = 1, \quad \delta_1 = 0, \quad\textrm{and}\quad u(X) \text{ is of the form } X^d - 1.
\]
The function $\mathsf{C}:\mathbb{Z}_q[X]_u \to \mathbb{Z}_p[X]_u$ is defined by the mapping $t \mapsto t \,(\mathsf{mod}\,p)$.

\item[3)] In the FV cryptosystem \cite{FV12}, we have:
\[
\delta_0 = \lfloor q/p \rfloor, \quad \delta_1 = 1, \quad\textrm{and}\quad u \text{ is a cyclotomic polynomial.}
\]
The function $\mathsf{C}:\mathbb{Z}_q[X]_u \to \mathbb{Z}_p[X]_u$ is defined by the mapping $t \mapsto \lfloor (p t)/q \rceil \,(\mathsf{mod}\,p)$, where the product $p \cdot t$ is computed in $\mathbb{Z}[X]_u$, and $(p t)/q$ is the quotient of the Euclidean division of $p t$ by $q$.

\item[4)] In the CKKS cryptosystem \cite{CKKS16}, we have:
\[
\delta_0 = 1, \quad \delta_1 = 1, \quad u = X^d + 1, \quad\textrm{and}\quad p = q.
\]
The function $\mathsf{C}:\mathbb{Z}_q[X]_u \to \mathbb{Z}_p[X]_u$ is defined to return $m$ for inputs of the form $t = m + e$, where $e$ is sampled from a bounded distribution. Although constructing $\mathsf{C}$ explicitly is infeasible, it represents the underlying goal of CKKS. For the purposes of this discussion, we will assume its existence, as doing so provides a framework for explaining how CKKS aligns with the Yoneda perspective.
\end{itemize}
With such functions, we can determine the form of the set $\mathcal{R}(f)$ (Convention \ref{conv:reversors}) for every $f \in \mathbb{Z}_q[X]_u^{(N)}$, as shown below.
\[
\mathcal{R}(f)  = \{f'~|~\forall h \in \mathcal{L}(\mathbb{Z}_q[X]_u^{(N)},\mathbb{Z}_q[X]_u),\,\forall m \in \mathbb{Z}_p[X]_u:\, \mathsf{C}(\delta_0 \cdot m + \delta_1 \cdot r + h(f') - h(f)) = m \}
\]
We can verify that the cryptosystems mentioned above satisfy the equation $\mathsf{C}(\delta_0 \cdot m + \delta_1 \cdot r) = m$ in $\mathbb{Z}_p[X]_u$ for elements $r$ that are small enough. This implies that, for suitably sampled elements $r$, the element $f$ itself belongs to $\mathcal{R}(f)$. However, for these cryptosystems to be secure, the set $\mathcal{R}(f)$ must include a wider range of elements, particularly elements of the form $f + e$, where the summand $e$ can be sampled from a random distribution on $\mathbb{Z}_q[X]_u^{(N)}$. For simplicity, we will assume that, for every $f \in \mathbb{Z}_q[X]_u$, the set $\mathcal{R}(f)$ includes such elements for an appropriately chosen set $M$. Following Generation \ref{gen:YES}, the key generation step undertaken by $\mathsf{Alice}$ for an LWE-based cryptosystem consists of
\begin{itemize}
\item[-] a non-empty discrete category $I = [n]$ containing $n$ objects (and their identities) and a functor
\[
H:\left(
\def\arraystretch{1.2}
\begin{array}{lll}
I & \to & T_{\mathbb{Z}[X]}^{\mathsf{op}}\\
k & \mapsto & 1
\end{array}
\right)
\]
picking out the object $1$ in the limit sketch $T_{\mathbb{Z}[X]}$;
\item[-] an initializer $f_0$ taken in the following set:
\[
\mathsf{lim}_{I^{\mathsf{op}}}(\mathbb{Z}_q[X]_u^{(N)} \circ H^{\mathsf{op}}) = \mathbb{Z}_q[X]_u^{(N\times n)}
\]
\item[-] and a private key $x$ taken in the following $\mathbb{Z}[X]$-module:
\[
\mathsf{col}_{I}(L \circ Y \circ H)(1) =\Big(\coprod^{n} Y(1)\Big)(1) \cong Y(n)(1) \cong \mathbb{Z}[X]^{(n)}
\]
\end{itemize}
Later, we will use the notations $f_0 = (f_{0,i,j})_{i,j}$ and $x = (x_j)_j$ to refer to the components of the elements $f_0$ and $x$ in $\mathbb{Z}_q[X]_u$ and $\mathbb{Z}[X]$, respectively. For convenience, we will also denote $f_{0,i} : = (f_{0,i,j})_{i,j}$ for every fixed index $i \in [N]$.
\end{application}

The publication steps outlined below involve a parameter $\delta_2$, which is explicitly described in Application \ref{app:LWE:publication:Yoneda}. In general, we can assume that $\delta_2$ is a carefully chosen integer that can be eliminated during the decryption process.

\begin{publication}[LWE-based]
The key publication step for LWE-based cryptosystems is defined as follows. $\mathsf{Alice}$ does the following:
\begin{itemize}
\item[-] she chooses $N$ random polynomials $\mathbf{a}_1,\dots,\mathbf{a}_N$ in $\mathbb{Z}_q[X]_u^{(n)}$;
\item[-] she chooses $N$ polynomials $e_1,\dots,e_N$ in $\mathbb{Z}_q[X]_u$ according to some distribution $\chi$;
\item[-] and she sends the following tuple to $\mathsf{Bob}$, where $\delta_2$ is an integer known to all parties:
\[
\Big((\mathbf{a}_k)_{k \in [N]},(\langle \mathbf{a}_k,\mathbf{s}\rangle + \delta_2 \cdot e_k)_{k \in [N]}\Big)
\]
\end{itemize}
From now on, we will use the notations $\mathbf{a}_k = (a_{k,j})_{j}$ to refer to the components $a_{k,j}$ of the element $\mathbf{a}_k$ in $\mathbb{Z}_q[X]_u$ for every $k \in [N]$.
\end{publication}

\begin{application}[LWE-based as Yoneda]\label{app:LWE:publication:Yoneda}
The key publication step for the Yoneda cryptosystem unfolds as follows. $\mathsf{Alice}$ does the following:
\begin{itemize}
\item[-] she computes the element
\[
f = \phi_{H,\mathbb{Z}_q[X]_u^{(N)}}^{-1}(f_0)_{1}(x)
\]
This element corresponds to the image of the element $x \in \mathbb{Z}[X]^{(n)}$ via the morphism
\[
\phi_{H,\mathbb{Z}_q[X]_u^{(N)}}^{-1}(f_0):\mathbb{Z}[X]^{(n)} \to \mathbb{Z}_q[X]_u^{(N)}
\]
that sends every element $y$ in $\mathbb{Z}[X]^{(n)}$ to the matrix product of $f_0$ with $y$. As a result, the element $f$ is equal to the matrix product $f_0x$.
\item[-] In Regev's cryptosystem, as well as in FV, CKKS, and BGV, one can sample elements $e \in \mathbb{Z}_q[X]_u^{(N)}$ from a given random distribution such that the set $\mathcal{R}(f)$ contains the element $f + \delta_2 \cdot e$, where $\delta_2$ is an integer known to all parties. As a result, $\mathsf{Alice}$ transmits a pair of the form $(f_0, f + \delta_2 \cdot e)$ to $\mathsf{Bob}$.;
\end{itemize}
For convenience, we will denote the pair sent by $\mathsf{Alice}$ to $\mathsf{Bob}$ as a tuple $(f_0,f')$.
\end{application}

An interesting characteristic of LWE-based cryptosystems is the presence of garbling operations (refer to Definition \ref{enc:YES}) with distinct sources and domains. In practical terms, this translates to the encryption algorithm concealing information through matrix products. The encryption step described below introduces two parameters $\delta_0$ and $\delta_1$, which correspond to those described in Application \ref{app:LWE:setup:Yoneda} for each LWE-based cryptosystem discussed there.

\begin{encryption}[LWE-based]
The encryption step for LWE-based cryptosystems is defined as follows. First, $\mathsf{Bob}$ selects:
\begin{itemize}
\item[-] an element $r$ in $\mathbb{Z}_q[X]_u$;
\item[-] an element $\mathbf{b} = (b_1, \dots, b_N)$ in $\mathbb{Z}_q[X]_u^{(N)}$;
\item[-] an element $m$ in $\mathbb{Z}_p[X]_u$.
\end{itemize}
Then, $\mathsf{Bob}$ sends the following information to $\mathsf{Alice}$, where $\delta_0$ and $\delta_1$ are parameters known to all parties:
\begin{itemize}
\item[-] the element $c_1 = \sum_{k=1}^N b_k \cdot \mathbf{a}_k$ in $\mathbb{Z}_q[X]_u^{(n)}$;
\item[-] the element $c_2 = \delta_0 \cdot m + \delta_1 \cdot r + \sum_{k=1}^N b_k \cdot (\langle \mathbf{a}_k, \mathbf{s} \rangle + \delta_2 \cdot e_k)$ in $\mathbb{Z}_q[X]_u$.
\end{itemize}

For example:
\begin{itemize}
\item[1)] In Regev's cryptosystem \cite{Regev09}, the integer $\delta_2$ is set to $1$, and the vector $\mathbf{b}$ belongs to $\{0,1\}^N$.
\item[2)] In the BGV cryptosystem \cite{BGV12}, the integer $\delta_2$ is set to $p$, and the vector $\mathbf{b}$ belongs to $\{0,1\}^N$.
\item[3)] In both the FV cryptosystem \cite{FV12} and the CKKS cryptosystem \cite{CKKS16}, the integer $\delta_2$ is set to $1$, and the pair $(N, n)$ is chosen as $(2, 1)$, where $a_{1,1}$ is the constant polynomial $1$. Consequently, the ciphertext components are given by:
\[
\left\{
\def\arraystretch{1.2}
\begin{array}{lll}
c_1 & = b_1 + b_2 \cdot a_{2,1} &\textrm{in }\mathbb{Z}_q[X]_u, \\
c_2 & = \delta_0 \cdot m + \underbrace{(r + b_1 \cdot (\mathbf{s} + e_1))}_{=b_1'} + b_2 \cdot (a_{2,1} \mathbf{s} + e_2) &\textrm{in }\mathbb{Z}_q[X]_u.
\end{array}
\right.
\]
In the FV cryptosystem, the three elements $b_1$, $b_2$, and $\mathbf{s}$ are sampled from a single distribution $\chi$. The FV scheme is recovered if the parameters $r$ and $e_1$ are sampled such that the distribution of the summand $b_1' = r + b_1 \cdot (\mathbf{s} + e_1)$ matches that of $\chi$.
In the CKKS cryptosystem, three distributions $\chi_1$, $\chi_2$, and $\chi_3$ are defined such that $b_1$ is sampled from $\chi_1$, $b_2$ is sampled from $\chi_2$, and $\mathbf{s}$ is sampled from $\chi_3$. The CKKS scheme is recovered if the parameters $r$ and $e_1$ are sampled such that the distribution of the summand $b_1'$ matches that of $\chi_1$.
\end{itemize}
For more detailed information on the encryption processes corresponding to the previous encryption schemes, we refer the reader to the respective references. Notably, in Regev's cryptosystem, the parameter $p$ is set to $2$, while in the CKKS scheme, the modulus $q$ exhibits a graded structure, which reflects the leveled design of the cryptographic scheme.
\end{encryption}

As previously mentioned, the encryption step in Yoneda encryption schemes employs a non-trivial garbling operation that calculates a scalar product of polynomials.

\begin{application}[LWE-based as Yoneda]
The encryption step for the Yoneda cryptosystem unfolds as follows. First, $\mathsf{Bob}$ selects:
\begin{itemize}
\item[-] a noise parameter $r \in \mathbb{Z}_q[X]_u$;
\item[-] a morphism of the following form in $\mathbf{Mod}(T_{\mathbb{Z}[X]})$.
\[
h:\mathbb{Z}_q[X]_u^{(N)} \to \mathbb{Z}_q[X]_u
\]
Such a morphism can be described as a linear map $y \mapsto b^Ty$ where $b$ is a fixed element in $\mathbb{Z}_q[X]_u^{(N)}$. We will use the notation $b = (b_k)_{k}$ to refer to the components $b_k$ of $b$ in $\mathbb{Z}_q[X]_u$;
\item[-] and a message $m \in \mathbb{Z}_p[X]_u$;
\end{itemize}
Then, $\mathsf{Bob}$ sends the following information to $\mathsf{Alice}$:
\begin{itemize}
\item[-] the element $c_1 = \mathsf{lim}_{I^{\mathsf{op}}}(h_H)(f_0) = f_0^Tb$ in $\mathbb{Z}_q[X]^{(n)}$ where $f_0^Tb$ denotes the matrix product of the transpose of $f_0$ with $b$;
\item[-] and the element $c_2 = \mathsf{E}_{r}(h_{1}(f'),m) = \delta_0 \cdot m + \delta_1 \cdot r + b^T(f_0x+ \delta_2 \cdot e)$ in $\mathbb{Z}_q[X]_N$.
\end{itemize}
\end{application}

To conclude this section, we omit the decryption steps for the previously discussed LWE-based cryptosystems and focus directly on decryption within the Yoneda encryption framework. Typically, the decryption protocols for LWE-based cryptosystems align with those derived from their corresponding Yoneda encryption schemes, employing the decryption function $\mathsf{D} : \mathbb{Z}_q[X]_u \times \mathbb{Z}_q[X]_u \to \mathbb{Z}_p[X]_u$.

\begin{application}[LWE-based as Yoneda]
The decryption step for the Yoneda cryptosystem unfolds as follows. $\mathsf{Alice}$ computes:
\begin{itemize}
\item[-] the element
\[
d = \phi_{H,\mathbb{Z}_q[X]_u}^{-1}(c_1)_{1}(x).
\]
This element corresponds to the image of the element $x \in \mathbb{Z}[X]^{(n)}$ via the morphism
\[
\phi_{H,\mathbb{Z}_q[X]_u}^{-1}(f_0):\mathbb{Z}[X]^{(n)} \to \mathbb{Z}_q[X]_u
\]
that sends every element $y$ in $\mathbb{Z}[X]^{(n)}$ to the scalar product of $c_1$ with $y$. As a result, the element $f$ is equal to the matrix product $c_1^Tx$.
\item[-] and the element $m' = \mathsf{D}(d,c_2) = \mathsf{C}(c_2 - c_1^Tx)$ in $\mathbb{Z}_p[X]_N$.
\end{itemize}
In the general case, the following identity holds:
\[
m' = \mathsf{C}\Big(\delta_0 \cdot m + \delta_1 \cdot r + \delta_2 \cdot b^T e \Big)
\]
Theoretically, the equality $m' = m$ is guaranteed by the calculations of Decryption \ref{dec:YES}.
\end{application}

\section{Applications}\label{sec:applications}

This section is divided into three key components. First, we formulate an encryption scheme employing the Yoneda Encryption Scheme. Then, we demonstrate the fulfillment of the homomorphism property by this scheme, thereby establishing its classification as a leveled FHE scheme. Lastly, we introduce a refreshing operation designed to transform this leveled FHE scheme into an unbounded FHE scheme.

\subsection{From Yoneda Lemma to leveled FHE schemes}\label{sec:FHE:from-Yoneda}
This section builds on the concepts introduced earlier, demonstrating how the Yoneda Lemma can be applied to construct a cryptosystem from foundational principles. We apply these insights to develop a novel cryptosystem, the \emph{Arithmetic Channel Encryption Scheme} (ACES). Notably, this construction does not rely on traditional bootstrapping techniques, such as those that use digit decomposition with deep circuit evaluation. Instead, it employs a simpler and more efficient noise management technique based on decomposition methods in an affine-like space.
\smallskip

The content presented from Definition \ref{def:maps-for-modulo-operations} to Example \ref{exa:maps-for-modulo-operations} is crucial for understanding the homomorphic correctness and security of ACES. This material forms the foundation for much of the subsequent technical content.

\begin{definition}[Maps for modulo operations]\label{def:maps-for-modulo-operations}
Let $p$ be a positive integer. We will denote by $\iota_p$ the inclusion $\mathbb{Z}_p \to \mathbb{Z}$ that sends every integer $x \in \{0,1,\dots,p\}$ to its corresponding element in $\mathbb{Z}$ and we will denote as $\pi_p$ the surjection $\mathbb{Z} \to \mathbb{Z}_p$ that sends every integer $x \in \mathbb{Z}$ to its value modulo $p$ in $\{0,1,\dots,p-1\}$, namely $x\,(\mathsf{mod}\,p)$.
\end{definition}

\begin{remark}[Ring homomorphism]\label{rem:maps-for-modulo-operations}
Let $p$ be a positive integer. The surjection $\pi_p: \mathbb{Z} \to \mathbb{Z}_p$ is a ring homomorphism: it preserves the units ($\pi_p(1) = 1$ and $\pi_p(0) = 0$) as well as the ring operations ($\pi_p(x+y) = \pi_p(x)+\pi_p(y)$, $\pi_p(x-y) = \pi_p(x)-\pi_p(y)$, and $\pi_p(x\cdot y) = \pi_p(x)\cdot\pi_p(y)$).
In addition, for every $x \in \{0,1,\dots,p-1\}$, we have $\pi_p(x) = x$. This means that the equation $\pi_p \circ \iota_p = \mathsf{id}_{\mathbb{Z}_p}$ holds.
\end{remark}

\begin{proposition}[Maps for modulo operations]\label{prop:maps-for-modulo-operations}
Let $p$ be a positive integer. The inclusion $\iota_p:\mathbb{Z}_p \to \mathbb{Z}$ preserves the units, namely $\iota_p(1) = 1$ and $\iota_p(0) = 0$. However, the map $\iota_p$ does not preserve ring operations. Instead, for every closed formula $F(x_1,\dots,x_n)$ using the addition, multiplication and subtraction on the elements $x_1,\dots,x_n$, we have the following implication:
\[
F(\iota_p(x_1),\dots,\iota_p(x_n)) \in \{0,1,\dots,p-1\} \quad\Rightarrow\quad \iota_p(F(x_1,\dots,x_n)) = F(\iota_p(x_1),\dots,\iota_p(x_n))
\]
\end{proposition}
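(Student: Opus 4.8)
The plan is to reduce the statement to the facts already recorded in Remark \ref{rem:maps-for-modulo-operations}: that $\pi_p : \mathbb{Z} \to \mathbb{Z}_p$ is a ring homomorphism, that $\pi_p \circ \iota_p = \mathsf{id}_{\mathbb{Z}_p}$, and that $\pi_p$ fixes every integer in $\{0,1,\dots,p-1\}$, so that $\iota_p \circ \pi_p$ restricts to the identity on this subset of $\mathbb{Z}$. The first assertion of the proposition, namely $\iota_p(0)=0$ and $\iota_p(1)=1$, is immediate from the definition of $\iota_p$ (Definition \ref{def:maps-for-modulo-operations}), since $0$ and $1$ are their own canonical representatives (for $p\geq 2$; the case $p=1$ is degenerate). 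That $\iota_p$ does not preserve ring operations in general requires no argument beyond exhibiting, say, $\iota_p(\pi_p(p-1)+\pi_p(1)) = \iota_p(0) = 0$ against $\iota_p(\pi_p(p-1)) + \iota_p(\pi_p(1)) = (p-1)+1 = p$.

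For the implication, I would proceed by structural induction on the term $F$. Writing $\bar x_i := \iota_p(x_i)\in\mathbb{Z}$, the key intermediate claim is that for every term $F$ built from the variables using $+$, $-$ and $\cdot$,
\[
\pi_p\big(F(\bar x_1,\dots,\bar x_n)\big) = F(x_1,\dots,x_n)\quad\textrm{in }\mathbb{Z}_p .
\]
The base case $F=x_i$ is precisely $\pi_p\circ\iota_p = \mathsf{id}_{\mathbb{Z}_p}$; for $F = G\pm H$ and $F = G\cdot H$ one uses that $\pi_p$ is a ring homomorphism to push $\pi_p$ through the outermost operation and then applies the inductive hypotheses to $G$ and $H$. (Allowing integer constants in $F$ changes nothing, since $\pi_p$ sends an integer constant to its class in $\mathbb{Z}_p$, which is how it is already interpreted on the right-hand side.)

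Granting this identity, I would apply $\iota_p$ to both sides to obtain
\[
\iota_p\big(F(x_1,\dots,x_n)\big) = \iota_p\big(\pi_p(F(\bar x_1,\dots,\bar x_n))\big),
\]
and then invoke the hypothesis: if the integer $y := F(\bar x_1,\dots,\bar x_n)$ lies in $\{0,1,\dots,p-1\}$, then $\pi_p(y)=y$ by Remark \ref{rem:maps-for-modulo-operations}, whence $\iota_p(\pi_p(y)) = \iota_p(y) = y$, which is exactly the claimed equality $\iota_p(F(x_1,\dots,x_n)) = F(\bar x_1,\dots,\bar x_n)$. The only point demanding care is the bookkeeping about where each expression is evaluated — the left-hand side of the target equation lives in $\mathbb{Z}_p$, the right-hand side in $\mathbb{Z}$, and the intermediate identity is the bridge between them — but no step is computationally substantial, so there is no real obstacle beyond setting up the induction cleanly.
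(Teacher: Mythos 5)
Your proposal is correct and takes essentially the same route as the paper: both arguments rest on the identity $\pi_p(F(\iota_p(x_1),\dots,\iota_p(x_n))) = F(x_1,\dots,x_n)$, obtained from the ring-homomorphism property of $\pi_p$ together with $\pi_p \circ \iota_p = \mathsf{id}_{\mathbb{Z}_p}$ (your structural induction just makes this step explicit). The only cosmetic difference is the finish: you apply $\iota_p$ and use that $\iota_p \circ \pi_p$ fixes $\{0,1,\dots,p-1\}$, while the paper observes that the difference of the two integers is a multiple of $p$ lying strictly between $-p$ and $p$, hence zero.
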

\begin{proof}
First, it follows from Remark \ref{rem:maps-for-modulo-operations} that the following equations hold:
\[
\pi_p(F(\iota_p(x_1),\dots,\iota_p(x_n))) = F(\pi_p(\iota_p(x_1)),\dots,\pi_p(\iota_p(x_n))) = F(x_1,\dots,x_n) = \pi_p(\iota_p(F(x_1,\dots,x_n)))
\]
Then, it follows from the previous equation and the homomorphic properties of $\pi_p$ that we have the following identity:
\[
\pi_p(F(\iota_p(x_1),\dots,\iota_p(x_n)) - \iota_p(F(x_1,\dots,x_n))) = 0
\]
By definition of $\pi_p$, we deduce that there exists an integer $k$ for which the following equation holds.
\[
F(\iota_p(x_1),\dots,\iota_p(x_n)) - \iota_p(F(x_1,\dots,x_n)) = kp
\]
Since $F(\iota_p(x_1),\dots,\iota_p(x_n)) \in \{0,1,\dots,p-1\}$ (by assumption) and $\iota_p(F(x_1,\dots,x_n)) \in \{0,1,\dots,p-1\}$ (by definition), we deduce that the only value possible for $k$ is $0$. This shows the statement.
\end{proof}

\begin{example}[Modulo operations]\label{exa:maps-for-modulo-operations}
Let $p = 5$. The following equalities and inequalities illustrate the statement of Proposition \ref{prop:maps-for-modulo-operations}:
\begin{itemize}
\item[1)] $\iota_5(4) \cdot \iota_5(3) = 12 \neq \iota_5(4\cdot 3) = \iota_5(12) = 2$
\item[2)] $\iota_5(4) \cdot \iota_5(3) - \iota_5(10) = 2 = \iota_5(2) = \iota_5(4\cdot3-10)$
\end{itemize}
In general, for every closed formula $F(x_1,\dots,x_n)$ using additions, multiplications and subtractions, the difference $F(\iota_5(x_1),\dots,\iota_5(x_n)) - \iota_5(F(x_1,\dots,x_n))$ is a multiple of $p=5$.
\end{example}

The following definition introduces the concept of arithmetic channels. The construction of these objects will be central to generating an instance of ACES.

\begin{definition}[Arithmetic channels]\label{def:arith-channel}
An \emph{arithmetic channel} consists of a tuple $\mathsf{C} = (p,q,\omega,u)$ where:
\begin{itemize}
\item[1)] $p$, $q$ and $\omega$ are positive integers such that $p< q$;
\item[2)] $u$ is a polynomial in $\mathbb{Z}[X]$ such that $u(\omega) \equiv 0 \,(\mathsf{mod}\,q)$;
\end{itemize}
For such a structure $\mathsf{C}$, we will denote as $\intbrackets{\mathsf{C}}$ the function $\mathbb{Z}_q[X]_u \to \mathbb{Z}_q$ that sends every polynomial $v$ in $\mathbb{Z}_q[X]_u$ to the evaluation $v(\omega)$ in $\mathbb{Z}_q$. Specifically, the operation $\intbrackets{\mathsf{C}}$ is computed as follows: first, the representative $v$ is embedded in $\mathbb{N}[X]$ and then its evaluation $v(\omega)$ in $\mathbb{N}$ is computed modulo $q$.
\end{definition}

The homomorphic properties associated with ACES find their basis in the result outlined in Proposition \ref{prop:channel-homomorphism} below. This result exploits the interplay between formal polynomials and polynomial functions, a connection that we leverage through quotiented polynomial rings.

\begin{proposition}\label{prop:channel-homomorphism}
Let $\mathsf{C} = (p,q,\omega,u)$ be an arithmetic channel. For every pair $(v_1,v_2)$ of polynomials in $\mathbb{Z}_q[X]_u$, the following equations hold in the ring $\mathbb{Z}_q$.
\[
\intbrackets{\mathsf{C}}(v_1 \cdot v_2) = \intbrackets{\mathsf{C}}(v_1) \cdot \intbrackets{\mathsf{C}}(v_2)
\quad\quad\quad\quad
\intbrackets{\mathsf{C}}(v_1 + v_2) = \intbrackets{\mathsf{C}}(v_1) + \intbrackets{\mathsf{C}}(v_2)
\]
Since $\intbrackets{\mathsf{C}}(1) = 1$ and $\intbrackets{\mathsf{C}}(0) = 0$, it follows that $\intbrackets{\mathsf{C}}$ defines a ring morphism $\mathbb{Z}_q[X]_u \to \mathbb{Z}_q$.
\end{proposition}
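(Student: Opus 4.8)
The plan is to realize $\intbrackets{\mathsf{C}}$ as the ring homomorphism induced on the quotient ring $\mathbb{Z}_q[X]_u = \mathbb{Z}_q[X]/(u)$ by evaluation at $\omega$, and then to match this clean description with the hands-on recipe given in Definition \ref{def:arith-channel}. First I would consider the composite
\[
\psi:\mathbb{Z}[X] \xrightarrow{\;p(X)\,\mapsto\, p(\omega)\;} \mathbb{Z} \xrightarrow{\;\pi_q\;} \mathbb{Z}_q,
\]
which is a ring homomorphism because evaluation at a fixed integer is a ring homomorphism $\mathbb{Z}[X]\to\mathbb{Z}$ and $\pi_q$ is a ring homomorphism by Remark \ref{rem:maps-for-modulo-operations}. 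In particular $\psi(1)=1$ and $\psi(0)=0$, since the representative of the constant polynomial $1$ (resp.\ $0$) evaluates to $1$ (resp.\ $0$) in $\mathbb{Z}$, and $q>1$ because $p<q$ and $p\geq 1$.

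Next I would identify the kernel of $\psi$. Since $\pi_q(q)=0$ we have $q\,\mathbb{Z}[X]\subseteq\ker\psi$; and since condition (2) of Definition \ref{def:arith-channel} gives $u(\omega)\equiv 0\,(\mathsf{mod}\,q)$, we also get $\psi(u)=\pi_q(u(\omega))=0$, so $u(X)\in\ker\psi$. Hence the ideal $(q,u(X))$ of $\mathbb{Z}[X]$ is contained in $\ker\psi$. Because reducing coefficients modulo $q$ and then reducing modulo $u$ identifies $\mathbb{Z}[X]/(q,u(X))$ with $\mathbb{Z}_q[X]/(u)=\mathbb{Z}_q[X]_u$, the universal property of quotient rings yields a unique ring homomorphism $\bar\psi:\mathbb{Z}_q[X]_u\to\mathbb{Z}_q$ through which $\psi$ factors.

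Finally I would check that $\bar\psi$ is exactly $\intbrackets{\mathsf{C}}$. By definition $\intbrackets{\mathsf{C}}$ sends a class $v\in\mathbb{Z}_q[X]_u$ to $\pi_q(\tilde v(\omega))$, where $\tilde v\in\mathbb{N}[X]\subseteq\mathbb{Z}[X]$ is the chosen representative; but $\pi_q(\tilde v(\omega))=\psi(\tilde v)=\bar\psi(v)$ by construction, and this is independent of the representative precisely because any two lifts of $v$ to $\mathbb{Z}[X]$ differ by an element of $(q,u(X))\subseteq\ker\psi$. Thus $\intbrackets{\mathsf{C}}=\bar\psi$ is a ring homomorphism, which immediately gives the two displayed multiplicativity and additivity identities in $\mathbb{Z}_q$, as well as $\intbrackets{\mathsf{C}}(1)=1$ and $\intbrackets{\mathsf{C}}(0)=0$.

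The only slightly delicate point is this last reconciliation: when one multiplies two canonical representatives $\tilde v_1,\tilde v_2$ in $\mathbb{Z}[X]$ and then brings $\tilde v_1\tilde v_2$ back to canonical form (reduce coefficients modulo $q$, reduce modulo $u$), the correction terms introduced lie in $q\,\mathbb{Z}[X]+u(X)\,\mathbb{Z}[X]$, and these vanish under $\psi$ thanks to $\pi_q(q)=0$ and $u(\omega)\equiv 0\,(\mathsf{mod}\,q)$; the analogous (and easier) bookkeeping for addition needs only the $q\,\mathbb{Z}[X]$ part, since the sum of two representatives of degree $<\deg u$ is already reduced modulo $u$. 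Everything else is the standard fact that evaluation at an integer and coefficientwise reduction modulo $q$ are ring homomorphisms.
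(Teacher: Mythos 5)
Your proposal is correct, but it is organized differently from the paper's argument. The paper proves the two identities by a direct computation with representatives: it writes the canonical representative $r$ of $v_1\square v_2$ as $r(X)=v_1(X)\,\square\,v_2(X)+u(X)s(X)+qt(X)$ in $\mathbb{Z}[X]$, evaluates at $\omega$, and observes that the correction terms die modulo $q$ because $u(\omega)\equiv 0\,(\mathsf{mod}\,q)$. You instead package the same key fact structurally: the evaluation-then-reduction map $\psi:\mathbb{Z}[X]\to\mathbb{Z}_q$ is a ring homomorphism whose kernel contains the ideal $(q,u(X))$, so it factors through $\mathbb{Z}[X]/(q,u(X))\cong\mathbb{Z}_q[X]_u$, and the induced map is $\intbrackets{\mathsf{C}}$ because it is independent of the choice of lift. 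The two routes rest on the identical arithmetic ingredient ($q$ and $u(\omega)$ vanish after reduction), but yours buys all four homomorphism identities at once from the universal property of quotient rings, with the well-definedness over representatives made explicit, whereas the paper's computation is more elementary and stays entirely within the representative-based description of $\mathbb{Z}_q[X]_u$ given in Convention \ref{conv:polynomials_1}. One small remark: your identification $\mathbb{Z}[X]/(q,u(X))\cong\mathbb{Z}_q[X]_u$ tacitly uses that the paper's representative-plus-Euclidean-division description of $\mathbb{Z}_q[X]_u$ really is the quotient ring $\mathbb{Z}_q[X]/(u)$, which requires the leading coefficient of $u$ to be invertible modulo $q$ (e.g.\ $u$ monic); the paper makes the same implicit assumption in its conventions, so this is not a gap relative to the paper, but it is worth flagging as the hypothesis your cleaner formulation leans on.
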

\begin{proof}
Suppose that the symbol $\square$ denotes either an addition or a multiplication in $\mathbb{Z}[X]$. This means that the operation $\square$ commutes with the evaluation of polynomials in $\mathbb{Z}$. Let $r \in \mathbb{N}[X]$ denote the representative of $v_1 \square v_2$ in $\mathbb{Z}_q[X]_u$. This means that there exist polynomials $s$ and $t$ such that the equation
\[
r(X) = v_1(X)  \square v_2(X) + u(X)s(X) + qt(X)
\]
holds in $\mathbb{Z}[X]$. This gives us the following equations:
\begin{align*}
\intbrackets{\mathsf{C}}(v_1 \square v_2) & = \intbrackets{\mathsf{C}}(r)\\
& = r(\omega)\\
& = v_1(\omega) \square v_2(\omega) + u(\omega)s(\omega) + qt(\omega)\\
& = \intbrackets{\mathsf{C}}(v_1) \square \intbrackets{\mathsf{C}}(v_2) + u(\omega)s(\omega) + qt(\omega)
\end{align*}
Since $u(\omega)$ can be factored by $q$ (see Definition \ref{def:arith-channel}), the previous sequence of equations shows that the relation $\intbrackets{\mathsf{C}}(v_1 \square v_2) = \intbrackets{\mathsf{C}}(v_1) \square \intbrackets{\mathsf{C}}(v_2)$ holds in $\mathbb{Z}_q$. This equation proves the statement.
\end{proof}

The morphism defined in Proposition \ref{prop:channel-homomorphism} will enable us to further explore the relationship between formal polynomials and polynomial functions in Proposition \ref{prop:chi:properties} and Definition \ref{def:vanishing-ideal}. In particular, the latter draws on a well-established connection between ideals and their zero locus.

\begin{convention}[Leveled ideal]\label{def:chi}
Let $p$ be a positive integer. For every non-negative integer $k$, we will denote as $\chi(k)$ the subset $\{0,p,\dots,kp\}$ of $\mathbb{N}$ consisting of all multiples of $p$, from $0$ to $kp$.
\end{convention}

\begin{proposition}\label{prop:chi:properties}
Let $p$ be a non-negative integer. The following properties hold:
\begin{itemize}
\item[1)] if $a \in \chi(k_a)$ and $b \in \chi(k_b)$, then the addition $a+b$ in $\mathbb{Z}$ is also in $\chi(k_a+k_b)$;
\item[1)] if $a \in \{0,1,\dots,k_ap\}$ and $b \in \chi(k_b)$, then the multiplication $a \cdot b$ in $\mathbb{Z}$ is also in $\chi(k_ak_bp)$;
\end{itemize}
\end{proposition}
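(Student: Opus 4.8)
The plan is to unwind the definition of $\chi$ from Convention \ref{def:chi} and then perform elementary arithmetic on the indices, so no machinery beyond the definitions is needed. Recall that $\chi(k)$ is precisely the set $\{jp \mid 0 \le j \le k\}$ of multiples of $p$ lying between $0$ and $kp$; every membership statement about $\chi$ can therefore be translated into an inequality about the corresponding multiplier.

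For the first item, I would use the hypotheses to write $a = ip$ with $0 \le i \le k_a$ and $b = jp$ with $0 \le j \le k_b$. Then $a+b = (i+j)p$, and since $0 \le i+j \le k_a+k_b$, this exhibits $a+b$ as an element of $\chi(k_a+k_b)$, which is exactly the claim.

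For the second item, I would again write $b = jp$ with $0 \le j \le k_b$ using that $b \in \chi(k_b)$, while keeping $a$ as an arbitrary integer with $0 \le a \le k_a p$. Then $a\cdot b = (aj)p$, which is a multiple of $p$, and the estimate $aj \le (k_a p)\cdot k_b = k_a k_b p$ shows that $a\cdot b$ belongs to $\{0,p,\dots,(k_a k_b p)p\} = \chi(k_a k_b p)$, as required.

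There is essentially no obstacle here; the only point requiring mild care is the bookkeeping of the two roles played by $p$ in the conclusion of the second item, namely remembering that the top element of $\chi(k_a k_b p)$ is $k_a k_b p^2$ and not $k_a k_b p$, so that the bound $aj \le k_a k_b p$ on the multiplier is what one must verify. Everything else follows directly by substitution.
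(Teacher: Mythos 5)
Your proof is correct and matches the paper's approach: the paper simply asserts that the claims follow directly from the definition of $\chi$ in Convention \ref{def:chi}, and your unwinding of that definition (writing elements as bounded multiples of $p$ and bounding the multipliers) is exactly the elementary verification intended, with the careful bookkeeping of $k_a k_b p$ versus $k_a k_b p^2$ being the only nontrivial point.
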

\begin{proof}
Directly follows from Convention \ref{def:chi}.
\end{proof}

\begin{definition}[Vanishing ideal]\label{def:vanishing-ideal}
For every arithmetic channel $\mathsf{C} = (p,q,\omega,u)$ and every non-negative integer $k$, we define the \emph{vanishing ideal} $\mathcal{I}_k(\mathsf{C})$ of $\mathsf{C}$ as the inverse image of the set $\chi(k)$ via the composition of the ring morphism $\intbrackets{\mathsf{C}}:\mathbb{Z}_q[X]_u \to \mathbb{Z}_q$ with the inclusion $\iota_q:\mathbb{Z}_q \to \mathbb{Z}$.
This means that the following equation holds:
\[
\mathcal{I}_k(\mathsf{C}) = \{e \in \mathbb{Z}_q[X]_u~|~\iota_q(\intbrackets{\mathsf{C}}(e)) \in \chi(k)\}
\]
\end{definition}

\begin{example}[Vanishing element]\label{rem:arith-channel:vanishing-element}
Let $\mathsf{C} = (p,q,\omega,u)$ be an arithmetic channel and suppose that $\omega$ is invertible in $\mathbb{Z}_q$. Let us also denote as $d$ the degree of the polynomial $u$ in $\mathbb{Z}[X]$. For every non-negative integer $k$, the set $\mathcal{I}_k(\mathsf{C})$ contains the element
\[
e(X) = \Big(\big(p\ell\omega^{-s} - \sum_{j=0,j \neq s}^{d-1} a_{j}\omega^{j-s} \big)~(\mathsf{mod}~q)\Big)X^{s} +  \sum_{j=0,j \neq s}^{d-1} a_{j} X^j
\]
for every integer $\ell \in \{0,1,\dots,k\}$, every integer $s \in [d-1]$ and every coefficient $a_{j} \in \mathbb{Z}_q$. In particular, the previous formula can be used to randomly generate elements from $\mathcal{I}_k(\mathsf{C})$.
\end{example}

\begin{proposition}[Leveled ideal]\label{prop:level-ideal:vanishing-ideal}
Let $\mathsf{C} = (p,q,\omega,u)$ be an arithmetic channel. The collection $(\mathcal{I}_k(\mathsf{C}))_k$ satisfies the following properties for every pair $(k_1,k_2)$ of non-negative integers:
\begin{itemize}
\item[1)] for every $e_1 \in \mathcal{I}_{k_1}(\mathsf{C})$ and every $e_2 \in \mathcal{I}_{k_2}(\mathsf{C})$, the following implication holds.
\[
k_1+k_2 < q/p \quad\quad \Rightarrow \quad\quad e_1 + e_2\in \mathcal{I}_{k_1+k_2}(\mathsf{C})
\]
\item[2)] for every $e_1 \in \mathbb{Z}_q[X]_u$ such that $\intbrackets{\mathsf{C}}(e_1) \in \{0,1,\dots,k_1p\}$ and every $e_2 \in \mathcal{I}_{k_2}(\mathsf{C})$, the following implication holds.
\[
k_1k_2p < q/p \quad\quad \Rightarrow \quad\quad e_1 \cdot e_2 \in \mathcal{I}_{k_1k_2p}(\mathsf{C})
\]
\end{itemize}
\end{proposition}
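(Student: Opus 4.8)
The plan is to reduce both implications to the interplay of three already-established facts: the ring morphism $\intbrackets{\mathsf{C}}:\mathbb{Z}_q[X]_u \to \mathbb{Z}_q$ of Proposition \ref{prop:channel-homomorphism}, the closure of the sets $\chi(k)$ under the relevant operations from Proposition \ref{prop:chi:properties}, and the ``no-overflow'' principle for the inclusion $\iota_q$ provided by Proposition \ref{prop:maps-for-modulo-operations}. Throughout I identify $\intbrackets{\mathsf{C}}(e)$ with its chosen integer representative $\iota_q(\intbrackets{\mathsf{C}}(e)) \in \{0,1,\dots,q-1\}$, so that, by Definition \ref{def:vanishing-ideal}, membership $e \in \mathcal{I}_k(\mathsf{C})$ reads simply as $\iota_q(\intbrackets{\mathsf{C}}(e)) \in \chi(k)$.

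For item (1): given $e_1 \in \mathcal{I}_{k_1}(\mathsf{C})$ and $e_2 \in \mathcal{I}_{k_2}(\mathsf{C})$, I have $\iota_q(\intbrackets{\mathsf{C}}(e_1)) \in \chi(k_1)$ and $\iota_q(\intbrackets{\mathsf{C}}(e_2)) \in \chi(k_2)$. Proposition \ref{prop:chi:properties}(1) then places their integer sum in $\chi(k_1+k_2)$, and since every element of $\chi(k_1+k_2)$ is bounded by $(k_1+k_2)p$, the hypothesis $k_1+k_2 < q/p$ guarantees that this sum lies in $\{0,1,\dots,q-1\}$. Applying Proposition \ref{prop:maps-for-modulo-operations} to the formula $F(x_1,x_2)=x_1+x_2$ (with modulus $q$ in place of $p$) and using that $\intbrackets{\mathsf{C}}$ is additive, I obtain $\iota_q(\intbrackets{\mathsf{C}}(e_1+e_2)) = \iota_q(\intbrackets{\mathsf{C}}(e_1)) + \iota_q(\intbrackets{\mathsf{C}}(e_2)) \in \chi(k_1+k_2)$, which is exactly $e_1+e_2 \in \mathcal{I}_{k_1+k_2}(\mathsf{C})$. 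For item (2): given $e_1$ with $\iota_q(\intbrackets{\mathsf{C}}(e_1)) \in \{0,1,\dots,k_1p\}$ and $e_2 \in \mathcal{I}_{k_2}(\mathsf{C})$, Proposition \ref{prop:chi:properties}(2) places the integer product $\iota_q(\intbrackets{\mathsf{C}}(e_1)) \cdot \iota_q(\intbrackets{\mathsf{C}}(e_2))$ in $\chi(k_1k_2p)$; elements of that set are bounded by $(k_1k_2p)p = k_1k_2p^2$, so the hypothesis $k_1k_2p < q/p$ (equivalently $k_1k_2p^2 < q$) keeps the product in $\{0,1,\dots,q-1\}$. Proposition \ref{prop:maps-for-modulo-operations} applied to $F(x_1,x_2)=x_1\cdot x_2$ (modulus $q$), together with multiplicativity of $\intbrackets{\mathsf{C}}$, then yields $\iota_q(\intbrackets{\mathsf{C}}(e_1 \cdot e_2)) = \iota_q(\intbrackets{\mathsf{C}}(e_1)) \cdot \iota_q(\intbrackets{\mathsf{C}}(e_2)) \in \chi(k_1k_2p)$, i.e. $e_1 \cdot e_2 \in \mathcal{I}_{k_1k_2p}(\mathsf{C})$.

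The one genuinely delicate point — and the reason the numerical hypotheses appear at all — is the passage from arithmetic in $\mathbb{Z}_q$, where $\intbrackets{\mathsf{C}}$ is a ring map, to arithmetic of the chosen integer representatives in $\mathbb{Z}$, where the sets $\chi(k)$ actually live; this is precisely the failure of $\iota_q$ to be a ring homomorphism, which Proposition \ref{prop:maps-for-modulo-operations} controls. Everything therefore hinges on verifying that the relevant sum, respectively product, of representatives does not exceed $q-1$, and that is exactly what $k_1+k_2 < q/p$, respectively $k_1k_2p < q/p$, ensures via the bound on the largest element of $\chi(\cdot)$. Once that bound is in place, the remainder is a routine bookkeeping combination of Propositions \ref{prop:channel-homomorphism}, \ref{prop:chi:properties} and \ref{prop:maps-for-modulo-operations}, with no further obstacles.
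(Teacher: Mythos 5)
Your proposal is correct and follows essentially the same route as the paper's proof: check that the hypothesis on $k_1,k_2$ keeps the integer sum (resp.\ product) of representatives below $q$, invoke Proposition \ref{prop:maps-for-modulo-operations} together with the ring-morphism property of Proposition \ref{prop:channel-homomorphism} to identify it with $\iota_q(\intbrackets{\mathsf{C}}(e_1+e_2))$ (resp.\ $\iota_q(\intbrackets{\mathsf{C}}(e_1\cdot e_2))$), and conclude membership in the appropriate $\chi(\cdot)$ via Definition \ref{def:vanishing-ideal}. The only cosmetic difference is that you cite Proposition \ref{prop:chi:properties} explicitly for the closure of the sets $\chi(k)$, which the paper uses implicitly.
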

\begin{proof}
The proof uses Proposition \ref{prop:channel-homomorphism} and Definition \ref{def:vanishing-ideal}. Let us show item 1). If we have $e_1 \in \mathcal{I}_{k_1}(\mathsf{C})$ and $e_2 \in \mathcal{I}_{k_2}(\mathsf{C})$ such that $k_1+k_2 < q/p$, then the following addition, computed in $\mathbb{Z}$, is less than $q$.
\[
\iota_q(\intbrackets{\mathsf{C}}(e_1)) + \iota_q(\intbrackets{\mathsf{C}}(e_2))
\]
First, Proposition \ref{prop:maps-for-modulo-operations} implies that the identity $\iota_q(\intbrackets{\mathsf{C}}(e_1)) + \iota_q(\intbrackets{\mathsf{C}}(e_2)) = \iota_q((\intbrackets{\mathsf{C}}(e_1) + \intbrackets{\mathsf{C}}(e_2))$ holds. Then, Proposition \ref{prop:channel-homomorphism} implies that the following equation holds.
\[
\iota_q(\intbrackets{\mathsf{C}}(e_1)) + \iota_q(\intbrackets{\mathsf{C}}(e_2)) = \iota_q(\intbrackets{\mathsf{C}}(e_1 + e_2))
\]
Proposition \ref{prop:chi:properties} and Definition \ref{def:vanishing-ideal} then give the relation $\iota_q(\intbrackets{\mathsf{C}}(e_1 + e_2)) \in \chi(k_1+k_2)$. Similarly, let us show item 2). If we have $e_1 \in \mathbb{Z}_q[X]_u$ such that $\intbrackets{\mathsf{C}}(e_1) \in \{0,1,\dots,k_1p\}$ and $e \in \mathcal{I}_{k_2}(\mathsf{C})$ such that $k_1k_2 p< q/p$, then the following multiplication, computed in $\mathbb{Z}$, is less than $q$.
\[
\iota_q(\intbrackets{\mathsf{C}}(e_1)) \cdot \iota_q(\intbrackets{\mathsf{C}}(e_2))
\]
First, Proposition \ref{prop:maps-for-modulo-operations} implies that the identity $\iota_q(\intbrackets{\mathsf{C}}(e_1)) \cdot \iota_q(\intbrackets{\mathsf{C}}(e_2)) = \iota_q((\intbrackets{\mathsf{C}}(e_1) \cdot \intbrackets{\mathsf{C}}(e_2))$ holds. Then, Proposition \ref{prop:channel-homomorphism} implies that the following equation holds.
\[
\iota_q(\intbrackets{\mathsf{C}}(e_1)) \cdot \iota_q(\intbrackets{\mathsf{C}}(e_2)) = \iota_q(\intbrackets{\mathsf{C}}(e_1 \cdot e_2))
\]
Proposition \ref{prop:chi:properties} and Definition \ref{def:vanishing-ideal} then give the relation $\iota_q(\intbrackets{\mathsf{C}}(e_1 + e_2)) \in \chi(k_1k_2q)$.
\end{proof}

\begin{definition}[Error]\label{def:arith-channel:E:errors}
For every arithmetic channel $\mathsf{C} = (p,q,\omega,u)$, we denote as $\mathcal{E}(\mathsf{C})$ the set of functions $r:\mathbb{Z}_q \to \mathbb{Z}_q[X]_u$ such that for every integer $m$ in $\mathbb{Z}_q$, the equation $\intbrackets{\mathsf{C}}(r(m)) = m$ holds in $\mathbb{Z}_q$.
\end{definition}

\begin{example}[Error]\label{rem:arith-channel:E:errors}
Let $\mathsf{C} = (p,q,\omega,u)$ be an arithmetic channel and suppose that $\omega$ is invertible in $\mathbb{Z}_q$. Let us also denote as $d$ the degree of the polynomial $u$ in $\mathbb{Z}[X]$. The set $\mathcal{E}(\mathsf{C})$ contains the function
\[
r:m \mapsto \Big(\big(m\omega^{-s_{m}} - \sum_{j=0,j \neq s_m}^{d-1} a_{m,j}\omega^{j-s_m} \big)~(\mathsf{mod}~q)\Big)X^{s_{m}} +  \sum_{j=0,j \neq s_m}^{d-1} a_{m,j} X^j
\]
for every integer $s_{m} \in [d-1]$ and every coefficient $a_{m,j} \in \mathbb{Z}_q$. While, in theory, the previous formula should give us a way to randomly generate elements from $\mathcal{E}(\mathsf{C})$, in practice, we will only need to generate a single image $r(m)$ for some hypothetical polynomial $r$. This idea is further emphasized in the proof of Proposition \ref{prop:homomorphic-properties:E:errors} below.
\end{example}

Readers familiar with the subject may recognize the upcoming result as an affine adaptation of the Zariski correspondence between points and ideals in algebraic geometry. As demonstrated in Proposition \ref{prop:homomorphic-properties:E:errors} and later emphasized in Proposition \ref{prop:FHE:decryption}, a significant portion of our subsequent developments involves leveraging the affine nature of polynomials. This inherent affineness guarantees that the encrypted outputs of ACES are homomorphically linked to the messages that they protect.

\begin{proposition}\label{prop:homomorphic-properties:E:errors}
Let $\mathsf{C} = (p,q,\omega,u)$ be an arithmetic channel. For every pair $(r_1,r_2)$ of elements in $\mathcal{E}(\mathsf{C})$ and every pair $(m_1,m_2)$ of elements in $\mathbb{Z}_q$, the following holds:
\begin{itemize}
\item[1)] there exists $r_3 \in \mathcal{E}(\mathsf{C})$ such that $r_1(m_1) \cdot r_2(m_2) = r_3(m_1 \cdot m_2)$ in $\mathbb{Z}_q[X]_u$;
\item[1)] there exists $r_4 \in \mathcal{E}(\mathsf{C})$ such that $r_1(m_1) +r_2(m_2) = r_4(m_1 + m_2)$ in $\mathbb{Z}_q[X]_u$;
\end{itemize}
\end{proposition}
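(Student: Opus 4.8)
The plan is to lean entirely on Proposition \ref{prop:channel-homomorphism}, which says that $\intbrackets{\mathsf{C}}:\mathbb{Z}_q[X]_u \to \mathbb{Z}_q$ is a ring morphism, together with the observation already anticipated in Example \ref{rem:arith-channel:E:errors}: in order to exhibit an element of $\mathcal{E}(\mathsf{C})$ realizing a given polynomial as one of its images, it is enough to pin that single value and extend the function elsewhere in any way that still respects the defining condition $\intbrackets{\mathsf{C}}(r(m)) = m$ for all $m \in \mathbb{Z}_q$.

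For item 1), I would set $w := r_1(m_1) \cdot r_2(m_2) \in \mathbb{Z}_q[X]_u$. Since $r_1,r_2 \in \mathcal{E}(\mathsf{C})$ and $\intbrackets{\mathsf{C}}$ is multiplicative,
\[
\intbrackets{\mathsf{C}}(w) = \intbrackets{\mathsf{C}}(r_1(m_1)) \cdot \intbrackets{\mathsf{C}}(r_2(m_2)) = m_1 \cdot m_2 .
\]
Then I would define $r_3 : \mathbb{Z}_q \to \mathbb{Z}_q[X]_u$ by $r_3(m) = w + (m - m_1 m_2)$, where the scalar $m - m_1 m_2 \in \mathbb{Z}_q$ is read as a constant polynomial. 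Because $\intbrackets{\mathsf{C}}$ is additive and fixes constants, $\intbrackets{\mathsf{C}}(r_3(m)) = \intbrackets{\mathsf{C}}(w) + (m - m_1 m_2) = m$ for every $m \in \mathbb{Z}_q$, so $r_3 \in \mathcal{E}(\mathsf{C})$, while $r_3(m_1 m_2) = w = r_1(m_1) \cdot r_2(m_2)$, which is exactly what is required.

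Item 2) is obtained by the same argument with the additive operation: put $w := r_1(m_1) + r_2(m_2)$, note that $\intbrackets{\mathsf{C}}(w) = m_1 + m_2$ by additivity of $\intbrackets{\mathsf{C}}$, and set $r_4(m) = w + (m - (m_1 + m_2))$; the identical verification gives $r_4 \in \mathcal{E}(\mathsf{C})$ and $r_4(m_1 + m_2) = r_1(m_1) + r_2(m_2)$.

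I do not expect any genuine obstacle: the whole force of the statement is carried by the homomorphism property of $\intbrackets{\mathsf{C}}$, which is the affine reflection of the Zariski-type correspondence alluded to before the proposition. The only point deserving a moment of care is that $r_3$ (resp. $r_4$) must satisfy the membership condition for $\mathcal{E}(\mathsf{C})$ at \emph{every} argument in $\mathbb{Z}_q$, not merely at $m_1 m_2$ (resp. $m_1 + m_2$); the affine correction term is precisely what makes this extension legitimate without disturbing the prescribed image. (If one prefers a non-affine construction, defining $r_3$ piecewise — equal to $w$ at $m_1 m_2$ and to the constant polynomial $m$ at every other $m$ — works for the same reason.)
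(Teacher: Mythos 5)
Your proof is correct and is essentially the paper's own argument: writing $r_i(m_i) = m_i + Q_i(X)$ with $Q_i(\omega)=0$, the paper defines $r_3(m) = m + m_1Q_2(X) + m_2Q_1(X) + Q_1(X)Q_2(X)$ and $r_4(m) = m + Q_1(X) + Q_2(X)$, which are exactly your $w + (m - m_1m_2)$ and $w + (m - (m_1+m_2))$, and your observation that $\mathcal{E}(\mathsf{C})$ imposes only the pointwise condition $\intbrackets{\mathsf{C}}(r(m)) = m$ is the same justification the paper relies on. The only cosmetic difference is that you check membership via the ring-morphism property of $\intbrackets{\mathsf{C}}$ (Proposition \ref{prop:channel-homomorphism}) while the paper evaluates the correction polynomials directly at $\omega$.
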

\begin{proof}
It is straightforward to construct two polynomials $Q_1(X)$ and $Q_2(X)$ in $\mathbb{Z}_q[X]_u$ for which the following equations hold.
\[
r_1(m_1) = m_1 + Q_1(X)\quad\quad\quad\quad r_2(m_2) = m_2 + Q_2(X)
\]
Since $r_1$ and $r_2$ are elements of $\mathcal{E}(\mathsf{C})$, it follows that $Q_1(\omega) = 0$ and $Q_2(\omega) = 0$. Now, if we take $r_3(m) = m + m_1Q_2(X) + m_2Q_1(X) + Q_1(X)Q_2(X)$ and $r_4(m) = m + Q_2(X) + Q_1(X)$ for every every element $m$ in $\mathbb{Z}_q$, then we can show that $r_3(m)(\omega) = m$ and $r_4(m)(\omega) = m$. This means that $r_3$ and $r_4$ belong to $\mathcal{E}(\mathsf{C})$. In addition, we can check that the following equations hold:
\[
\def\arraystretch{1.2}
\begin{array}{lll}
r_1(m_1) \cdot r_2(m_2) &= m_1m_2 + m_1Q_2(X) + m_2Q_1(X) + Q_1(X)Q_2(X) &= r_3(m_1m_2)\\
r_1(m_1) + r_2(m_2) &= m_1+m_2 + Q_2(X) + Q_1(X) &= r_3(m_1+m_2)\\
\end{array}
\]
This finishes the proof of the statement.
\end{proof}

In the remainder of this section, we extends the notations of Convention \ref{conv:ring-as-group:powers} to the set $\mathcal{I}_k(\mathsf{C})$. Specifically, for every arithmetic channel $\mathcal{C} = (p,q,\omega,u)$ and every non-negative integer $k$, we will denote as $\mathcal{I}_k(\mathsf{C})^{(N)}$ the subset of $\mathbb{Z}_q[X]_u^{(N)}$ whose elements $(f_1,\dots,f_N) \in \mathbb{Z}_q[X]_u^{(N)}$ are such that their components $f_i$ are in the set $\mathcal{I}_k(\mathsf{C})$ for every $i \in \{1,2,\dots,N\}$.

\begin{generation}[ACES]\label{gen:FHE:Yoneda}
Let us describe the overall setup for ACES. First, the associated limit sketch $T$ must be the limit sketch $T_{\mathbb{Z}[X]}$ defined for modules over the ring $\mathbb{Z}[X]$ (Convention \ref{conv:ring-to-sketch}). As a result, the associated reflective subcategory is given by the inclusion
\[
\mathbf{Mod}(T_{\mathbb{Z}[X]}) \hookrightarrow [T_{\mathbb{Z}[X]},\mathbf{Set}]
\]
and its left adjoint $L$. The cryptosystem is defined at the object $1$ of $T_{\mathbb{Z}[X]}$ (see Convention \ref{conv:ring-to-sketch}) and is expressed as a Yoneda cryptosystem of the form
\[
\mathcal{Y}(\mathbb{Z}_q[X]_u^{(N)},\mathbb{Z}_q[X]_u,\mathbb{Z}_p,\mathbb{Z}_q[X]_u|\mathcal{E}(\mathsf{C}),\mathsf{E},\mathsf{D})
\]
where $N$ is a positive integer and $(p,q,\omega,u)$ defines an arithmetic channel $\mathsf{C}$. We also require that the inequality $q \geq p^2N+1$ holds (see below). In addition, we restrict the cryptosystem along the following subset of morphisms (see Convention \ref{conv:restricted:Yoneda}).
\[
M = \left\{h \in \mathcal{L}(\mathbb{Z}_q[X]_u^{(N)},\mathbb{Z}_q[X]_u)~\left|~\begin{array}{l}\textrm{ where }h:y \mapsto b^Ty\\\textrm{ with }b = (b_j)_{j \in [N]} \in \mathbb{Z}_q[X]_u^{(N)}\end{array}\textrm{ and }\forall i \in [N]\,:\intbrackets{\mathsf{C}}(b_i) \in \{0,1,\dots,p\}\right.\right\}
\]
Limiting the garbling morphisms to $M$ is here needed to ensure that the decryption step of ACES is well-defined. Meanwhile, the inequality $q \geq p^2N+1$ ensures that ACES is secure. Further discussion on this topic is provided in Remark \ref{rem:ACES:for_more_security}. For now, we continue to describe the main components of the cryptosystem.

The collection of encryption algorithms for ACES is indexed by the set $\mathcal{E}(\mathsf{C})$ such that for every function $r \in \mathcal{E}(\mathsf{C})$ we have an encryption function of the following form.
\[
\mathsf{E}_r:
\left(
\begin{array}{ccc}
\mathbb{Z}_q[X]_u \times \mathbb{Z}_p& \to & \mathbb{Z}_q[X]_u\\
(g,m) & \mapsto & r(m) + g
\end{array}
\right)
\]
The decryption algorithm is given by the following function, where we consider the integer $\intbrackets{\mathsf{C}}(c - g) \in \{0,1,\dots,q\}$ modulo $p$.
\[
\mathsf{D}:
\left(
\begin{array}{ccc}
\mathbb{Z}_q[X]_u \times \mathbb{Z}_q[X]_u & \to & \mathbb{Z}_p\\
(g,c) & \mapsto & \pi_p(\iota_q(\intbrackets{\mathsf{C}}(c - g)))
\end{array}
\right)
\]
With such functions, we can determine the form of the set $\mathcal{R}(f)$ (Convention \ref{conv:reversors}) for every $f \in \mathbb{Z}_q[X]_u^{(N)}$, as shown below
\begin{align*}
\mathcal{R}(f) &= \{f'~|~\forall h \in M,\,\forall m \in \mathbb{Z}_p:\, \pi_p(\iota_q(\intbrackets{\mathsf{C}}(r(m) + h(f') - h(f)))) =  m  \}\\
&= \{f'~|~\forall h \in M,\,\forall m \in \mathbb{Z}_p:\, \pi_p(\iota_q(\intbrackets{\mathsf{C}}(r(m) + h(f') - h(f)))) =  \pi_p(\iota_q(m))  \}
\end{align*}
In fact, Definition \ref{def:arith-channel:E:errors} allows us to rewrite the specification of $\mathcal{R}(f)$ as follows.
\begin{align*}
\mathcal{R}(f) &= \{f'~|~\forall h \in M,\,\forall m \in \mathbb{Z}_p:\, \pi_p(\iota_q(\intbrackets{\mathsf{C}}(r(m) + h(f') - h(f)))) =  \pi_p(\iota_q(\intbrackets{\mathsf{C}}(r(m)))  \}\\
&= \{f'~|~\forall h \in M,\,\forall m \in \mathbb{Z}_p:\, \pi_p\Big(\iota_q(\intbrackets{\mathsf{C}}(r(m) + h(f') - h(f))) - \iota_q(\intbrackets{\mathsf{C}}(r(m))\Big) = 0 \}
\end{align*}
To determine whether $\mathcal{R}(f)$ contains more than just the element $f$ itself, we must identify non-trivial elements $f'$ for which the formula in the specification of $\mathcal{R}(f)$ holds. Intuitively, we aim to merge the two images of $\iota_q$ and utilize the fact that $\pi_q \circ \iota_q$ is the identity. First, in order to use the statement of Proposition \ref{prop:maps-for-modulo-operations}, let us assume that $f'$ satisfies the following relation.
\begin{equation}\label{eq:noisy-morphism:ACES-R_f}
\iota_q(\intbrackets{\mathsf{C}}(r(m) + h(f') - h(f))) - \iota_q(\intbrackets{\mathsf{C}}(r(m)) \in \{0,1,\dots,q-1\}
\end{equation}
This assumption is reasonable since $f$ itself serves as a candidate for $f'$, making the difference in (\ref{eq:noisy-morphism:ACES-R_f}) zero. Assuming that relation (\ref{eq:noisy-morphism:ACES-R_f}) holds, we can apply Proposition \ref{prop:maps-for-modulo-operations} to the images of $\iota_q$. Additionally, since $\intbrackets{\mathsf{C}}$ is a ring homomorphism, we can rewrite relation (\ref{eq:noisy-morphism:ACES-R_f}) as follows:
\begin{equation}\label{eq:noisy-morphism:ACES-R_f:simple}
\iota_q(\intbrackets{\mathsf{C}}(h(f') - h(f))) \in \{0,1,\dots,q-1\}
\end{equation}
Given that condition (\ref{eq:noisy-morphism:ACES-R_f}) holds, we can also apply Proposition \ref{prop:maps-for-modulo-operations} to the formula characterizing the set $\mathcal{R}(f)$. This leads to the following identities:
\begin{align*}
0 & = \pi_p\Big(\iota_q(\intbrackets{\mathsf{C}}(r(m) + h(f') - h(f))) - \iota_q(\intbrackets{\mathsf{C}}(r(m))\Big) &\\
& = \pi_p\Big(\iota_q(\intbrackets{\mathsf{C}}(h(f') - h(f)))\Big) &(\textrm{Proposition \ref{prop:maps-for-modulo-operations}})
\end{align*}
By Definition \ref{def:maps-for-modulo-operations}, this gives us the relation:
\begin{equation}\label{eq:noisy-morphism:ACES:f_diff_mnod_p}
\iota_q\big(\intbrackets{\mathsf{C}}\big(h(f') - h(f))\big) \equiv  0 \, (\mathsf{mod}\,p)
\end{equation}
Combining relation (\ref{eq:noisy-morphism:ACES:f_diff_mnod_p}) with relation (\ref{eq:noisy-morphism:ACES-R_f:simple}), we obtain the following condition:
\[
\iota_q\big(\intbrackets{\mathsf{C}}(h(f') - h(f))\big) \in \left\{0,p,\dots,p  \cdot \left\lfloor \frac{q-1}{p} \right\rfloor \right\}
\]
By Definition \ref{def:chi} and the fact that $h$ is a homomorphism of $\mathbb{Z}[X]$-modules, this condition is equivalent to requiring that all elements $f'$ satisfy:
\[
\iota_q\big(\intbrackets{\mathsf{C}}(h(f') - h(f))\big) \in \chi(\lfloor (q-1)/p \rfloor)
\]
By Definition \ref{def:vanishing-ideal}, we then have the following inclusions for $\mathcal{R}(f)$:
\begin{align*}
\mathcal{R}(f) & \supseteq \{f'~|~\forall h \in M:\, h(f' - f) \in \mathcal{I}_{\lfloor (q-1)/p \rfloor}(\mathsf{C}) \} & (\textrm{Definition \ref{def:vanishing-ideal}})\\
& \supseteq \{f' = (f'_1,\dots,f_N)~|~\,f'_i - f_i \in \mathcal{I}_{\scriptstyle \left\lfloor\frac{\lfloor (q-1)/p \rfloor}{pN}\right\rfloor }(\mathsf{C})\} &(\textrm{Def. of $M$ and Prop. \ref{prop:level-ideal:vanishing-ideal}})& \\
& \supseteq \{f' = (f'_1,\dots,f_N)~|~\,f'_i - f_i \in \mathcal{I}_{1}(\mathsf{C})\} &(\textrm{Since $q \geq p^2N+1$})& \\
& = f + \mathcal{I}_{1}(\mathsf{C})^{(N)}  &
\end{align*}
This shows that the set $\mathcal{R}(f)$ is non-trivial. Furthermore, the fact that $\mathcal{R}(f)$ is non-empty for all $f \in \mathbb{Z}_q[X]_u^{(N)}$ confirms that the Yoneda cryptosystem is well-defined when restricted along the subset $M$. Following Generation \ref{gen:YES}, the key generation step undertaken by $\mathsf{Alice}$ for this cryptosystem consists of:
\begin{itemize}
\item[-] a non-empty discrete category $I = [n]$ containing $n$ objects (and their identities) and a functor
\[
H:\left(
\def\arraystretch{1.2}
\begin{array}{lll}
I & \to & T_{\mathbb{Z}[X]}^{\mathsf{op}}\\
k & \mapsto & 1
\end{array}
\right)
\]
picking out the object $1$ in the limit sketch $T_{\mathbb{Z}[X]}$;
\item[-] an initializer $f_0$ taken in the following set:
\[
\mathsf{lim}_{I^{\mathsf{op}}}(\mathbb{Z}_q[X]_u^{(N)} \circ H^{\mathsf{op}}) = \mathbb{Z}_q[X]_u^{(N\times n)}
\]
\item[-] and a private key $x$ taken in the following $\mathbb{Z}[X]$-module:
\[
\mathsf{col}_{I}(L \circ Y \circ H)(1) =\Big(\coprod^{n} Y(1)\Big)(1) \cong Y(n)(1) \cong \mathbb{Z}[X]^{(n)}
\]
\end{itemize}
Later, we will use the notations $f_0 = (f_{0,i,j})_{i,j}$ and $x = (x_j)_j$ to refer to the components of the elements $f_0$ and $x$ in $\mathbb{Z}_q[X]_u$ and $\mathbb{Z}[X]$, respectively. For convenience, we will also denote $f_{0,i} : = (f_{0,i,j})_{i,j}$ for every fixed index $i \in [N]$.
\end{generation}

\begin{remark}[Correctness and security]\label{rem:ACES:for_more_security} Let us discuss the form of the elements in the set $\mathcal{R}(f)$, as defined in Generation \ref{gen:FHE:Yoneda}. Stepping back from the specific perspective of Generation \ref{gen:FHE:Yoneda}, we essentially used the definition of the set $M$ to show that the inclusion $\mathcal{R}(f) \supseteq f + \mathcal{I}_{k}(\mathsf{C})^{(N)}$ holds for every $f \in \mathbb{Z}_q[X]_u^{(N)}$, where we have the following formula:
\[
k = \left\lfloor\frac{\lfloor (q-1)/p \rfloor}{pN}\right\rfloor.
\]
This inclusion implies that if we take an element $f' = f + e$ with $e \in \mathcal{I}_{k}(\mathsf{C})^{(N)}$, then the decryption equation $\mathsf{D}(h(f),\mathsf{E}_{R}(h(f'),m)) = m$ holds for every $h \in M$. While this ensures correctness for ACES, a basic level of security is guaranteed only if $k \neq 0$. In fact, the larger $k$ is, the more secure ACES becomes.

This is where the inequality $q \geq p^2N+1$ serves its purpose, as it ensures that $k \geq 1$. However, a more general assumption can be made by requiring
\[
q \geq k_0p^2N+1
\]
to hold for some positive integer $k_0$, thereby ensuring that $k \geq k_0$. This generalized inequality can be used to further enhance security.
\end{remark}

The following publication step embeds the public key in the free-ring $\mathbb{Z}[X]$. However, in practice, this key would be more efficiently encoded by its representative modulo the integer $q$ and the polynomial $u$.

\begin{publication}[ACES]\label{pub:FHE:Yoneda}
The key publication step for ACES goes as follows. $\mathsf{Alice}$ does the following:
\begin{itemize}
\item[-] she computes the element
\[
f = \phi_{H,\mathbb{Z}_q[X]_u^{(N)}}^{-1}(f_0)_{1}(x)
\]
This element corresponds to the image of the element $x \in \mathbb{Z}[X]^{(n)}$ via the morphism
\[
\phi_{H,\mathbb{Z}_q[X]_u^{(N)}}^{-1}(f_0):\mathbb{Z}[X]^{(n)} \to \mathbb{Z}_q[X]_u^{(N)}
\]
that sends every element $y$ in $\mathbb{Z}[X]^{(n)}$ to the matrix product of $f_0$ with $y$. As a result, the element $f$ is equal to the matrix product $f_0x$.
\item[-] Since $\mathcal{R}(f) \supseteq f + \mathcal{I}_{1}(\mathsf{C})^{(N)}$, $\mathsf{Alice}$ can send a pair of the form $(f_0,f+ e)$, where $e \in \mathcal{I}_{1}(\mathsf{C})^{(N)}$, to $\mathsf{Bob}$;
\end{itemize}
For convenience, we will denote the pair sent by $\mathsf{Alice}$ to $\mathsf{Bob}$ as a tuple $(f_0,f')$.
\end{publication}

We now outline the encryption process for ACES. A distinctive feature of the encryption step described below, compared to the encryption schemes detailed thus far, is the careful specification of the set \(M\) of morphisms, which is crucial for managing the noise introduced into the public keys.

\begin{encryption}[ACES]\label{enc:FHE:Yoneda}
The encryption step for ACES is defined as follows. First, $\mathsf{Bob}$ selects:
\begin{itemize}
\item[-] a noise parameter $r \in \mathcal{E}(\mathsf{C})$;
\item[-] a morphism $h \in M$ of the following type.
\[
h:\mathbb{Z}_q[X]_u^{(N)} \to \mathbb{Z}_q[X]_u
\]
Such a morphism is described as a linear map $y \mapsto b^Ty$ where $b$ is a fixed element in $\mathbb{Z}_q[X]_u^{(N)}$ and, if we denote $b = (b_j)_{j}$, every component $b_i$ in $\mathbb{Z}_q[X]_u$ satisfies the relation $\intbrackets{\mathsf{C}}(b_i) \in \{0,1,\dots,p\}$;
\item[-] and a message $m \in \mathbb{Z}_p$;
\end{itemize}
Then, $\mathsf{Bob}$ sends the following information to $\mathsf{Alice}$:
\begin{itemize}
\item[-] the element $c_1 = \mathsf{lim}_{I^{\mathsf{op}}}(h_H)(f_0) = f_0^Tb$ in $\mathbb{Z}_q[X]^{(n)}$ where $f_0^Tb$ denotes the matrix product of the transpose of $f_0$ with $b$;
\item[-] and the element $c_2 = \mathsf{E}_{r}(h_{1}(f'),m) = r(m) + b^T(f_0x+ e)$ in $\mathbb{Z}_q[X]_N$.
\end{itemize}
\end{encryption}

Finally, we conclude the decryption step, whose correctness is ensured by the Yoneda Encryption Scheme.

\begin{decryption}[ACES]
The decryption step for ACES unfolds as follows. $\mathsf{Alice}$ computes:
\begin{itemize}
\item[-] the element
\[
d = \phi_{H,\mathbb{Z}_q[X]_u}^{-1}(c_1)_{1}(x).
\]
This element corresponds to the image of the element $x \in \mathbb{Z}[X]^{(n)}$ via the morphism
\[
\phi_{H,\mathbb{Z}_q[X]_u}^{-1}(f_0):\mathbb{Z}[X]^{(n)} \to \mathbb{Z}_q[X]_u
\]
that sends every element $y$ in $\mathbb{Z}[X]^{(n)}$ to the scalar product of $c_1$ with $y$. As a result, the element $d$ is equal to the matrix product $c_1^Tx$.
\item[-] and the element $m' = \mathsf{D}(d,c_2) = \pi_p(\iota_q(\intbrackets{\mathsf{C}}(c_2 - c_1^Tx)))$ in $\mathbb{Z}_p$.
\end{itemize}
The equality $m' = m$ is guaranteed by the calculations of Decryption \ref{dec:YES}.
\end{decryption}

\subsection{From polynomials to integers}\label{ssec:from-polynomials-to-integers}
As suggested by the form of the encryption and decryption functions used in Generation \ref{gen:FHE:Yoneda}, ACES relies on a dual lattice-based formalism, involving both integers and polynomials. While this interaction can also be formally captured by Proposition \ref{prop:maps-for-modulo-operations}, this section aims to clarify this duality by discussing canonical examples on which the reader can base their intuition.

\begin{remark}[Double modulo constructions]\label{rem:double-modulo}
The rationale behind employing a double modulo operation in most LWE-based cryptosystems can be explained as follows. Let $p$ and $q$ be two coprime integers such that $p < q$, and let $m$ be a message in $\mathbb{Z}_p$. First, note that the equation $(m \,(\mathsf{mod}\,q) ) \,(\mathsf{mod}\,p) = m$ holds. This equation can be generalized for every $g \in [q-1-m]$ as follows:
\[
\big( m+g \,(\mathsf{mod}\,q) \big) \,(\mathsf{mod}\,p) = (m+g)\, (\mathsf{mod}\,p)
\]
However, we cannot extend the previous formula to integers $g \in [q-m, +\infty]$. Specifically, for every positive integer $k$ and every integer $g \in [q-m, qk-m-1]$, we have the following equation for some positive integer $k_0 \leq k-1$.
\begin{align*}
\big( m+g \,(\mathsf{mod}\,q) \big) \,(\mathsf{mod}\,p) & = (m+g-qk_0) \,(\mathsf{mod}\,p) &\\
& = \big(m' +g\big ) \,(\mathsf{mod}\,p) & & m' = (m- qk_0)\,(\mathsf{mod}\,p)
\end{align*}
Here, we can see that passing $m+g$ through the two successive modulo operations conceals the message $m$ with a noisy component $qk_0 \,(\mathsf{mod}\,p)$. The presence of this term complicates guessing the actual value of $m$ as the range for $g$ becomes wider. This is precisely the strategy used by LWE-based cryptosystems for hiding messages.
\end{remark}

In Encryption \ref{enc:FHE:Yoneda}, we utilize two components, denoted as $r$ and $e$, to obscure the message during encryption. This construction aligns with the strategies employed in cryptographic systems such as BGV, FV, and CKKS. However, a notable distinction between these cryptosystems and ACES is that, in ACES, both noisy components $r$ and $e$ only affect the part of the ciphertext containing the message. In contrast, in BGV, FV, and CKKS, the corresponding noisy components are distributed across the two parts of the ciphertext.

\begin{remark}[Evaluation as an attack]
Let $\mathsf{C} = (p,q,\omega,u)$ be an arithmetic channel such that $p$ and $q$ are coprime. Since the messages of ACES belong to the set $\mathbb{Z}_p$, it seems tempting to attack the ACES encryption by evaluating ciphertexts through the morphism $\intbrackets{\mathsf{C}}$, followed by modding the resulting image by $p$. However, by doing so, we would also emphasize the phenomenon described in Remark \ref{rem:double-modulo}. Indeed, for every polynomial $g \in \mathbb{Z}_q[X]_u$, we can expect that the evaluation $g(\omega)$ in $\mathbb{N}$ is on average greater than $q$ (provided that the degree $d$ of $u$ is greater than $1$).

This means that, for every message $m \in \mathbb{Z}_p$ and element $r \in \mathcal{E}(\mathsf{C})$, if we sum the polynomial $r(m)$ with a polynomial $g \in \mathbb{Z}_q[X]_u$, then the noise component discussed in Remark \ref{rem:arith-channel:E:errors} will most likely occur when evaluating the expression $r(m) + g$ through the function $x \mapsto \pi_p(\iota_q(\intbrackets{\mathsf{C}}(x)))$.
\begin{align*}
\pi_p(\iota_q(\intbrackets{\mathsf{C}}\big(r(m) + g\big))) &= \big(r(m)(\omega)+g(\omega) \,(\mathsf{mod}\,q)\big)\,(\mathsf{mod}\,p) &\\
& = (m+g(\omega) - qk_0) \,(\mathsf{mod}\,p)) &\\
& = \big((m- qk_0)\,(\mathsf{mod}\,p)+g(\omega) \big) \,(\mathsf{mod}\,p) &\\
& = \big(r(m')(\omega)+g(\omega) \big) \,(\mathsf{mod}\,p) & m' = (m- qk_0)\,(\mathsf{mod}\,p)
\end{align*}
Here, the noisy component is difficult to guess because the encoding of the polynomial $r$ is only known to the sender of the message $m$. Indeed, note that the component $r$ is essential in replicating the phenomenon of Remark \ref{rem:double-modulo}. If the polynomial $r(m)$ was trivial, namely equal to $m$, then the coefficients of the polynomial $r(m) + g$ would be the same as those of $g$, except for the $0$-th coefficient, which is very likely to be less than $q$ as $p$ is small. This means that for a trivial $r(m)$, an attacker could potentially determine the value of the noisy component $qk_0 \, (\mathsf{mod} \, p)$ by evaluating $r(m') + g$ at $\omega$. As a result, the attack could be reduced to finding a representative of $g$ modulo $p$.
\end{remark}

To conclude, one of the main uses of polynomials in ACES is their ability to generate a broader range of noise, while also providing us with the opportunity to introduce more randomness by expressing integers as decompositions into polynomials. As discussed in Example \ref{rem:arith-channel:vanishing-element} and Example \ref{rem:arith-channel:E:errors}, this relationship allows us to shape the integer values in a specific manner while making it difficult to retrieve them when presented in a polynomial expression. We will employ a similar procedure in section \ref{ssec:ACES:homomorphic-properties} to further conceal the secret.

\subsection{Homomorphic properties}\label{ssec:ACES:homomorphic-properties}
This section addresses the homomorphic properties of the cryptosystem defined in section \ref{sec:FHE:from-Yoneda}. Specifically, it exhibits homomorphic properties for addition and multiplication with full traceability of the noise level.

To permit these homomorphic properties, an additional piece of information needs to be disclosed during the publication step of the cryptosystem. Importantly, this extra information is not anticipated to compromise the security of the cryptosystem.

\begin{definition}[Repartition]
Let $q$ and $n$ be positive integers such that $q$ has $n_0$ distinct prime factors. We will refer to any function of the form $[n] \to \{0,1,2,\dots,n_0\}$ as an \emph{$n$-repartition of $q$}. Let now $q_1,q_2,\dots,q_{n_0}$ denote the sequence of prime factors of $q$, ordered in increasing magnitude. We shall also denote $q_0 = 1$. For every $n$-repartition $\sigma:[n] \to \{0\} \cup [n_0]$ of $q$ and every pair $(i,j)$ in $[n]$, we use the following notation:
\[
\sigma[q]_{i,j} =
\left\{
\begin{array}{ll}
q/q_{\sigma(i)}q_{\sigma(j)} & \sigma(i) \neq \sigma(j)\\
q/q_{\sigma(i)} & \sigma(i) = \sigma(j)\\
\end{array}
\right.
\]
\end{definition}

\begin{definition}\label{def:tensor:channel}
Let $\mathsf{C} = (p,q,\omega,u)$ be an arithmetic channel, let $\sigma$ be an $n$-repartition of $q$, and let $x = (x_1,\dots,x_n)$ be an element in the $\mathbb{Z}[X]$-module $\mathbb{Z}[X]^{(n)}$. We denote as $\mathcal{H}(x|\mathsf{C},\sigma)$ the set of 3-tensors $\lambda = (\lambda_{i,j}^k)_{i,j,k \in [n]}$ in $\mathbb{Z}_q$ such that, for every pair $(i,j)$ of elements in $[n]$, if we denote
\begin{equation}\label{eq:tensor:channel}
e_{i,j} = x_i \cdot x_j - \sum_{k = 1}^n \lambda_{i,j}^k \cdot x_k
\end{equation}
then the element $\intbrackets{\mathsf{C}}(e_{i,j})$ is a multiple of  $\sigma[q]_{i,j}$ in $\mathbb{Z}_q$.
\end{definition}

As suggested by Generation \ref{gen:lambda:tensor}, the set defined in Definition \ref{def:tensor:channel} originates from the observation that $\mathbb{Z}[X]$-modules possess three degrees of freedom: the dimension of the ring of integer, which acts of the formal monomial $X$, the dimension given by polynomials in $\mathbb{Z}[X]$, and the dimension offered by the module structure\footnote{This leveled structure is implicitely suggested by our construction in Convention \ref{conv:ring-to-sketch}, where we (1) start with a ring structure $R$, (2) then send it to a commutative group structure, and (3) induce its associated modules using functors $T_R \to \mathbf{Set}$}. This distinguishes it from the cryptosystems explored in section \ref{sec:Cryptosystems-and-Yoneda}, which were typically dependent on what could be considered as two degrees of freedom, often encoded by generators and their exponents (e.g., ElGamal, RSA, Benaloh) or encoded by rings and their ideals (NTRU and LWE-based cryptosystems). In essence, the reader may perceive this freedom in dimensionality as the factor that renders ACES fully homomorphic.

\begin{generation}[Homomorphism]\label{gen:lambda:tensor}
Let $\mathsf{C} = (p,q,\omega,u)$ be an arithmetic channel, let $\sigma$ be an $n$-repartition of $q$, and let $x = (x_1,\dots,x_n)$ be an element in the $\mathbb{Z}[X]$-module $\mathbb{Z}[X]^{(n)}$. To define an element in $\mathcal{H}(x|\mathsf{C},\sigma)$, we can consider the following function of type $\mathbb{Z}_q^{(n)} \to \mathbb{Z}_q$ for every pair $(i,j)$ of elements in $[n]$:
\[
E_{i,j}(Y_1,Y_2,\dots,Y_n) = \intbrackets{\mathsf{C}}(x_i) \cdot \intbrackets{\mathsf{C}}(x_j) - \sum_{k = 1}^n Y_k \cdot \intbrackets{\mathsf{C}}(x_k)
\]
Then, because $\sigma[q]_{i,j}$ is a factor of $q$, every element $\lambda = (\lambda_{i,j}^k)_{i,j,k \in [n]}$ in $\mathcal{H}(x|\mathsf{C},\sigma)$ is characterized by the following relation:
\[
E_{i,j}(\lambda_{i,j}^1,\dots,\lambda_{i,j}^n) \equiv 0 \,(\mathsf{mod}\,\sigma[q]_{i,j})
\]
Note that the following tuples already provide solutions $(\lambda_{i,j}^k)_{i,j,k \in [n]}$ for the equation $E_{i,j}(Y_1,\dots,Y_n) = 0$.
\[
(0,\dots,0,\mathop{\intbrackets{\mathsf{C}}(x_i)}\limits_{j\textrm{-th}},0,\dots,0) \quad\quad (0,\dots,0,\mathop{\intbrackets{\mathsf{C}}(x_j)}\limits_{i\textrm{-th}},0,\dots,0)
\]
However, we want to steer away from such solutions as they reveal information that can be used to decrypt ciphertexts generated by ACES. More generally, we can sample $Y_1,\dots,Y_{n-1}$ randomly as a tuple $(\lambda_{i,j}^1,\dots,\lambda_{i,j}^{n-1})$ and solve the following equation for some random integer $\ell_{i,j} \in \mathbb{Z}_q$:
\[
Y_{n} \cdot \intbrackets{\mathsf{C}}(x_n)   = \intbrackets{\mathsf{C}}(x_i) \cdot \intbrackets{\mathsf{C}}(x_j) - \sum_{k = 1}^{n-1} \lambda_{i,j}^k \cdot \intbrackets{\mathsf{C}}(x_k) - \ell_{i,j} \cdot \sigma[q]_{i,j}
\]
The previous equation is always solvable when $\intbrackets{\mathsf{C}}(x_n)$ is invertible in $\mathbb{Z}_q$. Finally, note that for a given element $\lambda = (\lambda_{i,j}^k)_{i,j,k \in [n]}$ in $\mathcal{H}(x|\mathsf{C},\sigma)$, an attacker who discovers the $n$-repartition $\sigma$ can attempt to find the images $(\intbrackets{\mathsf{C}}(x_k))_{k \in [n]}$ by solving the following system of polynomial equations (defined for every pair $(i,j)$ in $[n]$).
\[
X_i \cdot X_j - \sum_{k = 1}^n \lambda_{i,j}^k \cdot X_k - L_{i,j} \cdot \sigma[q]_{i,j} = 0
\]
This system could potentially be solved by using Gr\"{o}bner basis reduction techniques. However, by imposing that $\lambda_{i,j}^k = \lambda_{j,i}^k$ and $L_{i,j} = L_{j,i}$, we prevent this scenario from occurring as the equation for $(i,j)$ cannot be reduced by the equation for $(j,i)$, as they are exactly the same.
\end{generation}

The set of 3-tensors defined in Generation \ref{gen:lambda:tensor} are used to define the operation presented in Definition \ref{def:boxtimes:channel}. The shape of the 3-tensors used for this operation will specifically allow us to obtain the identity stated in Proposition \ref{prop:boxtimes_lambda:product}.

\begin{definition}\label{def:boxtimes:channel}
Let $\mathsf{C} = (p,q,\omega,u)$ be an arithmetic channel, let $\sigma$ be an $n$-repartition of $q$, and let $x = (x_1,\dots,x_n)$ be an element in the $\mathbb{Z}[X]$-module $\mathbb{Z}[X]^{(n)}$. For every element $\lambda = (\lambda_{i,j}^k)$ in $\mathcal{H}(x|\mathsf{C},\sigma)$ and every pair $(v_1,v_2)$ of elements in the $\mathbb{Z}[X]$-module $\mathbb{Z}_q[X]_u^{(n)}$, we denote as $v_1 \boxtimes_{\lambda} v_2$ the element of $\mathbb{Z}_q[X]_u^{(n)}$ encoded by the following tuple.
\[
v_1 \boxtimes_{\lambda} v_2 := \Big(\sum_{i,j} \lambda_{i,j}^{k} v_{1,i}v_{2,j} \Big)_{k \in [n]}
\]
\end{definition}

To ensure that the 3-tensors defined in Generation \ref{gen:lambda:tensor} can be iteratively used through homomorphic operations (see Definition \ref{def:FHE:algebraic-operations}), we will have to restrict them to a subset defined in Definition \ref{def:zero-divisor-ideal:sigma}

\begin{definition}[Zero-divisor ideal]\label{def:zero-divisor-ideal:sigma}
Let $\mathsf{C} = (p,q,\omega,u)$ be an arithmetic channel, let $\sigma$ be an $n$-repartition of $q$, and let $x = (x_1,\dots,x_n)$ be an element in the $\mathbb{Z}[X]$-module $\mathbb{Z}[X]^{(n)}$. We will denote as $\sigma\mathbb{Z}_q[X]_u$ the $\mathbb{Z}[X]$-submodule of $\mathbb{Z}_q[X]_u^{(n)}$ consisting of tuples $v = (v_1,v_2,\dots,v_n)$ such that for every $i \in [n]$, there exists an integer $r_i$ for which $\intbrackets{\mathsf{C}}(v_i)$ is of the form $q_{\sigma(i)}r_i$ in $\mathbb{Z}_q$. Similarly, we denote as $\sigma \mathcal{H}(x|\mathsf{C},\sigma)$ the set of $3$-tensors $\lambda = (\lambda_{i,j}^k)$ with coefficients in $\mathbb{Z}_q$ such that for every triple $(i,j,k)$ in $[n]$, there exists an integer $r_{i,j}^k$ such that $\lambda_{i,j}^k = q_{\sigma(k)} r_{i,j}^k$.
\end{definition}

The following remark refines the discussion of Generation \ref{gen:lambda:tensor} using arithmetic concepts. In particular, it ensures that the operation defined in Definition \ref{def:boxtimes:channel} is stable for the homomorphic framework defined later in Definition \ref{def:FHE:algebraic-operations}.

\begin{remark}[Zero-divisor ideal]\label{rem:zero-divisor-ideal:sigma}
Let $\mathsf{C} = (p,q,\omega,u)$ be an arithmetic channel, let $\sigma$ be an $n$-repartition of $q$, and let $x = (x_1,\dots,x_n)$ be an element in the $\mathbb{Z}[X]$-module $\mathbb{Z}[X]^{(n)}$.
It directly follows from Definition \ref{def:zero-divisor-ideal:sigma} and Definition \ref{def:boxtimes:channel} that if a 3-tensor $\lambda = (\lambda_{i,j}^k)$ belongs to $\sigma \mathcal{H}(x|\mathsf{C},\sigma)$, then every $n$-vector of the form $v_1 \boxtimes_{\lambda} v_2$ belongs to $\sigma \mathbb{Z}_q[X]_u$.

Note that generating an element $\lambda$ in $\sigma \mathcal{H}(x|\mathsf{C},\sigma)$ should be as difficult as using the methods presented in Generation \ref{gen:lambda:tensor} to generate an element in $\mathcal{H}(x|\mathsf{C},\sigma)$. Specifically, this holds if we assume that the greatest common divisor of the following sequence of $n$ integers is equal to $1$.
\[
q_{\sigma(1)} \iota_q(\intbrackets{\mathsf{C}}(x_1)),\,q_{\sigma(2)} \iota_q(\intbrackets{\mathsf{C}}(x_2)),\,\dots,\, q_{\sigma(n)}\iota_q(\intbrackets{\mathsf{C}}(x_n))
\]
In this case, we can find elements $\mu_k \in \mathbb{Z}$ such that the following B\'{e}zout identity holds.
\[
q_{\sigma(1)}\iota_q(\intbrackets{\mathsf{C}}(x_1)) \cdot \mu_1 + q_{\sigma(2)} \iota_q(\intbrackets{\mathsf{C}}(x_2)) \cdot \mu_2 + \dots +  q_{\sigma(n)} \iota_q(\intbrackets{\mathsf{C}}(x_n)) \cdot \mu_n = 1
\]
Taking $\ell_{i,j}$ as in Generation \ref{gen:lambda:tensor} and multiplying the previous equation by the integer $\iota_q(\intbrackets{\mathsf{C}}(x_ix_j)) - \ell_{i,j} \sigma[q]_{i,j}$ would then generate an element $\lambda = (\lambda_{i,j}^k)$ in $\sigma \mathcal{H}(x|\mathsf{C},\sigma)$ such that the following equation holds:
\[
\lambda_{i,j}^k = q_{\sigma(k)} \mu_k \cdot \Big(\intbrackets{\mathsf{C}}(x_i) \cdot \intbrackets{\mathsf{C}}(x_j) - \ell_{i,j} \sigma[q]_{i,j}\Big)
\]
Note that we can mitigate the formal construction of each element $\lambda_{i,j}^k$ by randomly sampling $\ell_{i,j} \in \mathbb{Z}_q$. However, we must note that the formula
$\lambda_{i,i}^i \equiv q_{\sigma(i)} \cdot \mu_i \cdot \intbrackets{\mathsf{C}}(x_i)^2\,(\mathsf{mod}\,q)$ always holds.
\end{remark}

\begin{proposition}\label{prop:boxtimes_lambda:product}
Let $\mathsf{C} = (p,q,\omega,u)$ be an arithmetic channel, let $\sigma$ be an $n$-repartition of $q$, and let $x = (x_1,\dots,x_n)$ be an element in the $\mathbb{Z}[X]$-module $\mathbb{Z}[X]^{(n)}$. For every element $\lambda = (\lambda_{i,j}^k)$ in $\mathcal{H}(x|\mathsf{C},\sigma)$ and every pair $(v_1,v_2)$ of elements in $\sigma\mathbb{Z}_q[X]_u$, there exists an element $e$ in $\mathbb{Z}_q[X]_u$ such that $\intbrackets{\mathsf{C}}(e) = 0$ and the following equation holds in $\mathbb{Z}_q[X]_u$.
\[
(v_1^Tx) \cdot (v_2^T x) = (v_1 \boxtimes_{\lambda} v_2)^T x + e
\]
\end{proposition}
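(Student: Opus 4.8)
The plan is to exhibit the element $e$ explicitly by a direct expansion over the double index $[n]\times[n]$, and then to check $\intbrackets{\mathsf{C}}(e)=0$ using that $\intbrackets{\mathsf{C}}$ is a ring morphism (Proposition \ref{prop:channel-homomorphism}) together with the divisibility conditions built into the definitions of $\mathcal{H}(x|\mathsf{C},\sigma)$ and $\sigma\mathbb{Z}_q[X]_u$.

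First I would write $v_1 = (v_{1,i})_{i\in[n]}$ and $v_2 = (v_{2,j})_{j\in[n]}$ and expand both sides of the desired identity in the commutative ring $\mathbb{Z}_q[X]_u$. One gets $(v_1^Tx)(v_2^Tx) = \sum_{i,j} v_{1,i}v_{2,j}\,x_ix_j$, and, by Definition \ref{def:boxtimes:channel}, $(v_1\boxtimes_{\lambda}v_2)^Tx = \sum_{k}\big(\sum_{i,j}\lambda_{i,j}^k v_{1,i}v_{2,j}\big)x_k = \sum_{i,j} v_{1,i}v_{2,j}\big(\sum_{k}\lambda_{i,j}^k x_k\big)$. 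Subtracting these and recognizing the inner bracket $x_ix_j - \sum_k\lambda_{i,j}^k x_k$ as the element $e_{i,j}$ of Definition \ref{def:tensor:channel}, I would set $e := \sum_{i,j} v_{1,i}v_{2,j}\,e_{i,j}$, which by construction satisfies $(v_1^Tx)(v_2^Tx) = (v_1\boxtimes_{\lambda}v_2)^Tx + e$ in $\mathbb{Z}_q[X]_u$; so all that remains is the vanishing statement $\intbrackets{\mathsf{C}}(e)=0$.

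For that, I would apply Proposition \ref{prop:channel-homomorphism} to get $\intbrackets{\mathsf{C}}(e) = \sum_{i,j}\intbrackets{\mathsf{C}}(v_{1,i})\,\intbrackets{\mathsf{C}}(v_{2,j})\,\intbrackets{\mathsf{C}}(e_{i,j})$ in $\mathbb{Z}_q$. Since $v_1,v_2\in\sigma\mathbb{Z}_q[X]_u$, Definition \ref{def:zero-divisor-ideal:sigma} gives that $\intbrackets{\mathsf{C}}(v_{1,i})$ is a multiple of $q_{\sigma(i)}$ and $\intbrackets{\mathsf{C}}(v_{2,j})$ a multiple of $q_{\sigma(j)}$; since $\lambda\in\mathcal{H}(x|\mathsf{C},\sigma)$, Definition \ref{def:tensor:channel} gives that $\intbrackets{\mathsf{C}}(e_{i,j})$ is a multiple of $\sigma[q]_{i,j}$. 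Hence each summand is a multiple of the integer $q_{\sigma(i)}q_{\sigma(j)}\sigma[q]_{i,j}$. The one genuine computation is to see that this integer is always divisible by $q$: when $\sigma(i)\neq\sigma(j)$ one has $\sigma[q]_{i,j} = q/(q_{\sigma(i)}q_{\sigma(j)})$, so the product equals $q$, and when $\sigma(i)=\sigma(j)$ one has $\sigma[q]_{i,j} = q/q_{\sigma(i)}$, so the product equals $q_{\sigma(i)}q$; in both cases it is a multiple of $q$, including in the degenerate case $\sigma(i)=0$ where $q_0=1$. Therefore each summand is $0$ in $\mathbb{Z}_q$ and $\intbrackets{\mathsf{C}}(e)=0$.

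I do not anticipate a serious obstacle: the argument is essentially a reorganization of a double sum followed by one use of the homomorphism property. The only points demanding care are keeping consistent the distinction between an element of $\mathbb{Z}_q$ and a chosen integer lift when phrasing ``is a multiple of'', and ensuring the case split on $\sigma[q]_{i,j}$ correctly handles the boundary value $q_0=1$ coming from the $\{0\}\cup[n_0]$-valued repartition.
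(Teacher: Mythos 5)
Your proof is correct and follows essentially the same route as the paper: the same explicit choice $e=\sum_{i,j}v_{1,i}v_{2,j}e_{i,j}$, the same use of the ring morphism $\intbrackets{\mathsf{C}}$, and the same divisibility bookkeeping via $q_{\sigma(i)}$, $q_{\sigma(j)}$ and $\sigma[q]_{i,j}$. If anything, your explicit case split $\sigma(i)\neq\sigma(j)$ versus $\sigma(i)=\sigma(j)$ (with the boundary $q_0=1$) is slightly more careful than the paper's display, which writes each summand as exactly $qk_{1,i}k_{2,j}k_{3,i,j}$ even though in the equal-factor case the product is $q_{\sigma(i)}q$ — still a multiple of $q$, so the conclusion is unaffected.
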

\begin{proof}
The proof is a straightforward calculation using the different concepts introduced in Definition \ref{def:tensor:channel} and Definition \ref{def:boxtimes:channel}. Specifically, we have:
\begin{align*}
(v_1^Tx) \cdot (v_2^T x) & = (\sum_{i} v_{1,i} x_i ) \cdot (\sum_j v_{2,j} x_j) = \sum_{i,j} v_{1,i}  v_{2,j} x_ix_j\\
& = \sum_{i,j} v_{1,i}v_{2,j} \big(\sum_{k = 1}^n \lambda_{i,j}^k \cdot x_k + e_{i,j}\big)\\
& = \sum_{k = 1}^n \Big( \sum_{i,j} \lambda_{i,j}^k v_{1,i}v_{2,j}\Big) \cdot x_k + \sum_{i,j} v_{1,i}v_{2,j} e_{i,j}\\
& = (v_1 \boxtimes_{\lambda} v_2)^T x + \sum_{i,j} v_{1,i}v_{2,j} e_{i,j}
\end{align*}
If we take $e = \sum_{i,j} v_{1,i}v_{2,j} e_{i,j}$, then there exist two tuples $(k_{1,i})_{i \in [n]}$, $(k_{2,i})_{i \in [n]}$ of integers in $\mathbb{Z}_q$ and a matrix $(k_{3,i,j})_{i,j \in [n]}$ of integers in $\mathbb{Z}_q$ such that the following equations hold.
\begin{align*}
\intbrackets{\mathsf{C}}(e) & = \intbrackets{\mathsf{C}}\big(\sum_{i,j} v_{1,i}v_{2,j} e_{i,j}\big) \\
& = \sum_{i,j} \intbrackets{\mathsf{C}}(v_{1,i})\intbrackets{\mathsf{C}}(v_{2,j})\intbrackets{\mathsf{C}}(e_{i,j}) \\
& = \sum_{i,j} q_{\sigma(i)}k_{1,i}q_{\sigma(j)}k_{2,j}\sigma[q]_{i,j}k_{3,i,j} \\
& = \sum_{i,j} qk_{1,i}k_{2,j}k_{3,i,j}  = 0
\end{align*}
This proves the proposition.
\end{proof}

Note that the message $m$ in Definition \ref{def:encryption-space:general} (below) belongs to the set $\mathbb{Z}_q$, unlike the set $\mathbb{Z}_p$ used with ACES (see Application \ref{app:LWE:setup:Yoneda} or Remark \ref{rem:aces-givesriseto-encryption-space}). As discussed prior to Proposition \ref{prop:FHE:decryption}, this distinction influences how elements of encryption spaces are refreshed and, more broadly, decrypted.

\begin{definition}[Encryption space]\label{def:encryption-space:general}
Let $\mathsf{C} = (p,q,\omega,u)$ be an arithmetic channel, let $\sigma$ be an $n$-repartition of $q$ and let $x$ be an element in $\mathbb{Z}[X]^{(n)}$.
For every element $m \in \mathbb{Z}_q$ and every non-negative integer $k$, we define the $k$-th \emph{$\mathsf{C}$-encryption space} $\mathcal{S}^{x}_{\mathsf{C},k}(m|\sigma)$ at the element $m$, relative to the repartition $\sigma$, as the set containing the pairs
\[
(c,c') \in \sigma\mathbb{Z}_q[X]_u \times \mathbb{Z}_q[X]_u
\]
for which there exist $r \in \mathcal{E}(\mathsf{C})$ and $e \in \mathcal{I}_k(C)$ such that the equation $c' = r(m) + c^Tx + e$ holds in $\mathbb{Z}_q[X]_u$.
\end{definition}

Note that the owner of the secret key $x$ can always use the formula given in Definition \ref{def:encryption-space:general} to construct ciphertexts for given messages. On the other hand, any third party without knowledge of the secret key is constrained to using the ACES formalism described below.

\begin{remark}[ACES]\label{rem:aces-givesriseto-encryption-space}
Let $\mathsf{C} = (p,q,\omega,u)$ be an arithmetic channel and let $\sigma$ be an $n$-repartition of $q$. Consider an ACES encryption scheme, as defined in section \ref{sec:FHE:from-Yoneda} (see Generation \ref{gen:FHE:Yoneda}) with a private key $x \in \mathbb{Z}_q[X]_u^{(n)}$ and such that each row of the initializer matrix
\[
f_0  = \left(
\begin{array}{ccc}
f_{0,1,1} & \dots & f_{0,1,n}\\
\vdots & \dots & \vdots\\
f_{0,N,1} & \dots & f_{0,N,n}\\
\end{array}
\right)\in \mathbb{Z}_q[X]_u^{(N\times n)}
\]
is an element of $\sigma \mathbb{Z}_q[X]_u$. For every message $m \in \mathbb{Z}_p$ and every encryption $(c_1,c_2)$ of $m$ via this encryption scheme, the pair $(c_1,c_2)$ belongs to the $\mathsf{C}$-encryption space $\mathcal{S}^{x}_{\mathsf{C},Np}(m|\sigma)$. Indeed, as explained in Encryption \ref{enc:FHE:Yoneda}, we have the equations
\[
c_1 = f_0^Tb\quad\quad\quad c_2 = r(m) + b^T(f_0x + e) = r(m) + c_1^Tx + b^Te
\]
where $e \in \mathcal{I}_1(\mathsf{C})^{(N)}$ and the vector $b = (b_1,\dots,b_N)$ in $\mathbb{Z}_q[X]_u^{(N)}$ is such that $\intbrackets{C}(b_i) \in \{0,1,\dots,p\}$ for every $i \in [N]$. First, the expression $c_1 = f_0^Tb$ tells us that $c_1$ is a linear combination of elements in the $\mathbb{Z}[X]$-module $\sigma\mathbb{Z}_q[X]_u$ and is therefore in $\sigma\mathbb{Z}_q[X]_u$. Then, since we have the equation
\[
\intbrackets{C}(b^Te) = \sum_{i=1}^N \intbrackets{C}(b_i \cdot e_i) = \sum_{i=1}^N \intbrackets{C}(b_i) \cdot \intbrackets{C}(e_i)
\]
it follows from Proposition \ref{prop:chi:properties} that the element $\intbrackets{C}(b^Te)$ is in $\chi(Np)$, and hence $b^Te$ is in $\mathcal{I}_{Np}(\mathsf{C})$. This proves our earlier statement.
\end{remark}

Proposition \ref{prop:FHE:decryption}, presented below, consists of two independent statements that establish the correctness of ACES. The second statement directly asserts the correctness of the leveled encryption scheme structure associated with ACES. The first statement, on the other hand, enables the exploration of a novel method for ciphertext refreshing, which differs from traditional bootstrapping techniques. Notably, in the first statement, the message $m$ is considered in $\mathbb{Z}_q$, whereas in the second, it is taken in $\mathbb{Z}_p$.

\begin{proposition}[Decryption]\label{prop:FHE:decryption}
Let $\mathsf{C} = (p,q,\omega,u)$ be an arithmetic channel, let $\sigma$ be an $n$-repartition of $q$, and let $x$ be an element in $\mathbb{Z}[X]^{(n)}$. Consider the following decryption function, which was introduced in Generation \ref{gen:FHE:Yoneda}.
\[
\mathsf{D}:
\left(
\begin{array}{ccc}
\mathbb{Z}_q[X]_u \times \mathbb{Z}_q[X]_u & \to & \mathbb{Z}_q\\
(g,c) & \mapsto & \pi_p(\iota_q(\intbrackets{\mathsf{C}}(c - g)))
\end{array}
\right)
\]
For every element $m \in \mathbb{Z}_q$, let $\xi_p(m)$ denote the integer quotient $\lfloor \iota_q(m)/p \rfloor$, representing the division of $m$ by $p$. Using this notation, the following implication holds:
\[
(c, c') \in \mathcal{S}^{x}_{\mathsf{C},k}(m) \quad\quad\Rightarrow \quad\quad (c, c') \in \mathcal{S}^{x}_{\mathsf{C},k+\xi_p(m)}\big(\pi_p \circ \iota_q(m)\big).
\]
Also, for every $m \in \mathbb{Z}_p$ and every non-negative integer $k < (q+1)/p - 1$, the following implication is satisfied:
\[
(c, c') \in \mathcal{S}^{x}_{\mathsf{C},k}(m) \quad\quad\Rightarrow \quad\quad \mathsf{D}(c^T x, c') = m.
\]
\end{proposition}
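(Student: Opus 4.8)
The plan is to prove the two assertions separately, since the second does not actually need the first. Both reduce, after unpacking Definition~\ref{def:encryption-space:general}, to transporting an equation from $\mathbb{Z}_q$ to $\mathbb{Z}$: the evaluation map $\intbrackets{\mathsf{C}}$ is a ring homomorphism (Proposition~\ref{prop:channel-homomorphism}), so I can push it through the sums that appear, and Proposition~\ref{prop:maps-for-modulo-operations} lets me commute $\iota_q$ with $+$ provided the resulting integer lands in $\{0,\dots,q-1\}$. The maps $\pi_p,\iota_q$ of Definition~\ref{def:maps-for-modulo-operations} and Remark~\ref{rem:maps-for-modulo-operations} then close the computation.

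For the first assertion I would unpack $(c,c')\in\mathcal{S}^{x}_{\mathsf{C},k}(m)$ to get $r\in\mathcal{E}(\mathsf{C})$ and $e\in\mathcal{I}_k(\mathsf{C})$ with $c'=r(m)+c^Tx+e$, write the Euclidean division $\iota_q(m)=p\,\xi_p(m)+m'$ with $m'=\pi_p(\iota_q(m))$, pick any $\tilde r\in\mathcal{E}(\mathsf{C})$ (for instance the constant assignment $t\mapsto t$, which lies in $\mathcal{E}(\mathsf{C})$ because a constant polynomial evaluates to itself), and set $\tilde e:=r(m)+e-\tilde r(m')$. Then $c'=\tilde r(m')+c^Tx+\tilde e$ by construction, $c$ is unchanged, so it only remains to check $\tilde e\in\mathcal{I}_{k+\xi_p(m)}(\mathsf{C})$. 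Since $\intbrackets{\mathsf{C}}$ is a ring morphism with $\intbrackets{\mathsf{C}}(r(m))=m$ and $\intbrackets{\mathsf{C}}(\tilde r(m'))=m'$, I get $\intbrackets{\mathsf{C}}(\tilde e)=(m-m')+\intbrackets{\mathsf{C}}(e)$ in $\mathbb{Z}_q$; from $e\in\mathcal{I}_k(\mathsf{C})$ one has $\iota_q(\intbrackets{\mathsf{C}}(e))=jp$ with $0\le j\le k$, and $\iota_q(m)-m'=p\,\xi_p(m)\le q-1$. Applying Proposition~\ref{prop:maps-for-modulo-operations} to $F(Y_1,Y_2)=Y_1+Y_2$ then gives $\iota_q(\intbrackets{\mathsf{C}}(\tilde e))=p(j+\xi_p(m))$, which lies in $\chi(k+\xi_p(m))$ since $j+\xi_p(m)\le k+\xi_p(m)$; hence $\tilde e\in\mathcal{I}_{k+\xi_p(m)}(\mathsf{C})$ and $(c,c')\in\mathcal{S}^{x}_{\mathsf{C},k+\xi_p(m)}(m')$.

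For the second assertion, observe first that $k<(q+1)/p-1$ is equivalent to $(k+1)p\le q$, so $kp\le q-1<q$ and $\chi(k)\subseteq\{0,\dots,q-1\}$. Unpacking $(c,c')\in\mathcal{S}^{x}_{\mathsf{C},k}(m)$ gives $c'-c^Tx=r(m)+e$ with $r\in\mathcal{E}(\mathsf{C})$, $e\in\mathcal{I}_k(\mathsf{C})$, so $\mathsf{D}(c^Tx,c')=\pi_p\big(\iota_q(\intbrackets{\mathsf{C}}(r(m)+e))\big)$. Using that $\intbrackets{\mathsf{C}}$ is a ring morphism and $\intbrackets{\mathsf{C}}(r(m))=m$, one has $\intbrackets{\mathsf{C}}(r(m)+e)=m+\intbrackets{\mathsf{C}}(e)$ in $\mathbb{Z}_q$, with $\iota_q(\intbrackets{\mathsf{C}}(e))=jp$ and $0\le j\le k$. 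Since $m+jp\le(p-1)+kp=(k+1)p-1\le q-1<q$, Proposition~\ref{prop:maps-for-modulo-operations} yields $\iota_q(\intbrackets{\mathsf{C}}(r(m)+e))=m+jp$, and therefore $\mathsf{D}(c^Tx,c')=\pi_p(m+jp)=m$ because $0\le m\le p-1$.

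The main obstacle throughout is exactly the non-wraparound condition needed to invoke Proposition~\ref{prop:maps-for-modulo-operations}. For the second assertion this is precisely what the hypothesis $k<(q+1)/p-1$ secures. For the first assertion it is the one place where a level bound is genuinely required: the computation above goes through only when $p(j+\xi_p(m))<q$, which is guaranteed as soon as $(k+\xi_p(m))p<q$, and without it the added correction $p\,\xi_p(m)$ can overflow modulo $q$ onto a non-multiple of $p$. The statement should therefore be understood, in keeping with how the leveled structure is exploited in Section~\ref{ssec:ACES:homomorphic-properties}, with accumulated levels kept below $q/p$.
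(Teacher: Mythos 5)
Your proof is correct and follows essentially the same route as the paper's: unpack Definition \ref{def:encryption-space:general}, shift the noise by $p\,\xi_p(m)$ via the Euclidean division for the first claim, and use Proposition \ref{prop:channel-homomorphism} together with the non-wraparound case of Proposition \ref{prop:maps-for-modulo-operations} under $k<(q+1)/p-1$ for the second. The only differences are cosmetic (the paper absorbs $p\,\xi_p(m)$ into a modified $r'$ and $e'$ rather than your fresh constant $\tilde r$ with $\tilde e$), and your explicit remark that the first assertion needs $(k+\xi_p(m))p<q$ to keep $\iota_q(\intbrackets{\mathsf{C}}(\tilde e))$ a multiple of $p$ is a legitimate point that the paper's own proof silently glosses over.
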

\begin{proof}
The statement contains two properties to show. First, let $m \in \mathbb{Z}_q$ and let $(c,c') \in \mathcal{S}^{x}_{\mathsf{C},k}(m)$. According to the notations defined in the statement, the Euclidean division of $m$ by $p$ is given by the equation $m = p \cdot \xi_p(m) + \pi_p\circ\iota_q(m)$. This equation gives us the following relations.
\begin{align*}
c' &= r(m) + c^Tx+e\\
&= r(m) - p \cdot \xi_p(m) + c^Tx+e + p \cdot \xi_p(m)
\end{align*}
If we define the function $r':\mathbb{Z}_q \to \mathbb{Z}_q[X]_u$ with the mapping rule
\[
m' \mapsto r(m' + p \cdot \xi_p(m))(X) - p \cdot \xi_p(m)
\]
and we let $e'$ denote the polynomial $e'(X) := e(X) + p \cdot \xi_p(m)$, then we have the relation
\begin{equation}\label{eq:FHE:decryption:reformulation}
c' = r'(\pi_p\circ\iota_q(m)) + c^Tx+e'
\end{equation}
The fact that $\intbrackets{\mathsf{C}}:\mathbb{Z}_q[X]_u \to \mathbb{Z}_q$ is a ring homomorphism gives us the following equation.
\[
\intbrackets{\mathsf{C}}(r'(\pi_p\circ\iota_q(m))) = \intbrackets{\mathsf{C}}(r(m)) - \intbrackets{\mathsf{C}}(p \cdot \xi_p(m)) = m - p \cdot \xi_p(m) = \pi_p\circ\iota_q(m)
\]
For the same reasons, we have the following equation.
\[
\intbrackets{\mathsf{C}}(e') = \intbrackets{\mathsf{C}}(e)+\intbrackets{\mathsf{C}}(p \cdot \xi_p(m)) = p\cdot k + p \cdot \xi_p(m) = p\cdot (k +\xi_p(m))
\]
By Definition \ref{def:arith-channel:E:errors} and Definition \ref{def:vanishing-ideal}, this shows that $r' \in \mathcal{E}(\mathsf{C})$ and $e \in \mathcal{I}_{k +\xi_p(m)}(\mathsf{C})$. Equation (\ref{eq:FHE:decryption:reformulation}) then shows that the ciphertext $(c,c')$ belongs to the $\mathsf{C}$-encryption space $\mathcal{S}^{x}_{\mathsf{C},k+\xi_p(m)}\big(\pi_p\circ\iota_q(m)\big)$.

The second property of the statement pertains to the properties of the function $\iota_q:\mathbb{Z}_q \to \mathbb{Z}$. First, let $m$ be an element in the set $\mathbb{Z}_p$ and let $(c,c') \in \mathcal{S}^{x}_{\mathsf{C},k}(m)$. Recall that we have the following identity.
\[
\mathsf{D}(c^Tx,c') = \pi_p(\iota_q(\intbrackets{\mathsf{C}}(c' - c^Tx)))
\]
We can further expand this expression as follows:
\begin{align*}
\mathsf{D}(c^Tx,c') & = \pi_p(\iota_q(\intbrackets{\mathsf{C}}(r(m) + e))) &(\textrm{Definition \ref{def:encryption-space:general}})\\
& = \pi_p(\iota_q(\intbrackets{\mathsf{C}}(r(m)) + \intbrackets{\mathsf{C}}(e))) &(\textrm{Proposition \ref{prop:channel-homomorphism}})\\
& = \pi_p(\iota_q(m + \intbrackets{\mathsf{C}}(e)))&(\textrm{Definition \ref{def:arith-channel:E:errors}})
\end{align*}
Given that $m \in \mathbb{Z}_p$ and $\iota_q(\intbrackets{\mathsf{C}}(e)) \in \{0,p,2p,\dots,kp\}$ for $k < (q+1)/p-1$, we have the following inequality.
\[
\iota_q(m) + \iota_q(\intbrackets{\mathsf{C}}(e)) < p-1 + ((q+1)/p-1)p = q
\]
It follows from Proposition \ref{prop:maps-for-modulo-operations} that the following equations hold:
\[
\mathsf{D}(c^Tx,c') = \pi_p(\iota_q(m)) + \pi_p(\iota_q(\intbrackets{\mathsf{C}}(e))) = m
\]
Note that the rightmost equation comes from the equality $\pi_p(\iota_q(\intbrackets{\mathsf{C}}(e))) = 0$, since $e$ belongs to $\mathcal{I}_k(\mathsf{C})$.
\end{proof}

The following remark shows that the inequality used in the statement of Proposition \ref{prop:FHE:decryption} can be simplified.

\begin{remark}[Coprimes]
Let $p$ and $q$ be two positive integers. When $p$ and $q$ are coprime, the condition
\[
k < (q+1)/p-1
\]
used in the statement of Proposition \ref{prop:FHE:decryption} is equivalent to the condition $k < q/p-1$. Indeed, first note that the condition $k < (q+1)/p-1$ is equivalent to the condition $pk < q+1-p$, which is also equivalent to the inequality $p-1 < q-pk$. Since $q$ is coprime with $p$, we cannot have $q-pk = p$, hence the condition $p-1 < q-pk$ is equivalent to $p < q-pk$. We can then show that the latter condition is equivalent to the condition $k < q/p-1$
\end{remark}

We now define the algebraic operations for the homomorphic framework associated with ACES. Theorem \ref{theo:Yoneda:to:leveled-FHE} will give conditions for which these operations are stable with respect to encryption spaces.

\begin{definition}[Algebraic operations]\label{def:FHE:algebraic-operations}
Let $\mathsf{C} = (p,q,\omega,u)$ be an arithmetic channel, let $\sigma$ be an $n$-repartition of $q$, and let $x$ be an element in $\mathbb{Z}[X]^{(n)}$.
Also, let us consider an element $\lambda \in \sigma \mathcal{H}(x|\mathsf{C},\sigma)$. We define two binary operations $\oplus$ and $\otimes_{\lambda}$ of the form
\[
\big(\sigma\mathbb{Z}_q[X]_u \times \mathbb{Z}_q[X]_u\big) \times \big(\sigma\mathbb{Z}_q[X]_u \times \mathbb{Z}_q[X]_u\big) \to \sigma\mathbb{Z}_q[X]_u \times \mathbb{Z}_q[X]_u
\]
such that the following equations hold for every $(c_1,c'_1) \in \sigma\mathbb{Z}_q[X]_u \times \mathbb{Z}_q[X]_u$ and $(c_2,c'_2) \in \sigma\mathbb{Z}_q[X]_u \times \mathbb{Z}_q[X]_u$.
\[
\def\arraystretch{1.2}
\begin{array}{ll}
(c_1,c'_1) \oplus (c_2,c'_2) &= (c_1+c_2, c_1'+c_2')\\
(c_1,c'_1) \otimes_{\lambda} (c_2,c'_2) & = (c_2' \cdot c_1 +c_1'\cdot c_2 - c_1 \boxtimes_{\lambda} c_2, c_1'c_2')
\end{array}
\]
Note that the expression $c_2' \cdot c_1 +c_1'\cdot c_2$ should be seen as a linear combination of the vector $c_1$ and $c_2$ such that $c_2'$ and $c_1'$ are seen as scalar coefficients in $\mathbb{Z}[X]$.
The previous two equations fully determine the mapping rules associated with the operations $\oplus$ and $\otimes_{\lambda}$.
\end{definition}

As in Proposition \ref{prop:FHE:decryption}, Theorem \ref{theo:Yoneda:to:leveled-FHE} considers messages in $\mathbb{Z}_q$, enabling encryption spaces to be well-defined for the sum and product of these messages, both computed within $\mathbb{Z}_q$. In essence, this approach leverages the definition of encryption spaces (Definition \ref{def:encryption-space:general}) over $\mathbb{Z}_q$, ensuring consistency in operations like addition and multiplication.

\begin{theorem}[Leveled FHE]\label{theo:Yoneda:to:leveled-FHE}
Let $\mathsf{C} = (p,q,\omega,u)$ be an arithmetic channel, let $\sigma$ be an $n$-repartition of $q$, and let $x$ be an element in $\mathbb{Z}[X]^{(n)}$.
Also, let us consider an element $\lambda \in \sigma \mathcal{H}(x|\mathsf{C},\sigma)$. For every pair $(m_1,m_2)$ of elements in $\mathbb{Z}_q$, every pair $(k_1,k_2)$ of non-negative integers, and every pair of ciphertexts $(c_1,c'_1) \in \mathcal{S}^{x}_{\mathsf{C},k_1}(m_1|\sigma)$ and $(c_2,c'_2) \in \mathcal{S}^{x}_{\mathsf{C},k_2}(m_2|\sigma)$, the following implications hold:
\[
\begin{array}{lll}
k_1+k_2 < q/p &\Rightarrow (c_1,c'_1) \oplus (c_2,c'_2) &\in \mathcal{S}^{x}_{\mathsf{C},k_1+k_2}(m_1+m_2|\sigma)\\
(k_1+k_2+k_1k_2)p < q/p &\Rightarrow (c_1,c'_1) \otimes_{\lambda} (c_2,c'_2) &\in \mathcal{S}^{x}_{\mathsf{C},(k_1+k_2+k_1k_2)p}(m_1m_2|\sigma)
\end{array}
\]

\end{theorem}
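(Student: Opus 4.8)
\emph{Proof idea.} For each of the two operations the plan is to produce explicitly the data required by Definition \ref{def:encryption-space:general}: starting from witnesses $c'_i = r_i(m_i) + c_i^T x + e_i$ with $r_i \in \mathcal{E}(\mathsf{C})$ and $e_i \in \mathcal{I}_{k_i}(\mathsf{C})$ for $(c_i,c'_i) \in \mathcal{S}^{x}_{\mathsf{C},k_i}(m_i|\sigma)$, I would exhibit a noise function $r \in \mathcal{E}(\mathsf{C})$, a vanishing element $e$ of the claimed level, and an identity $c' = r(m) + c^T x + e$ in $\mathbb{Z}_q[X]_u$ for the output pair $(c,c')$ and output message $m$, and also check that the first output component lies in $\sigma\mathbb{Z}_q[X]_u$. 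The additive case is immediate: adding the two witnessing equations gives $c'_1+c'_2 = \big(r_1(m_1)+r_2(m_2)\big) + (c_1+c_2)^T x + (e_1+e_2)$, where Proposition \ref{prop:homomorphic-properties:E:errors}(2) replaces $r_1(m_1)+r_2(m_2)$ by $r_4(m_1+m_2)$ for some $r_4 \in \mathcal{E}(\mathsf{C})$, Proposition \ref{prop:level-ideal:vanishing-ideal}(1) (using $k_1+k_2<q/p$) places $e_1+e_2$ in $\mathcal{I}_{k_1+k_2}(\mathsf{C})$, and $c_1+c_2 \in \sigma\mathbb{Z}_q[X]_u$ since that set is a $\mathbb{Z}[X]$-submodule.

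For the multiplicative case I would expand $c'_1 c'_2 = \big(r_1(m_1)+c_1^T x+e_1\big)\big(r_2(m_2)+c_2^T x+e_2\big)$ into its nine terms and isolate the purely ``ideal'' cross term $(c_1^T x)(c_2^T x)$, rewriting it via Proposition \ref{prop:boxtimes_lambda:product} as $(c_1\boxtimes_{\lambda}c_2)^T x + e_0$ with $\intbrackets{\mathsf{C}}(e_0)=0$; this is the place where $c_1,c_2\in\sigma\mathbb{Z}_q[X]_u$ and $\lambda\in\sigma\mathcal{H}(x|\mathsf{C},\sigma)$ are used, through Definition \ref{def:tensor:channel}. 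Regrouping then yields the identity
\[
c'_1 c'_2 = \big(c'_2 c_1 + c'_1 c_2 - c_1\boxtimes_{\lambda}c_2\big)^T x + r_1(m_1)r_2(m_2) + \big(r_1(m_1)e_2 + e_1 r_2(m_2) + e_1 e_2 - e_0\big),
\]
in which the first summand, paired with $x$, is exactly the first component of $(c_1,c'_1)\otimes_{\lambda}(c_2,c'_2)$, the middle summand is the ``message part'', and the last parenthesis is the ``error part''.

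It then remains to massage the last two summands into the required shape. Proposition \ref{prop:homomorphic-properties:E:errors}(1) turns $r_1(m_1)r_2(m_2)$ into $r_3(m_1 m_2)$ with $r_3\in\mathcal{E}(\mathsf{C})$; the first component $c:=c'_2 c_1 + c'_1 c_2 - c_1\boxtimes_{\lambda}c_2$ lies in $\sigma\mathbb{Z}_q[X]_u$ because that set is a $\mathbb{Z}[X]$-submodule and $c_1\boxtimes_{\lambda}c_2\in\sigma\mathbb{Z}_q[X]_u$ by Remark \ref{rem:zero-divisor-ideal:sigma}; so everything reduces to showing the error term $e := r_1(m_1)e_2 + e_1 r_2(m_2) + e_1 e_2 - e_0$ lies in $\mathcal{I}_{(k_1+k_2+k_1k_2)p}(\mathsf{C})$.

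That last containment is the step I expect to be the main obstacle, though it is essentially bookkeeping with Proposition \ref{prop:level-ideal:vanishing-ideal}. Since $\intbrackets{\mathsf{C}}(r_i(m_i))=m_i$ is a level-one quantity (its lift is at most $p$) while $e_j\in\mathcal{I}_{k_j}(\mathsf{C})$, item 2 of that proposition places $r_1(m_1)e_2$ in $\mathcal{I}_{k_2 p}(\mathsf{C})$ and $e_1 r_2(m_2)$ in $\mathcal{I}_{k_1 p}(\mathsf{C})$; again by item 2, $e_1 e_2 \in \mathcal{I}_{k_1 k_2 p}(\mathsf{C})$; and $e_0 \in \mathcal{I}_0(\mathsf{C})$ since $\intbrackets{\mathsf{C}}(e_0)=0$. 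Adding these four contributions via repeated use of item 1 gives $e\in\mathcal{I}_{(k_1+k_2+k_1k_2)p}(\mathsf{C})$, with every intermediate level sum staying below $q/p$ precisely because of the single hypothesis $(k_1+k_2+k_1k_2)p<q/p$. Taking $r:=r_3$, this $e$, and $c$ as above yields $(c_1,c'_1)\otimes_{\lambda}(c_2,c'_2)\in\mathcal{S}^{x}_{\mathsf{C},(k_1+k_2+k_1k_2)p}(m_1 m_2|\sigma)$. The one other point requiring care is the algebraic regrouping that produces the $\otimes_{\lambda}$ formula, where an index or sign slip would be easy to make and worth double-checking.
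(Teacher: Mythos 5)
Your proposal is correct and follows essentially the same route as the paper's proof: both arguments use the witnesses from Definition \ref{def:encryption-space:general}, combine Proposition \ref{prop:homomorphic-properties:E:errors} for the noise functions, Proposition \ref{prop:level-ideal:vanishing-ideal} for the level bookkeeping of $r_1(m_1)e_2$, $e_1r_2(m_2)$, $e_1e_2$, and Proposition \ref{prop:boxtimes_lambda:product} to trade $(c_1^Tx)(c_2^Tx)$ for $(c_1\boxtimes_{\lambda}c_2)^Tx$ plus an element with vanishing evaluation, arriving at the same identity for $c_1'c_2'$. The only cosmetic difference is that you expand $c_1'c_2'$ directly and regroup, while the paper factors through $(c_1'-c_1^Tx)(c_2'-c_2^Tx)$; your explicit check that the first output component stays in $\sigma\mathbb{Z}_q[X]_u$ is a welcome addition the paper leaves implicit.
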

\begin{proof}
The proof is a straightforward application of Definition \ref{def:FHE:algebraic-operations} and Proposition \ref{prop:level-ideal:vanishing-ideal}. Let us suppose that the inequality $k_1+k_2 < q/p$ holds and we have the following encoding for the two elements $(c_1,c'_1) \in \mathcal{S}^{x}_{\mathsf{C},k_1}(m_1|\sigma)$ and $(c_2,c'_2) \in \mathcal{S}^{x}_{\mathsf{C},k_2}(m_2|\sigma)$:
\[
(c_1,c_1') = (c_1, r_1(m_1) + c_1^Tx + e_1)
\quad
(c_2,c_2') = (c_1, r_2(m_2) + c_2^Tx + e_2)
\]
Let us denote $e_3 = e_1+e_2$. Since $k_1+k_2 < q/p$, with $e_1 \in \mathcal{I}_{k_1}(\mathsf{C})$ and $e_2 \in \mathcal{I}_{k_2}(\mathsf{C})$, Proposition \ref{prop:level-ideal:vanishing-ideal} implies that $e_3 \in \mathcal{I}_{k_1+k_2}(\mathsf{C})$. We can use the previous formulas to show that there exists $r \in \mathcal{E}(\mathsf{C})$ for which the following equations hold:
\begin{align*}
(c_1,c'_1) \oplus (c_2,c'_2) & = (c_1+c_2, r_1(m_1) + r_2(m_2) + (c_1+c_2)^Tx + e_1 + e_2)&(\textrm{linearity})\\
& = (c_1+c_2, r(m_1+m_2) + (c_1+c_2)^Tx + e_3)&(\textrm{Proposition \ref{prop:homomorphic-properties:E:errors}})
\end{align*}
According to Definition \ref{def:encryption-space:general}, the previous equations show that $(c_1,c'_1) \oplus (c_2,c'_2)$ is in $\mathcal{S}^{x}_{\mathsf{C},k_1+k_2}(m_1+m_2|\sigma)$.

To show the property satisfied by the operation $\otimes_{\lambda}$, we will proceed in two steps. First, suppose that $k_1k_2p < q/p$ and consider the element $e_3 = e_2r_1(m_1)+e_1r_2(m_2) + e_1e_2$. Since we have $e_1 \in \mathcal{I}_{k_1}(\mathsf{C})$ and $e_2 \in \mathcal{I}_{k_2}(\mathsf{C})$, and the images of the polynomials $r(m_1)$ and $r(m_2)$ via the function $\intbrackets{\mathsf{C}}$ land in $\{0,1,\dots,p\}$, Proposition \ref{prop:level-ideal:vanishing-ideal} implies that we have the relations $e_2r_1(m_1) \in \mathcal{I}_{k_2p}(\mathsf{C})$, $e_1r_2(m_2) \in\mathcal{I}_{k_1p}(\mathsf{C})$ and $e_1e_2 \in \mathcal{I}_{k_1k_2p}(\mathsf{C})$. It follows from the first item of Proposition \ref{prop:level-ideal:vanishing-ideal} that the relation $e_3 \in \mathcal{I}_{k_1p+k_2p+k_1k_2p}(\mathsf{C})$ holds. Then, we can use Proposition \ref{prop:homomorphic-properties:E:errors} and the expression of the element $e_3$ to show that there exists $r \in \mathcal{E}(\mathsf{C})$ for which the following equations hold:
\begin{align*}
(r_1(m_1) + e_1)(r_2(m_2) + e_2) & = r_1(m_1)r_2(m_2) + e_3 &\\
& = r(m_1m_2) + e_3&(\textrm{Proposition \ref{prop:homomorphic-properties:E:errors}})
\end{align*}
Alternatively, Proposition \ref{prop:boxtimes_lambda:product} gives us an element $e_4 \in \mathbb{Z}_q[X]_u$ for which the following identities hold.
\begin{align*}
(r_1(m_1) + e_1)(r_2(m_2) + e_2) & = (c'_1 - c_1^Tx)(c'_2 - c_2^Tx) &\\
& =  c'_1c'_2 - \big(c'_2 \cdot c_1 + c'_1 \cdot c_2\big)^Tx + (c_1^Tx) (c_2^Tx)&\\
& =  c'_1c'_2 - \big(c'_2 \cdot c_1 + c'_1 \cdot c_2\big)^Tx + \big(c_1\boxtimes_{\lambda} c_2\big)^T x + e_4 &(\textrm{Proposition \ref{prop:boxtimes_lambda:product}}) \\
& =  c'_1c'_2 - \big(c'_2 \cdot c_1 + c'_1 \cdot c_2- c_1\boxtimes_{\lambda} c_2\big)^Tx + e_4 &(\textrm{linearity})
\end{align*}
Our two previous developments give us the following equation:
\begin{equation}\label{eq:final:Yoneda:to:leveled-FHE}
c'_1c'_2 = r(m_1m_2) +  \big(c'_2 \cdot c_1 + c'_1 \cdot c_2- c_1\boxtimes_{\lambda} c_2\big)^Tx + e_3 + e_4
\end{equation}
According to Proposition \ref{prop:boxtimes_lambda:product}, the element $e_4$ is such that $\intbrackets{\mathsf{C}}(e_4) = 0$, which means that we have the identity
$\iota_q(\intbrackets{\mathsf{C}}(e_4)) = \iota_q(\intbrackets{\mathsf{C}}(e_3+e_4))$
and hence the following relation:
\[
e_3+e_4 \in \mathcal{I}_{k_1p+k_2p+k_1k_2p}(\mathsf{C})
\]
It follows from Definition \ref{def:encryption-space:general} that equation (\ref{eq:final:Yoneda:to:leveled-FHE}) is equivalent to the statement that $(c_1,c'_1) \otimes_{\lambda} (c_2,c'_2)$ is in $\mathcal{S}^{x}_{\mathsf{C},(k_1+k_2+k_1k_2)p}(m_1m_2|\sigma)$.
\end{proof}

In summary of this section, the assertions made in Theorem \ref{theo:Yoneda:to:leveled-FHE} and Proposition \ref{prop:FHE:decryption} affirm that the encryption scheme introduced in section \ref{sec:FHE:from-Yoneda} qualifies as a leveled fully homomorphic encryption scheme. This holds true under the condition that $\mathsf{Alice}$ publishes an element of the set $\sigma \mathcal{H}(x|\mathsf{C},\sigma)$, where $x$ represents the private key of the encryption scheme.

\subsection{Refreshable ciphertexts}\label{ssec:refreshable-ciphertexts}
This section introduces the concept of \emph{refreshable ciphertexts}, which will be instrumental in section \ref{ssec:proper-FHE} to demonstrate that ACES enables a fully homomorphic encryption scheme.
Throughout this section and section \ref{ssec:proper-FHE}, we adopt the following convention to streamline notation:

\begin{convention}[Notation]
Let $n$ be a positive integer. For every function $f: A \to B$ and every $n$-tuple $a = (a_1, a_2, \dots, a_n)$ of elements in $A$, denote by $f\tuplebrk{a}$ the corresponding $n$-tuple $(f(a_1), f(a_2), \dots, f(a_n))$ of elements in $B$.
\end{convention}

The following convention assigns two levels of meaning to the term \emph{pseudociphertext}. The first level refers to a generic terminology, primarily used in relation to an ACES secret key (see Definition \ref{def:refreshable-pseudociphertext}). The second specifically refers to a pseudociphertext associated with an ACES ciphertext (defined in Definition \ref{def:encryption-space:general}). Most of our results will only require the first level of meaning until we reach Theorem \ref{theo:test-refreshable-ciphertexts}.

\begin{convention}[Pseudociphertexts]\label{conv:pseudociphertexts}
For any positive integer $q$, we refer to a pair $(v,v')$, where $ v \in \mathbb{Z}_q^{(n)} $ and $ v' \in \mathbb{Z}_q $, as a \emph{$q$-pseudociphertext}. Let $\mathsf{C} = (p,q,\omega,u)$ be an arithmetic channel, $\sigma$ be an $n$-repartition of $q$, and $x$ be an element in $\mathbb{Z}[X]^{(n)}$. For any element $m$ in $\mathbb{Z}_q$ and any non-negative integer $k$, we define the \emph{underlying $q$-pseudociphertext} of a given ciphertext $(c,c') \in \mathcal{S}_{\mathsf{C},k}^x(m|\sigma)$ as the following $q$-pseudociphertext:
\[
\Big(\intbrackets{\mathsf{C}}\tuplebrk{-c}, \intbrackets{\mathsf{C}}(c')\Big).
\]
Here, the $n$-tuple $\intbrackets{\mathsf{C}}\tuplebrk{-c}$ is the tuple whose $i$-th coefficient is given by the following element in $\mathbb{Z}_q$.
\[
\intbrackets{\mathsf{C}}(-c_i) = -\intbrackets{\mathsf{C}}(c_i) = \pi_q(q - \iota_q(\intbrackets{\mathsf{C}}(c_i)))
\]
\end{convention}

The following definition is justified by a lack of homomorphism in the function $\iota_q:\mathbb{Z} \to \mathbb{Z}_q$, as suggested in Proposition \ref{prop:maps-for-modulo-operations} and illustrated in Example \ref{exa:maps-for-modulo-operations}. Specifically, the function $\iota_q:\mathbb{Z} \to \mathbb{Z}_q$ is only homomorphic up to a multiple of $q$ and the following definition targets cases in which this homomorphic property holds up to some multiple of $p\cdot q$ for some positive integer $p$.

\begin{definition}[Refreshable pseudociphertexts]\label{def:refreshable-pseudociphertext}
Let $\mathsf{C} = (p,q,\omega,u)$ be an arithmetic channel, let $n$ be a positive integer and let $x$ be an element in $\mathbb{Z}[X]^{(n)}$. For every non-negative integer $k$, we define the \emph{$k$-th space of $p$-refreshable $q$-pseudociphertexts} relative to $x$ as:
\[
\mathcal{F}_{p,q}^k(x) = \{(v,v') \in \mathbb{Z}_q^{(n)} \times \mathbb{Z}_q ~|~\iota_q(v') + \iota_q\tuplebrk{v}^T \big(\iota_q\circ\intbrackets{C}\tuplebrk{x}\big) = \iota_q\big(v'+v^T\intbrackets{C}\tuplebrk{x}\big) + kpq\}
\]
Every $q$-pseudociphertext in the set $\mathcal{F}_{p,q}^k(x)$ is said to be \emph{$p$-refreshable}.
\end{definition}

Proposition \ref{prop:refreshable-pseudociphertexts} (stated below) gives a context in which the function $\iota_q:\mathbb{Z} \to \mathbb{Z}_q$ can be seen as a ring homomorphism (up to composition) relative to the algebraic operations used in the decryption algorithm of ACES (this will be furthered discussed in Proposition \ref{prop:refreshable-ciphertexts}).

\begin{proposition}[Refreshable pseudociphertexts]\label{prop:refreshable-pseudociphertexts}
Let $\mathsf{C} = (p,q,\omega,u)$ be an arithmetic channel, let $n$ be a positive integer and let $x$ be an element in $\mathbb{Z}[X]^{(n)}$. For every non-negative integer $k$ and every element $(v,v') \in \mathcal{F}_{p,q}^k(x)$, the following equation holds in $\mathbb{Z}_p$.
\[
\pi_p \circ \iota_q(v') + \Big(\pi_p \circ \iota_q\tuplebrk{v}\Big)^T \Big(\pi_p \circ \iota_q \circ\intbrackets{C}\tuplebrk{x}\Big) = \pi_p \circ \iota_q\big(v'+v^T\intbrackets{C}\tuplebrk{x}\big)
\]
\end{proposition}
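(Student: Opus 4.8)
The plan is to simply apply the ring homomorphism $\pi_p:\mathbb{Z}\to\mathbb{Z}_p$ (Remark \ref{rem:maps-for-modulo-operations}) to the defining equation of $\mathcal{F}_{p,q}^k(x)$ and observe that $\pi_p$ annihilates the excess term $kpq$. The one point that deserves care is bookkeeping: one must keep track of which expressions are computed in $\mathbb{Z}$ and which in $\mathbb{Z}_q$, so that the composite $\pi_p\circ\iota_q$ is applied to the right objects. There is no genuine obstacle here; the whole argument is the combination of ``$\pi_p$ is a ring morphism'' and ``$p\mid kpq$''.

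First I would unwind the notation. For $(v,v')\in\mathcal{F}_{p,q}^k(x)$, the defining relation
\[
\iota_q(v') + \iota_q\tuplebrk{v}^T \big(\iota_q\circ\intbrackets{C}\tuplebrk{x}\big) = \iota_q\big(v'+v^T\intbrackets{C}\tuplebrk{x}\big) + kpq
\]
is an equality of \emph{integers} in $\mathbb{Z}$: each application of $\iota_q$ produces an element of $\{0,1,\dots,q-1\}\subseteq\mathbb{Z}$ (Definition \ref{def:maps-for-modulo-operations}), and the scalar product $\iota_q\tuplebrk{v}^T(\iota_q\circ\intbrackets{C}\tuplebrk{x}) = \sum_{i=1}^n \iota_q(v_i)\cdot\iota_q(\intbrackets{\mathsf{C}}(x_i))$ is computed in $\mathbb{Z}$. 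Thus it makes sense to apply $\pi_p$ to both sides.

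Then I would apply $\pi_p$ to this equation. On the left-hand side, since $\pi_p$ preserves sums and products (Remark \ref{rem:maps-for-modulo-operations}), one obtains
\[
\pi_p(\iota_q(v')) + \sum_{i=1}^n \pi_p(\iota_q(v_i))\cdot\pi_p(\iota_q(\intbrackets{\mathsf{C}}(x_i))),
\]
which is exactly $\pi_p\circ\iota_q(v') + \big(\pi_p\circ\iota_q\tuplebrk{v}\big)^T\big(\pi_p\circ\iota_q\circ\intbrackets{C}\tuplebrk{x}\big)$ in $\mathbb{Z}_p$, i.e.\ the left-hand side of the claimed identity. On the right-hand side, $\pi_p$ yields $\pi_p\circ\iota_q\big(v'+v^T\intbrackets{C}\tuplebrk{x}\big) + \pi_p(kpq)$, and $\pi_p(kpq)=0$ since $kpq$ is a multiple of $p$; hence the right-hand side collapses to $\pi_p\circ\iota_q\big(v'+v^T\intbrackets{C}\tuplebrk{x}\big)$, the right-hand side of the claim. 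Equating the two images of $\pi_p$ gives the asserted equation in $\mathbb{Z}_p$, completing the proof.
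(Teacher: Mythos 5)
Your proposal is correct and matches the paper's own argument: the paper likewise applies the ring homomorphism $\pi_p$ to the defining relation of $\mathcal{F}_{p,q}^k(x)$, distributes it over the sum and scalar product, and uses $\pi_p(kpq)=0$ to conclude. Your extra bookkeeping about which expressions live in $\mathbb{Z}$ versus $\mathbb{Z}_q$ is a welcome clarification but does not change the route.
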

\begin{proof}
Since we have $(v,v') \in \mathcal{F}_{p,q}^k(x)$, the following relation holds.
\[
\iota_q(v') + \iota_q\tuplebrk{v}^T \big(\iota_q\circ\intbrackets{C}\tuplebrk{x}\big) = \iota_q(v'+v^T\intbrackets{C}\tuplebrk{x}) + kpq
\]
By applying the ring homomorphism $\pi_p:\mathbb{Z} \to \mathbb{Z}_p$ on the previous equation, we obtain the following relation.
\[
\pi_p(\iota_q(v')) + \pi_p\tuplebrk{\iota_q\tuplebrk{v}}^T \pi_p\tuplebrk{\iota_q\circ\intbrackets{C}\tuplebrk{x}} = \pi_p\big(\iota_q\big(v'+v^T\intbrackets{C}\tuplebrk{x}\big)\big) + 0
\]
It then follows from composing the function $\pi_p$ with $\iota_q$ that the equation of the statement holds.
\end{proof}

\begin{definition}[Refreshable ciphertexts]\label{def:refreshable-ciphertexts}
Let $\mathsf{C} = (p,q,\omega,u)$ be an arithmetic channel, $\sigma$ be an $n$-repartition of $q$, and $x$ be an element in $\mathbb{Z}[X]^{(n)}$. For every element $m$ in $\mathbb{Z}_q$ and every non-negative integer $k$, we will say that a ciphertext $(c,c') \in \mathcal{S}_{\mathsf{C},k}^x(m|\sigma)$ is \emph{$p$-refreshable} if so is its underlying $q$-pseudociphertext.
\end{definition}

The following proposition constitutes the main application of Definition \ref{def:refreshable-ciphertexts}. Specifically, it suggests that refreshable ciphertexts can be decrypted with reduced knowledge of the secret key. Importantly, this information is unlikely to be inferred from the published data of ACES due to the absence of homomorphic properties in $\iota_q$. In fact, this feature is one of the main components ensuring the security of ACES.

\begin{proposition}[Refreshable ciphertexts]\label{prop:refreshable-ciphertexts}
Let $\mathsf{C} = (p,q,\omega,u)$ be an arithmetic channel, $\sigma$ be an $n$-repartition of $q$, and $x$ be an element in $\mathbb{Z}[X]^{(n)}$. For every element $m$ in $\mathbb{Z}_p$, every non-negative integer $k < (q+1)/p-1$, and every $p$-refreshable ciphertext $(c,c') \in \mathcal{S}_{\mathsf{C},k}^x(m|\sigma)$, the following equation holds in $\mathbb{Z}_p$.
\[
\pi_p \circ \iota_q(\intbrackets{\mathsf{C}}(c')) + \Big(\pi_p \circ \iota_q\circ\intbrackets{\mathsf{C}}\tuplebrk{-c}\Big)^T \Big(\pi_p \circ \iota_q\circ\intbrackets{C}\tuplebrk{x}\Big) = m
\]
\end{proposition}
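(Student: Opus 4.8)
The plan is to reduce the statement to two facts already in hand: Proposition \ref{prop:refreshable-pseudociphertexts}, which moves $\iota_q$ through the bilinear pairing appearing in the decryption, together with the noise-control computation from the end of the proof of Proposition \ref{prop:FHE:decryption}.

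First I would unwind the hypothesis. By Definition \ref{def:refreshable-ciphertexts}, the ciphertext $(c,c')$ being $p$-refreshable means exactly that its underlying $q$-pseudociphertext $(v,v') := \big(\intbrackets{\mathsf{C}}\tuplebrk{-c},\,\intbrackets{\mathsf{C}}(c')\big)$ (Convention \ref{conv:pseudociphertexts}) lies in $\mathcal{F}_{p,q}^{k'}(x)$ for some non-negative integer $k'$. Feeding this pair into Proposition \ref{prop:refreshable-pseudociphertexts} and identifying $\pi_p\circ\iota_q\tuplebrk{\intbrackets{\mathsf{C}}\tuplebrk{-c}}$ componentwise with $\pi_p\circ\iota_q\circ\intbrackets{\mathsf{C}}\tuplebrk{-c}$ yields
\[
\pi_p \circ \iota_q(\intbrackets{\mathsf{C}}(c')) + \Big(\pi_p \circ \iota_q\circ\intbrackets{\mathsf{C}}\tuplebrk{-c}\Big)^T \Big(\pi_p \circ \iota_q\circ\intbrackets{C}\tuplebrk{x}\Big) = \pi_p \circ \iota_q\big(\intbrackets{\mathsf{C}}(c') + \intbrackets{\mathsf{C}}\tuplebrk{-c}^T \intbrackets{C}\tuplebrk{x}\big),
\]
so the left-hand side already coincides with the one in the claim, and it remains only to identify the right-hand side with $m$.

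Next I would simplify the argument of $\pi_p\circ\iota_q$ on the right. Since $(c,c')\in\mathcal{S}_{\mathsf{C},k}^{x}(m|\sigma)$, Definition \ref{def:encryption-space:general} furnishes $r\in\mathcal{E}(\mathsf{C})$ and $e\in\mathcal{I}_k(\mathsf{C})$ with $c'=r(m)+c^Tx+e$ in $\mathbb{Z}_q[X]_u$. Using that $\intbrackets{\mathsf{C}}$ is a ring homomorphism (Proposition \ref{prop:channel-homomorphism}), that $\intbrackets{\mathsf{C}}(-c_i)=-\intbrackets{\mathsf{C}}(c_i)$ (Convention \ref{conv:pseudociphertexts}) so that $\intbrackets{\mathsf{C}}\tuplebrk{-c}^T\intbrackets{C}\tuplebrk{x}=-\intbrackets{\mathsf{C}}(c^Tx)$, and that $\intbrackets{\mathsf{C}}(r(m))=m$ (Definition \ref{def:arith-channel:E:errors}), the expression $\intbrackets{\mathsf{C}}(c')+\intbrackets{\mathsf{C}}\tuplebrk{-c}^T\intbrackets{C}\tuplebrk{x}$ telescopes to $m+\intbrackets{\mathsf{C}}(e)$ in $\mathbb{Z}_q$, the $c^Tx$ term canceling.

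Finally I would rerun the noise bound: $e\in\mathcal{I}_k(\mathsf{C})$ means $\iota_q(\intbrackets{\mathsf{C}}(e))\in\chi(k)=\{0,p,\dots,kp\}$ (Definition \ref{def:vanishing-ideal}, Convention \ref{def:chi}), while $m\in\mathbb{Z}_p$ gives $\iota_q(m)\le p-1$; the hypothesis $k<(q+1)/p-1$ then forces $\iota_q(m)+\iota_q(\intbrackets{\mathsf{C}}(e))<q$, so Proposition \ref{prop:maps-for-modulo-operations} (with $F$ the addition) lets $\iota_q$ distribute, and applying the ring homomorphism $\pi_p$ (Remark \ref{rem:maps-for-modulo-operations}) yields $\pi_p\circ\iota_q(m+\intbrackets{\mathsf{C}}(e))=\pi_p(\iota_q(m))+\pi_p(\iota_q(\intbrackets{\mathsf{C}}(e)))=m+0=m$, the last summand vanishing since $\iota_q(\intbrackets{\mathsf{C}}(e))$ is a multiple of $p$. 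This last step is essentially verbatim the computation at the end of the proof of Proposition \ref{prop:FHE:decryption}, so it may simply be cited. The only genuine obstacle is bookkeeping: keeping the $\mathbb{Z}_q$-valued and $\mathbb{Z}$-valued quantities apart, and checking that the pairing $v^T\intbrackets{C}\tuplebrk{x}$ appearing in $\mathcal{F}_{p,q}^{k'}(x)$ really equals $-\intbrackets{\mathsf{C}}(c^Tx)$ so that the cancellation against the encryption-space relation is legitimate; once that discipline is in place the three steps compose with no further input.
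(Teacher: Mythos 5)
Your proposal is correct and follows essentially the same route as the paper: unwind Definition \ref{def:refreshable-ciphertexts}, apply Proposition \ref{prop:refreshable-pseudociphertexts} to push $\pi_p\circ\iota_q$ through the pairing, and then reduce to decryption correctness. The only cosmetic difference is that the paper recognizes the resulting expression as $\mathsf{D}(c^Tx,c')$ and cites Proposition \ref{prop:FHE:decryption} outright, whereas you re-derive that final noise-bound computation inline (while correctly noting it could simply be cited).
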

\begin{proof}
By Definition \ref{def:refreshable-ciphertexts}, the $q$-pseudociphertext $(\intbrackets{\mathsf{C}}\tuplebrk{-c}, \intbrackets{\mathsf{C}}(c'))$ is $p$-refreshable. It follows from Proposition \ref{prop:refreshable-pseudociphertexts} that the top equation displayed below holds. The other equations underneath it are deduced directly from the homomorphic properties of $\intbrackets{\mathsf{C}}$ and the definition of the decryption algorithm for ACES (see Generation \ref{gen:FHE:Yoneda}).
\begin{align*}
\pi_p \circ \iota_q(\intbrackets{\mathsf{C}}(c')) + \Big(\pi_p \circ \iota_q\circ\intbrackets{\mathsf{C}}\tuplebrk{-c}\Big)^T \Big(\pi_p \circ \iota_q \circ \intbrackets{C}\tuplebrk{x}\Big) & = \pi_p \circ \iota_q\Big(\intbrackets{\mathsf{C}}(c')+\intbrackets{\mathsf{C}}\tuplebrk{-c}^T\intbrackets{C}\tuplebrk{x}\Big)\\
& = \pi_p \circ \iota_q(\intbrackets{\mathsf{C}}(c'-c^Tx))\\
& = \mathsf{D}(c^Tx,c')
\end{align*}
Since $m \in \mathbb{Z}_p$ and we assumed that the inequality $k < (q+1)/p-1$ holds, Proposition  \ref{prop:FHE:decryption} implies that the equation $\mathsf{D}(c^Tx,c') = m$ holds. This shows the statement.
\end{proof}

In the remainder of this section, we demonstrate how it is possible to determine whether a given ciphertext is refreshable without knowledge of the secret key. To achieve this, we introduce \emph{locators} (Definition \ref{def:locators}) and \emph{directors} (Definition \ref{def:directors}), which define an affine space in which ciphertext components can be expressed. This decomposition may resemble Gentry's squashing technique \cite{Gen09}, which also aims to refresh ciphertexts through decomposition. However, a key distinction is that our method targets the ciphertext itself, whereas squashing decomposes secret-key information, exposing it to greater security risks.

\begin{definition}[Locators]\label{def:locators}
Let $\mathsf{C} = (p,q,\omega,u)$ be an arithmetic channel, let $n$ be a positive integer and let $x$ be an element in $\mathbb{Z}[X]^{(n)}$. For every non-negative integer $k$, we define the \emph{$k$-th space of $p$-locators} relative to $x$ as:
\[
\mathcal{L}_{p,q}^k(x) = \left\{\ell \in \mathbb{Z}_q^{(n)} ~\left|~ \sum_{i=1}^n \iota_q\circ \intbrackets{C}(x_i) - \left\lfloor \frac{1}{q} \cdot \iota_q\tuplebrk{\ell}^T \big(\iota_q\circ\intbrackets{C}\tuplebrk{x}\big) \right\rfloor = kp\right.\right\}
\]
Every $n$-tuple in the set $\mathcal{L}_{p,q}^k(x)$ is said to be a $p$-locator for $x$.
\end{definition}

While locators describe structured sets of elements that characterize possible values of a ciphertext component, directors represent shifts within this space. A locator combined with directors can form another locator, allowing structured decomposition.

\begin{definition}[Directors]\label{def:directors}
Let $\mathsf{C} = (p,q,\omega,u)$ be an arithmetic channel, let $n$ be a positive integer and let $x$ be an element in $\mathbb{Z}[X]^{(n)}$. For every non-negative integer $k$, we define the \emph{$k$-th space of $p$-directors} relative to $x$ as:
\[
\mathcal{D}_{p,q}^k(x) = \left\{\ell \in \mathbb{Z}_q^{(n)} ~\left|~ \left\lfloor \frac{1}{q} \cdot \iota_q\tuplebrk{\ell}^T \big(\iota_q\circ\intbrackets{C}\tuplebrk{x}\big) \right\rfloor = kp\right.\right\}
\]
Every $n$-tuple in the set $\mathcal{D}_{p,q}^k(x)$ is said to be a $p$-director for $x$.
\end{definition}

The following definition introduces the concept of a \emph{margin}, which quantifies how much flexibility there is in combining locators and directors within the affine structure.

\begin{definition}[Margin]\label{def:margin}
Let $n$ and $q$ be a positive integers and let $x$ be an element in $\mathbb{Z}[X]^{(n)}$. For every element $\ell \in \mathbb{Z}_q^{(n)}$, we define the \emph{margin} of $\ell$ relative to $x$ as the real number:
\[
\mathsf{marg}_x(\ell) = \frac{1}{q} \cdot \iota_q\tuplebrk{\ell}^T \big(\iota_q\circ\intbrackets{C}\tuplebrk{x}\big) - \left\lfloor \frac{1}{q} \cdot \iota_q\tuplebrk{\ell}^T \big(\iota_q\circ\intbrackets{C}\tuplebrk{x}\big) \right\rfloor
\]
Note that $\mathsf{marg}_x(\ell)$ belongs to the interval $[0,1)$ of non-negative real number less than $1$.
\end{definition}

The following proposition gives conditions in which it is possible to combine a locator with a collection of directors to obtain a new locator.

\begin{proposition}[Affine structure]\label{prop:affine-structure}
Let $\mathsf{C} = (p,q,\omega,u)$ be an arithmetic channel, let $n$ be a positive integer, and let $x$ be an element of $\mathbb{Z}[X]^{(n)}$. For every non-negative integer $r$, let $\square_i$ denote either addition or subtraction, and let $\widetilde{\square}_i$ denote its corresponding inverse operation. Define the following expressions:
\[
F(x_0,\dots,x_{r}) = x_0 {\,\square_1\,} x_1 {\,\square_2\,} \dots {\,\square_r\,} x_r,
\quad \textrm{and} \quad
\widetilde{F}(x_0,\dots,x_{r}) = x_0 {\,\widetilde{\square}_1\,} x_1 {\,\widetilde{\square}_2\,} \dots {\,\widetilde{\square}_r\,} x_r.
\]
For every pair $(k_0,k_1)$ of non-negative integers, every $p$-locator $\ell \in \mathcal{L}_{p,q}^{k_0}(x)$, and every collection of $p$-directors $\delta_1 \in \mathcal{D}_{p,q}^{k_1}(x), \dots, \delta_r \in \mathcal{D}_{p,q}^{k_r}(x)$, suppose that the expression $F(\iota_q\tuplebrk{\ell},\iota_q\tuplebrk{\delta_1},\dots,\iota_q\tuplebrk{\delta_r})$, computed in $\mathbb{Z}^{(n)}$, belongs to $\mathbb{Z}_q^{(n)}$ and that there exists a non-negative integer $k'$ such that
\[
F(\mathsf{marg}_x(\ell),\mathsf{marg}_x(\delta_1),\dots,\mathsf{marg}_x(\delta_r)) \in [pk',pk'+1)
\]
Then, the following hold:
\[
\def\arraystretch{2}
\left\{
\begin{array}{l}
F(\mathsf{marg}_x(\ell),\mathsf{marg}_x(\delta_1),\dots,\mathsf{marg}_x(\delta_r)) - pk' = \mathsf{marg}_x(F(\ell,\delta_1,\dots,\delta_r)),\\
F(\ell,\delta_1,\dots,\delta_r) \in \mathcal{L}_{p,q}^{\widetilde{F}(k_0,k_1,\dots,k_r)-k'}(x).
\end{array}
\right.
\]
\end{proposition}
\begin{proof}
We assume that the conditions of the proposition are satisfied. By Proposition \ref{prop:maps-for-modulo-operations}, if the expression $F(\iota_q\tuplebrk{\ell},\iota_q\tuplebrk{\delta_1},\dots,\iota_q\tuplebrk{\delta_r})$ belongs to $\mathbb{Z}_q^{(n)}$, then we have the identity:
\begin{equation}\label{eq:iota:affine}
F(\iota_q\tuplebrk{\ell},\iota_q\tuplebrk{\delta_1},\dots,\iota_q\tuplebrk{\delta_r}) = \iota_q\tuplebrk{F(\ell,\delta_1,\dots,\delta_r)}.
\end{equation}
Note that in equation (\ref{eq:iota:affine}), the term $F(\ell,\delta_1,\dots,\delta_r)$ is implicitly computed in $\mathbb{Z}_q^{(n)}$, since the domain of $\iota_q$ is $\mathbb{Z}_q$. We now show that the following relation holds:
\[
F(\ell,\delta_1,\dots,\delta_r) \in \mathcal{L}_{p,q}^{\widetilde{F}(k_0,k_1,\dots,k_r)-k'}(x).
\]
From equation (\ref{eq:iota:affine}) and Definition \ref{def:margin}, we obtain:
\begin{align*}
&F(\mathsf{marg}_x(\ell), \mathsf{marg}_x(\delta_1),\dots,\mathsf{marg}_x(\delta_r))   = \frac{1}{q} \cdot \big(\iota_q\tuplebrk{F(\ell,\delta_1,\dots,\delta_r)}\big)^T \big(\iota_q\circ\intbrackets{C}\tuplebrk{x}\big) - \dots\\
&  \quad\quad F\left(\left\lfloor \frac{1}{q} \cdot \iota_q\tuplebrk{\ell}^T \big(\iota_q\circ\intbrackets{C}\tuplebrk{x}\big) \right\rfloor, \left\lfloor \frac{1}{q} \cdot \iota_q\tuplebrk{\delta_1}^T \big(\iota_q\circ\intbrackets{C}\tuplebrk{x}\big) \right\rfloor, \dots, \left\lfloor \frac{1}{q} \cdot \iota_q\tuplebrk{\delta_r}^T \big(\iota_q\circ\intbrackets{C}\tuplebrk{x}\big) \right\rfloor \right).
\end{align*}
Using the definition of the margin of $F(\ell,\delta_1,\dots,\delta_r)$ along with the previous equation, we derive the identity:
\begin{align*}
&F(\mathsf{marg}_x(\ell), \mathsf{marg}_x(\delta_1),\dots,\mathsf{marg}_x(\delta_r)) - \mathsf{marg}_x(F(\ell,\delta_1,\dots,\delta_r))  = \dots \\
&  \quad\quad F\left(\left\lfloor \frac{1}{q} \cdot \iota_q\tuplebrk{\ell}^T \big(\iota_q\circ\intbrackets{C}\tuplebrk{x}\big) \right\rfloor, \left\lfloor \frac{1}{q} \cdot \iota_q\tuplebrk{\delta_1}^T \big(\iota_q\circ\intbrackets{C}\tuplebrk{x}\big) \right\rfloor, \dots, \left\lfloor \frac{1}{q} \cdot \iota_q\tuplebrk{\delta_r}^T \big(\iota_q\circ\intbrackets{C}\tuplebrk{x}\big) \right\rfloor \right) - \dots \\
& \quad\quad\quad\quad\quad\quad \left\lfloor \frac{1}{q} \cdot \iota_q\tuplebrk{F(\ell,\delta_1,\dots,\delta_r)}^T \big(\iota_q\circ\intbrackets{C}\tuplebrk{x}\big) \right\rfloor.
\end{align*}
On the left-hand side, we have a real number in the open interval $(pk'-1,pk'+1)$, while on the right-hand side, we have an integer. Since the only integer in this interval is $pk'$, we deduce that:
\[
F(\mathsf{marg}_x(\ell), \mathsf{marg}_x(\delta_1),\dots,\mathsf{marg}_x(\delta_r)) = \mathsf{marg}_x(F(\ell,\delta_1,\dots,\delta_r)) + pk'.
\]
Thus, the following decomposition holds:
\begin{align*}
& \left\lfloor \frac{1}{q} \cdot \iota_q\tuplebrk{F(\ell,\delta_1,\dots,\delta_r)}^T \big(\iota_q\circ\intbrackets{C}\tuplebrk{x}\big) \right\rfloor = \dots \\
& F\left(\left\lfloor \frac{1}{q} \cdot \iota_q\tuplebrk{\ell}^T \big(\iota_q\circ\intbrackets{C}\tuplebrk{x}\big) \right\rfloor, \left\lfloor \frac{1}{q} \cdot \iota_q\tuplebrk{\delta_1}^T \big(\iota_q\circ\intbrackets{C}\tuplebrk{x}\big) \right\rfloor, \dots, \left\lfloor \frac{1}{q} \cdot \iota_q\tuplebrk{\delta_r}^T \big(\iota_q\circ\intbrackets{C}\tuplebrk{x}\big) \right\rfloor \right) - pk'.
\end{align*}
Since $\ell \in \mathcal{L}_{p,q}^{k_0}(x)$ and $\delta_1 \in \mathcal{D}_{p,q}^{k_1}(x), \dots, \delta_r \in \mathcal{D}_{p,q}^{k_r}(x)$, it follows from the previous equation and Definitions \ref{def:locators} and \ref{def:directors} that:
\[
\sum_{i=1}^n \iota_q\circ \intbrackets{C}(x_i) - \left\lfloor \frac{1}{q} \cdot \iota_q\tuplebrk{F(\ell,\delta_1,\dots,\delta_r)}^T \big(\iota_q\circ\intbrackets{C}\tuplebrk{x}\big) \right\rfloor = (\widetilde{F}(k_0,k_1,\dots,k_r) -k') p.
\]
Thus, we conclude that $F(\ell,\delta_1,\dots,\delta_r) \in \mathcal{L}_{p,q}^{\widetilde{F}(k_0,k_1,\dots,k_r)-k'}(x)$, which completes the proof.
\end{proof}

\begin{remark}[Affine structure]\label{rem:affine_structure}
In Theorem \ref{theo:test-refreshable-ciphertexts} (see below), we will see that a ciphertext $(c, c')$, relative to some arithmetic channel $\mathsf{C}$, is refreshable if the vector $\intbrackets{C}\tuplebrk{c}$ is a locator. This characterization motivates the need for a test to determine whether a given $c$ is a locator. Our proposed approach is to attempt to decompose $c$ as
\[
c = \ell + \sum_{i = 1}^k \pm \delta_i,
\]
where $\ell$ is a locator and the terms $\delta_i$ are directors. If such a decomposition is found, then by Proposition \ref{prop:affine-structure}, we can conclude that $c$ is indeed a locator, provided that the conditions of the proposition are satisfied.

One way to find this decomposition is by solving a matrix equation $Ad = c$ in $\mathbb{Z}_q$, where the matrix $A$ encodes a database of preselected locators and directors. However, it is important to note that this test is not exhaustive: failure to find such a decomposition does not necessarily mean that $c$ is not a locator. The effectiveness of the test depends on the choice of stored vectors, which should ideally form a basis sufficient to represent the space of locators.
\end{remark}

We conclude this section with Theorem \ref{theo:test-refreshable-ciphertexts}, which provides a straightforward test for verifying whether ACES ciphertexts are refreshable.

\begin{theorem}\label{theo:test-refreshable-ciphertexts}
Let $\mathsf{C} = (p,q,\omega,u)$ be an arithmetic channel, let $n$ be a positive integer, and let $x$ be an element of $\mathbb{Z}[X]^{(n)}$. For every element $m$ in $\mathbb{Z}_p$, every non-negative integer $k$, and every ciphertext $(c, c') \in \mathcal{S}_{\mathsf{C},k}^x(m|\sigma)$, if there exists a non-negative integer $k_0$ such that $\ell = \intbrackets{C}\tuplebrk{c} \in \mathcal{L}_{p,q}^{k_0}(x)$, then the following implication holds:
\[
\frac{p \cdot (k+1)-1}{q} < 1 - \mathsf{marg}_x(\ell) \quad\Rightarrow \quad  (c, c')\textrm{ is $p$-refreshable}.
\]
\end{theorem}

\begin{proof}
We assume that the conditions $\ell = \intbrackets{C}\tuplebrk{c} \in \mathcal{L}_{p,q}^{k_0}(x)$ and $\frac{p(k+1)-1}{q} < 1 - \mathsf{marg}_x(\ell)$ hold. Let us denote $z = \intbrackets{C}(c')$. To prove the theorem, we need to show that the following equation holds:
\begin{equation}\label{eq:refreshable:euclidean:v_prime:0}
\iota_q(z) + \iota_q\tuplebrk{-\ell}^T \big(\iota_q \circ \intbrackets{C}\tuplebrk{x}\big) = \iota_q\big(z-\ell^T\intbrackets{C}\tuplebrk{x}\big) + k_0pq.
\end{equation}
Before proving this equation, observe that our assumption $(c, c') \in \mathcal{S}_{\mathsf{C},k}^x(m|\sigma)$, where $z = \intbrackets{C}(c')$ and $\ell = \intbrackets{C}\tuplebrk{c}$, gives the identity:
\begin{equation}\label{eq:refreshable:euclidean:v_prime:z_expression}
z = \pi_q\Big(m + \iota_q\tuplebrk{\ell}^T \big(\iota_q \circ \intbrackets{C}\tuplebrk{x}\big) + kp\Big).
\end{equation}
We will use expression (\ref{eq:refreshable:euclidean:v_prime:z_expression}) to establish both sides of equation (\ref{eq:refreshable:euclidean:v_prime:0}). We begin by reformulating the right-hand side.

First, note that the inequality $\frac{p(k+1)-1}{q} < 1 - \mathsf{marg}_x(\ell)$ simplifies to $pk+(p-1) < q$. This implies that for every $m \in \mathbb{Z}_p$, the sum $m+pk$ remains in $\mathbb{Z}_q$. By Definition \ref{def:maps-for-modulo-operations}, we obtain the following equations:
\begin{align*}
\iota_q\big(z-\ell^T\intbrackets{C}\tuplebrk{x}\big) & = \iota_q\Big(m + \ell^T  \intbrackets{C}\tuplebrk{x} + kp - \ell^T\intbrackets{C}\tuplebrk{x}\Big) & (\text{from equation (\ref{eq:refreshable:euclidean:v_prime:z_expression})})\\
& = \iota_q(m+kp)\\
& = m+k_1p & (\text{since } m+kp < q).
\end{align*}
This provides the desired reformulation for the right-hand side of equation (\ref{eq:refreshable:euclidean:v_prime:0}).

Next, we reformulate the left-hand side of equation (\ref{eq:refreshable:euclidean:v_prime:0}). Note that equation (\ref{eq:refreshable:euclidean:v_prime:z_expression}) expresses $z$ as an image of the quotient map $\pi_q:\mathbb{Z} \to \mathbb{Z}_q$. Specifically, this implies that $z$ is the remainder in the Euclidean division of $m + \iota_q\tuplebrk{\ell}^T \big(\iota_q \circ \intbrackets{C}\tuplebrk{x}\big) + kp$ by $q$. Therefore, we have:
\begin{equation}\label{eq:refreshable:euclidean:v_prime:1}
m + \iota_q\tuplebrk{\ell}^T \big(\iota_q \circ \intbrackets{C}\tuplebrk{x}\big) + kp = q\left \lfloor \frac{\iota_q\tuplebrk{\ell}^T \big(\iota_q \circ \intbrackets{C}\tuplebrk{x}\big) + m + kp}{q} \right\rfloor + \iota_q(z).
\end{equation}
Since we have the inequality $\frac{kp+m}{q} < 1-\mathsf{marg}_x(\ell)$, it follows from Definition \ref{def:margin} that we can simplify the floored term in equation (\ref{eq:refreshable:euclidean:v_prime:1}) as:
\begin{equation}\label{eq:refreshable:euclidean:v_prime:2}
\left \lfloor \frac{\iota_q\tuplebrk{\ell}^T \big(\iota_q\circ\intbrackets{C}\tuplebrk{x}\big) + m+kp}{q}\right\rfloor = \left \lfloor \frac{\iota_q\tuplebrk{\ell}^T \big(\iota_q\circ\intbrackets{C}\tuplebrk{x}\big)}{q}\right\rfloor.
\end{equation}
Also, observe that the identity $\iota_q\tuplebrk{\ell}^T + \iota_q\tuplebrk{-\ell}^T = (q,q,\dots,q)$ holds. Using equation (\ref{eq:refreshable:euclidean:v_prime:2}), we simplify equation (\ref{eq:refreshable:euclidean:v_prime:1}) as follows:
\[
m + q\sum_i \iota_q \circ \intbrackets{C}(x_i) + kp = q\left \lfloor \frac{\iota_q\tuplebrk{\ell}^T \big(\iota_q\circ\intbrackets{C}\tuplebrk{x}\big)}{q}\right\rfloor + \iota_q(z) + \iota_q\tuplebrk{-\ell}^T \big(\iota_q \circ \intbrackets{C}\tuplebrk{x}\big).
\]
This simplifies further to:
\begin{equation}\label{eq:refreshable:euclidean:v_prime:3}
\iota_q(z) + \iota_q\tuplebrk{-\ell}^T \big(\iota_q\circ\intbrackets{C}\tuplebrk{x}\big)  = m + kp + q\left(\sum_i \iota_q \circ \intbrackets{C}(x_i) -\left \lfloor \frac{\iota_q\tuplebrk{\ell}^T \big(\iota_q\circ\intbrackets{C}\tuplebrk{x}\big)}{q}\right\rfloor\right).
\end{equation}
Since $\ell$ is a $p$-locator in $\mathcal{L}_{p,q}^{k_0}(x)$ (by Definition \ref{def:locators}), the term inside the large parentheses in equation (\ref{eq:refreshable:euclidean:v_prime:3}) is equal to $k_0p$. Furthermore, since we showed that $\iota_q\big(z-\ell^T\intbrackets{C}\tuplebrk{x}\big) = m+kp$, equation (\ref{eq:refreshable:euclidean:v_prime:3}) is equivalent to equation (\ref{eq:refreshable:euclidean:v_prime:0}). This concludes the proof.
\end{proof}

\begin{remark}[Non-refreshable ciphertexts]
So far, we have not discussed cases where a ciphertext $(c, c')$ is not refreshable. If such a situation occurs, we may want to transform it into a refreshable ciphertext while preserving the encoded message. One way to achieve this is by applying neutral homomorphic operations to $(c, c')$ so that the resulting ciphertext still encrypts the same message. For example, let $\mathsf{C} = (p, q, \omega, u)$ be an arithmetic channel, let $\sigma$ be an $n$-repartition of $q$, and let $x$ be an element of $\mathbb{Z}[X]^{(n)}$. For every element $m \in \mathbb{Z}_p$ and every non-$p$-refreshable ciphertext $(c, c') \in \mathcal{S}_{\mathsf{C},k}^x(m|\sigma)$, one can attempt to find another ciphertext $(c_0, c'_0) \in \mathcal{S}_{\mathsf{C},k_0}^x(0|\sigma)$ such that the sum $(c_0, c'_0) \oplus (c, c') \in \mathcal{S}_{\mathsf{C},k_0+k}^x(m|\sigma)$ is $p$-refreshable.
\end{remark}

\subsection{From leveled FHE schemes to proper FHE schemes}\label{ssec:proper-FHE}
This section introduces a refresh operation designed to decrease the encryption level linked to a ciphertext. First, let us extend our set of algebraic operations defined in Definition \ref{def:FHE:algebraic-operations} with a scalar product.
\smallskip

The following definition uses the implicit fact that the addition operation $\oplus$ defined in Definition \ref{def:FHE:algebraic-operations} is associative and commutative.

\begin{definition}[Scalar product]\label{def:scalar-product-ciphertexts}
Let $\mathsf{C} = (p,q,\omega,u)$ be an arithmetic channel, let $\sigma$ be an $n$-repartition of $q$, and let $x$ be an element in $\mathbb{Z}[X]^{(n)}$. Also, let us consider an element $\lambda \in \sigma \mathcal{H}(x|\mathsf{C},\sigma)$. For every positive integer $n_0$, consider the following:
\begin{itemize}
\item[1)] an $n_0$-tuple $\gamma_1 = ((c_{1,i},c_{1,i}'))_{i \in [n_0]}$ of elements in $\sigma\mathbb{Z}_q[X]_u \times \mathbb{Z}_q[X]_u$;
\item[2)] an $n_0$-tuple $\gamma_2 = ((c_{2,i},c_{2,i}'))_{i \in [n_0]}$ of elements in $\sigma\mathbb{Z}_q[X]_u \times \mathbb{Z}_q[X]_u$.
\end{itemize}
We define the $\lambda$-product of $\gamma_1$ with $\gamma_2$ as the following element $\gamma_1 \odot_{\lambda} \gamma_2$ defined in $\sigma\mathbb{Z}_q[X]_u \times \mathbb{Z}_q[X]_u$:
\begin{align*}
\gamma_1 \odot_{\lambda} \gamma_2 & = \big((c_{1,1},c_{1,1}') \otimes_{\lambda} (c_{2,1},c_{2,1}')\big) \oplus \dots \oplus \big((c_{1,n_0},c_{1,n_0}') \otimes_{\lambda} (c_{2,n_0},c_{2,n_0}')\big)\\
& = \bigoplus_{i=1}^{n_0} (c_{1,i},c_{1,i}') \otimes_{\lambda} (c_{2,i},c_{2,i}')
\end{align*}
\end{definition}

The following definition introduces a sequence of ciphertexts, called \emph{refresher}, that encrypts transformed information related to the secret key. This transformation reduces the risk of revealing the secret key and eliminates the need for deep circuit evaluation by matching the simplicity of the base decryption algorithm. The refresher structure also diverges from conventional bootstrapping techniques, as it is expected to use the general ciphertext formula from Definition \ref{def:encryption-space:general}, rather than the public-key-dependent encryption used in traditional bootstrapping. For these reasons, our approach notably differs from conventional bootstrapping techniques.

\begin{definition}[Refresher]\label{def:refresher}
Let $\mathsf{C} = (p,q,\omega,u)$ be an arithmetic channel, let $\sigma$ be an $n$-repartition of $q$, and let $x$ be an element in $\mathbb{Z}[X]^{(n)}$. We define a \emph{refresher} for the pair $(\mathsf{C},x)$ as an $n$-sequence $(\rho_1,\rho_1'),(\rho_2,\rho_2'), \dots, (\rho_n,\rho_n')$ of ciphertexts and an $n$-sequence $\kappa_1,\kappa_2,\dots,\kappa_n$ of non-negative integers such that for every $i \in [n]$, the following relation holds:
\[
(\rho_i,\rho_i') \in \mathcal{S}_{\mathsf{C},\kappa_i}^{x}\big(\pi_q \circ \iota_p \circ \pi_p \circ \iota_q \circ \intbrackets{\mathsf{C}}(x_i)\big|\sigma\big)
\]
Such a refresher structure will be denoted as a pair $(\kappa,\varrho)$ where $\kappa$ denotes the $n$-vector $(\kappa_1,\kappa_2,\dots ,\kappa_n)$ and $\varrho$ denotes the $n$-vector $\big((\rho_1,\rho_1'),(\rho_2,\rho_2'), \dots, (\rho_n,\rho_n')\big)$.
\end{definition}

The main idea behind Theorem \ref{theo:Yoneda:to:proper-FHE} (stated below) is to extend the result of Proposition \ref{prop:refreshable-ciphertexts} to the type of information made available by a refresher (Definition \ref{def:refresher}).

\begin{theorem}[Proper FHE]\label{theo:Yoneda:to:proper-FHE}
Let $\mathsf{C} = (p,q,\omega,u)$ be an arithmetic channel, let $\sigma$ be an $n$-repartition of $q$, and let $x$ be an element in $\mathbb{Z}[X]^{(n)}$. Also, let us consider $\lambda \in \sigma \mathcal{H}(x|\mathsf{C},\sigma)$ and a refresher $(\kappa,\varrho)$ for $(\mathsf{C},x)$. For every $m \in \mathbb{Z}_p$, every non-negative integer $k < (q + 1) / p - 1$, and every $p$-refreshable ciphertext $(c, c') \in \mathcal{S}^{x}_{\mathsf{C}, k}(m | \sigma)$, let $\gamma = ((c_{1,i}, c_{1,i}'))_{i \in [n]}$ where
\[
(c_{1,i}, c_{1,i}') \in \mathcal{S}_{\mathsf{C}, k_{1,i}}^{x}\big(\pi_q \circ \iota_p \circ \pi_p \circ \iota_q \circ\intbrackets{\mathsf{C}}(-c_i) \big| \sigma \big) \textrm{ for some integer } k_{1,i},
\]
and take
\[
(c_2, c'_2) \in \mathcal{S}_{\mathsf{C}, k_{2}}^{x}\big(\pi_q \circ \iota_p \circ \pi_p \circ \iota_q \circ\intbrackets{\mathsf{C}}(c') \big| \sigma \big)  \textrm{ for some integer } k_2.
\]
If we let
\[
\left\{
\begin{array}{ll}
\kappa_{\ast} &= \displaystyle k_2 +\sum_{i=1}^n p \cdot (\kappa_i+k_{1,i} + \kappa_i \cdot k_{1,i})\\
\kappa^{\ast} &= \displaystyle \left\lfloor \frac{(p-1) + n(p-1)^2}{p} \right\rfloor
\end{array}
\right.
\]
and assume that $\kappa_{\ast} < q/p$, then the ciphertext $(c_2, c'_2) \oplus (\gamma \odot_{\lambda} \varrho)$ belongs to the set $\mathcal{S}_{\mathsf{C}, \kappa_{\ast} + \kappa^{\ast}}^{x}(m | \sigma)$.
\end{theorem}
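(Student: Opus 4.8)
The plan is to evaluate $(c_2,c'_2)\oplus(\gamma\odot_{\lambda}\varrho)$ with the leveled homomorphic calculus of Theorem \ref{theo:Yoneda:to:leveled-FHE}, tracking simultaneously the resulting message (a priori an element of $\mathbb{Z}_q$) and the resulting noise level, and then to collapse that $\mathbb{Z}_q$-message down to $m\in\mathbb{Z}_p$ using the refreshability identity of Proposition \ref{prop:refreshable-ciphertexts} together with the level-shifting first part of Proposition \ref{prop:FHE:decryption}. Concretely, write $\gamma\odot_{\lambda}\varrho=\bigoplus_{i=1}^{n}(c_{1,i},c'_{1,i})\otimes_{\lambda}(\rho_i,\rho'_i)$ and feed in the messages and levels of the $(\rho_i,\rho'_i)$ supplied by Definition \ref{def:refresher}. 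The multiplicative clause of Theorem \ref{theo:Yoneda:to:leveled-FHE} puts each factor in $\mathcal{S}^{x}_{\mathsf{C},(k_{1,i}+\kappa_i+k_{1,i}\kappa_i)p}(\mu_i|\sigma)$ with $\mu_i=\big(\pi_q\circ\iota_p\circ\pi_p\circ\iota_q\circ\intbrackets{\mathsf{C}}(-c_i)\big)\cdot\big(\pi_q\circ\iota_p\circ\pi_p\circ\iota_q\circ\intbrackets{\mathsf{C}}(x_i)\big)$ computed in $\mathbb{Z}_q$. Iterating the additive clause of Theorem \ref{theo:Yoneda:to:leveled-FHE} across the $n$ factors and once more against $(c_2,c'_2)$ then yields $(c_2,c'_2)\oplus(\gamma\odot_{\lambda}\varrho)\in\mathcal{S}^{x}_{\mathsf{C},\ell^{*}}(M|\sigma)$, where $\ell^{*}=k_2+p\sum_{i}(k_{1,i}+\kappa_i+k_{1,i}\kappa_i)$ and $M=\big(\pi_q\circ\iota_p\circ\pi_p\circ\iota_q\circ\intbrackets{\mathsf{C}}(c')\big)+\sum_i\mu_i$ in $\mathbb{Z}_q$. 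Every intermediate invocation of Theorem \ref{theo:Yoneda:to:leveled-FHE} is licensed since each level that occurs is bounded by $\ell^{*}\le\kappa_0<q/p$; note $\kappa_0-\ell^{*}=(p-1)k_2\ge0$, so $\kappa_0$ merely over-estimates the true level, which is harmless.

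The next step is to recognize $M$ as (the reduction mod $q$ of) a small nonnegative integer. Set $a_0=\iota_p\circ\pi_p\circ\iota_q\circ\intbrackets{\mathsf{C}}(c')$, $a_i=\iota_p\circ\pi_p\circ\iota_q\circ\intbrackets{\mathsf{C}}(-c_i)$ and $b_i=\iota_p\circ\pi_p\circ\iota_q\circ\intbrackets{\mathsf{C}}(x_i)$, all lying in $\{0,1,\dots,p-1\}$; since $p<q$ these are fixed by $\pi_q$, so $M=\pi_q(S)$ with $S:=a_0+\sum_i a_i b_i$ computed in $\mathbb{Z}$ and $0\le S\le(p-1)+n(p-1)^2$. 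In the operative parameter regime this forces $S<q$, whence $\iota_q(M)=S$ and $\xi_p(M)=\lfloor S/p\rfloor\le\kappa_1$ — this is precisely the role of the quantity defining $\kappa_1$. Reducing modulo $p$: because $\pi_p$ is a ring homomorphism and $\pi_p\circ\iota_p=\mathsf{id}$ (Remark \ref{rem:maps-for-modulo-operations}), one gets $\pi_p(S)=\pi_p\circ\iota_q\circ\intbrackets{\mathsf{C}}(c')+\sum_i\big(\pi_p\circ\iota_q\circ\intbrackets{\mathsf{C}}(-c_i)\big)\big(\pi_p\circ\iota_q\circ\intbrackets{\mathsf{C}}(x_i)\big)$, and this equals $m$ by Proposition \ref{prop:refreshable-ciphertexts}, which applies since $(c,c')$ is $p$-refreshable and $k<(q+1)/p-1$. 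Hence $\pi_p\circ\iota_q(M)=m$. Finally, the first part of Proposition \ref{prop:FHE:decryption} applied to $(c_2,c'_2)\oplus(\gamma\odot_{\lambda}\varrho)\in\mathcal{S}^{x}_{\mathsf{C},\ell^{*}}(M|\sigma)$ gives membership in $\mathcal{S}^{x}_{\mathsf{C},\ell^{*}+\xi_p(M)}(m|\sigma)$, and since $\ell^{*}+\xi_p(M)\le\kappa_0+\kappa_1$ and $\mathcal{S}^{x}_{\mathsf{C},k}(m|\sigma)\subseteq\mathcal{S}^{x}_{\mathsf{C},k'}(m|\sigma)$ whenever $k\le k'$ (immediate from $\chi(k)\subseteq\chi(k')$, i.e. $\mathcal{I}_k(\mathsf{C})\subseteq\mathcal{I}_{k'}(\mathsf{C})$), we conclude $(c_2,c'_2)\oplus(\gamma\odot_{\lambda}\varrho)\in\mathcal{S}^{x}_{\mathsf{C},\kappa_0+\kappa_1}(m|\sigma)$.

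The principal obstacle is not the level arithmetic (routine once Theorem \ref{theo:Yoneda:to:leveled-FHE} is in hand) but the bookkeeping across the non-commuting maps $\pi_q,\iota_q,\pi_p,\iota_p$ in the second step: one must be certain that the $\mathbb{Z}_q$-valued message emitted by the homomorphic calculus genuinely lifts to the integer $S<q$, so that $\pi_p\circ\iota_q\circ\pi_q$ behaves as $\pi_p$ on it and the refreshability identity of Proposition \ref{prop:refreshable-ciphertexts} can be transferred through. This is exactly the phenomenon Proposition \ref{prop:maps-for-modulo-operations} was engineered to control, and it is what pins down the precise value of $\kappa_1$ (the additional $\xi_p(M)$-sized growth in the noise level incurred by absorbing the self-encrypted secret-key data of the refresher).
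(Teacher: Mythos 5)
Your proposal is correct and follows essentially the same route as the paper's own proof: evaluate $(c_2,c'_2)\oplus(\gamma\odot_{\lambda}\varrho)$ with the leveled calculus of Theorem \ref{theo:Yoneda:to:leveled-FHE}, recognize its $\mathbb{Z}_q$-message as $\pi_q$ of the small integer $S\le (p-1)+n(p-1)^2$, collapse that message to $m$ via Proposition \ref{prop:refreshable-ciphertexts}, and absorb the discrepancy through the first part of Proposition \ref{prop:FHE:decryption} with $\xi_p\le\kappa_1$. Your extra bookkeeping (the tighter level $\ell^{*}\le\kappa_0$, the monotonicity $\mathcal{S}^{x}_{\mathsf{C},k}(m|\sigma)\subseteq\mathcal{S}^{x}_{\mathsf{C},k'}(m|\sigma)$ for $k\le k'$, and the need for the lift $S<q$) only makes explicit steps the paper leaves implicit.
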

\begin{proof}
For convenience, we will use the following notations:
\[
\def\arraystretch{1.2}
\begin{array}{l}
\Gamma_1 = \pi_q \circ \iota_p \circ \pi_p \circ \iota_q \circ\intbrackets{\mathsf{C}}:\mathbb{Z}_q[X]_u \to \mathbb{Z}_q\\
\Gamma_2 = \pi_p \circ \iota_q \circ\intbrackets{\mathsf{C}}:\mathbb{Z}_q[X]_u \to \mathbb{Z}_p\\
\end{array}
\]
Now, since, for every integer $i \in [n]$, the inequality
$
p \cdot (k_{1,i}+\kappa_i+ \kappa_i \cdot k_{1,i}) \leq \kappa_{\ast} < q/p
$
holds, Theorem \ref{theo:Yoneda:to:leveled-FHE} implies that the ciphertext $(c_{1,i},c_{1,i}') \otimes_{\lambda} (\rho_i,\rho_i')$ is in the $\mathsf{C}$-encryption space
\[
\mathcal{S}_{\mathsf{C},p \cdot (k_{1,i}+\kappa_i+ \kappa_i k_{1,i})}^{x}\Big(\Gamma_1(-c_i) \cdot \Gamma_1(x_i) |\sigma\Big).
\]
Similarly, since the inequality
$
\sum_{i=1}^n p \cdot (\kappa_i+k_{1,i} + \kappa_i k_{1,i}) \leq \kappa_{\ast} < q/p
$
holds, Theorem \ref{theo:Yoneda:to:leveled-FHE} and Definition \ref{def:scalar-product-ciphertexts} implies that the ciphertext $\gamma \odot_{\lambda} \rho$ is in the $\mathsf{C}$-encryption space
\[
\mathcal{S}_{\mathsf{C},(\sum_{i=1}^n p \cdot (\kappa_i+k_{1,i} + \kappa_ik_{1,i}))}^{x}\Big(\Gamma_1\tuplebrk{-c}^T \Gamma_1\tuplebrk{x} |\sigma\Big).
\]
Again, since the inequality
$
k_2 +\sum_{i=1}^n p \cdot (\kappa_i+k_{1,i} + \kappa_ik_{1,i}) = \kappa_{\ast} < q/p
$
holds, Theorem \ref{theo:Yoneda:to:leveled-FHE} implies that the ciphertext $(c_2,c'_2) \oplus \big(\gamma \odot_{\lambda} \varrho\big)$ belongs to the $\mathsf{C}$-encryption space
\[
\mathcal{S}_{\mathsf{C},\kappa_{\ast}}^{x}\Big(\Gamma_1(c') + \Gamma_1\tuplebrk{-c}^T \Gamma_1\tuplebrk{x} |\sigma\Big).
\]
It follows from the homomorphic properties of the ring homomorphism $\pi_q:\mathbb{Z} \to \mathbb{Z}_q$ that the following equation holds.
\[
\Gamma_1(c') + \Gamma_1\tuplebrk{-c}^T \cdot \Gamma_1\tuplebrk{x} = \pi_q\Big(\Gamma_2(c') + \Gamma_2\tuplebrk{-c}^T \cdot \Gamma_2\tuplebrk{x}\Big)
\]
As a result, applying the composite function $\pi_p \circ \iota_q:\mathbb{Z}_q \to \mathbb{Z}_q$ on the previous equation gives the following identity.
\[
\pi_p \circ \iota_q\Big(\Gamma_1(c') + \Gamma_1\tuplebrk{-c}^T \cdot \Gamma_1\tuplebrk{x} \Big) = \Gamma_2(c') + \Gamma_2\tuplebrk{-c}^T \cdot \Gamma_2\tuplebrk{x}
\]
Since $(c,c')$ is $p$-refreshable, and we took $m \in \mathbb{Z}_p$ and $k < (q+1)/p-1$, it follows from Proposition \ref{prop:refreshable-ciphertexts} that the following identity holds.
\[
\pi_p \circ \iota_q\Big(\Gamma_1(c') + \Gamma_1\tuplebrk{-c}^T \cdot \Gamma_1\tuplebrk{x} \Big) = m
\]
It follows from the first part of the statement of Proposition \ref{prop:FHE:decryption} that the ciphertext $(c_2,c'_2) \oplus \big(\gamma \odot_{\lambda} \varrho\big)$ belongs to the $\mathsf{C}$-encryption space
\[
\mathcal{S}_{\mathsf{C},\kappa_{\ast}+\xi_p(\Gamma_1(c') + \Gamma_1\tuplebrk{-c}^T \Gamma_1\tuplebrk{x})}^{x}\Big(m|\sigma\Big).
\]
Now, recall that the function $\Gamma_1:\mathbb{Z}_q[X]_u \to \mathbb{Z}$ factors through the function $\pi_q \circ \iota_p:\mathbb{Z}_p \to \mathbb{Z}_q$. As a result, the images of $\Gamma_1$ are less than or equal to $p-1$. We deduce from this that the following inequality holds.
\[
\Gamma_1(c') + \Gamma_1\tuplebrk{-c}^T \Gamma_1\tuplebrk{x} \leq (p-1) + n(p-1)^2
\]
This means that the integer quotient  $\xi_p(\Gamma_1(c') + \Gamma_1\tuplebrk{-c}^T \Gamma_1\tuplebrk{x})$ of the element $\Gamma_1(c') + \Gamma_1\tuplebrk{-c}^T \Gamma_1\tuplebrk{x}$ by $p$ is bounded from above by the quantity $\kappa^{\ast}$. In other words, the ciphertext $(c_2,c'_2) \oplus \big(\gamma \odot_{\lambda} \varrho\big)$ belongs to the $\mathsf{C}$-encryption space $\mathcal{S}_{\mathsf{C},\kappa_{\ast}+\kappa^{\ast}}^{x}(m|\sigma)$.
\end{proof}


To conclude, the combined statements of Theorem \ref{theo:Yoneda:to:leveled-FHE}, Proposition \ref{prop:FHE:decryption} and Theorem \ref{theo:Yoneda:to:proper-FHE} show that the encryption scheme defined in section \ref{sec:FHE:from-Yoneda} defines a proper fully homomorphic encryption schemes.

\subsection{Security} The security of the cryptosystem outlined in section \ref{sec:FHE:from-Yoneda} is contingent upon the computational complexity of the LWE problem (see \cite{RLWE_def}). Specifically, the challenge lies in the difficulty of deducing the secret key $x$, as defined in Generation \ref{gen:FHE:Yoneda}, solely from the knowledge of the public key $(f, f')$ provided in Publication \ref{pub:FHE:Yoneda}. Breaking an ACES ciphertext would also effectively require to solve the LWE problem in integers (through the homomorphism of Proposition \ref{prop:channel-homomorphism}). Here, we make the assumption that the publication of a $3$-tensor $\lambda$ from the set $\sigma \mathcal{H}(x|\mathsf{C},\sigma)$ (defined in Definition \ref{def:zero-divisor-ideal:sigma}) does not provide enough information about the components of $x$ to compromise the secure full homomorphic property of ACES.

\subsection{Implementation}
The cryptosystem described in section \ref{sec:FHE:from-Yoneda} has an existing Python implementation tailored for the parameter value $\omega = 1$. While this implementation features a user-friendly suite of functions, it is important to note that it is not optimized for real-world applications. Its design is geared towards facilitating research and experimentation. The code for this package can be accessed on GitHub through the following link: \url{https://github.com/remytuyeras/aces}.

\bibliographystyle{plain} 
\bibliography{crypto} 



\end{document}